\definecolor{colorred}{HTML}{B00000}
\definecolor{colorgreen}{HTML}{258300}
\definecolor{colorblue}{HTML}{2e32fa}
\newcommand\MyAutoefPhrasecolorGroup[1]{%
  \color@begingroup\color{MyCurrentcolor}#1\endgroup
}%
\def\HyRef@testreftype#1.#2\\{%
 \colorlet{MyCurrentcolor}{.}%
 \ltx@IfUndefined{#1autorefname}{%
   \ltx@IfUndefined{#1name}{%
     \HyRef@StripStar#1\\*\\\@nil{#1}%
     \ltx@IfUndefined{\HyRef@name autorefname}{%
       \ltx@IfUndefined{\HyRef@name name}{%
         \def\HyRef@currentHtag{}%
         \Hy@Warning{No autoref name for `#1'}%
       }{%
         \edef\HyRef@currentHtag{%
           \noexpand\MyAutoefPhrasecolorGroup{%
             \expandafter\noexpand\csname\HyRef@name name\endcsname
           }%
           \noexpand~%
         }%
       }%
     }{%
       \edef\HyRef@currentHtag{%
         \noexpand\MyAutoefPhrasecolorGroup{%
           \expandafter\noexpand
           \csname\HyRef@name autorefname\endcsname
         }%
         \noexpand~%
       }%
     }%
   }{%
     \edef\HyRef@currentHtag{%
       \noexpand\MyAutoefPhrasecolorGroup{%
         \expandafter\noexpand\csname#1name\endcsname
       }%
       \noexpand~%
     }%
   }%
 }{%
   \edef\HyRef@currentHtag{%
     \noexpand\MyAutoefPhrasecolorGroup{%
       \expandafter\noexpand\csname#1autorefname\endcsname
     }%
     \noexpand~%
   }%
 }%
}%
\numberwithin{equation}{section}
\newcommand{\nlc}{{\ensuremath{\textnormal{c}}}}
\newcommand{\nld}{{\ensuremath{\textnormal{d}}}}
\newcommand{\nlC}{{\ensuremath{\textnormal{C}}}}
\newcommand{\rmc}{{\ensuremath{\mathrm{c}}}}
\newcommand{\rmd}{{\ensuremath{\mathrm{d}}}}
\newcommand{\rme}{{\ensuremath{\mathrm{e}}}}
\newcommand{\rmo}{{\ensuremath{\mathrm{o}}}}
\newcommand{\rmE}{{\ensuremath{\mathrm{E}}}}
\newcommand{\sfd}{{\ensuremath{\mathsf{d}}}}
\newcommand{\sfe}{{\ensuremath{\mathsf{e}}}}
\newcommand{\sfr}{{\ensuremath{\mathsf{r}}}}
\newcommand{\scrD}{{\ensuremath{\mathscr{D}}}}
\newcommand{\scrE}{{\ensuremath{\mathscr{E}}}}
\newcommand{\scrF}{{\ensuremath{\mathscr{F}}}}
\newcommand{\scrK}{{\ensuremath{\mathscr{K}}}}
\newcommand{\scrP}{{\ensuremath{\mathscr{P}}}}
\newcommand{\scrQ}{{\ensuremath{\mathscr{Q}}}}
\newcommand{\scrU}{{\ensuremath{\mathscr{U}}}}
\newcommand{\scrV}{{\ensuremath{\mathscr{V}}}}
\newcommand{\bdalpha}{{\ensuremath{\boldsymbol{\alpha}}}}
\newcommand{\bdbeta}{{\ensuremath{\boldsymbol{\beta}}}}
\newcommand{\bdpi}{{\ensuremath{\boldsymbol{\pi}}}}
\newcommand{\bdsigma}{{\ensuremath{\boldsymbol{\sigma}}}}
\newcommand{\N}{\boldsymbol{\mathrm{N}}}						
\newcommand{\Q}{\boldsymbol{\mathrm{Q}}}						
\newcommand{\R}{\boldsymbol{\mathrm{R}}}						
\renewcommand{\d}{\,\mathrm{d}}				
\let\limsup\undefined
\let\liminf\undefined
\DeclareMathOperator*{\limsup}{limsup}		
\DeclareMathOperator*{\liminf}{liminf}		
\DeclareMathOperator*{\esssup}{esssup}		
\DeclareMathOperator*{\argmax}{argmax}		
\DeclareMathOperator{\supp}{spt}			
\let\originalleft\left			
\let\originalright\right
\renewcommand{\left}{\mathopen{}\mathclose\bgroup\originalleft}
\renewcommand{\right}{\aftergroup\egroup\originalright}
\newcommand{\mapdef}[3][]{\ifthenelse{\isempty{#1}}{#2\quad\longmapsto\quad #3}{#1\colon\quad #2\quad\longmapsto\quad #3}}		
\newcommand{\der}[2][]{\ifthenelse{\isempty{#1}}{\frac{\nld}{\nld #2}}{\left.\frac{\nld}{\nld #2}\right\vert_{#1}}}				
\newcommand{\checknarg}{\@ifnextchar\bgroup{\gobblenarg}{}}
\newcommand{\gobblenarg}[1]{\@ifnextchar\bgroup{,\ \! #1\gobblenarg}{,\ \! #1}}
\theoremstyle{definition}
\newtheorem{bump}{Bump}[section]
\theoremstyle{plain}
\newtheorem{theorem}[bump]{Theorem}
\newtheorem{proposition}[bump]{Proposition}
\newtheorem{definition}[bump]{Definition}
\newtheorem{lemma}[bump]{Lemma}
\newtheorem{corollary}[bump]{Corollary}
\newtheorem{assumption}[bump]{Assumption}
\theoremstyle{remark}
\newtheorem{remark}[bump]{Remark}
\newtheorem{example}[bump]{Example}
\newtheoremstyle{cited}
{\topsep}		
{\topsep}		
{\itshape}		
{}				
{\bfseries}		
{\textbf{.}}	
{.5em}			
{\thmname{#1} \thmnumber{#2} \thmnote{\normalfont#3}}		
\theoremstyle{cited}			
\let\@fnsymbol\@arabic	 		
\def\nonumberfootnote{\xdef\@thefnmark{}\@footnotetext}			
\newcommand{\mms}{\mathit{M}}				
\newcommand{\met}{\sfd}						
\newcommand{\meas}{\mathfrak{m}}			
\newcommand{\OptTGeo}{\mathrm{OptTGeo}}
\newcommand{\ac}{{\mathrm{ac}}}
\newcommand{\RCD}{\mathrm{RCD}}				
\newcommand{\CD}{\mathrm{CD}}				
\newcommand{\TCD}{\mathrm{TCD}}
\newcommand{\wTCD}{\mathrm{wTCD}}
\newcommand{\TMCP}{\mathrm{TMCP}}
\newcommand{\comp}{\nlc}					
\newcommand{\Cont}{\nlC}					
\newcommand{\Ell}{\mathit{L}}				
\newcommand{\Lip}{\mathrm{Lip}}				
\newcommand{\Geo}{\mathrm{Geo}}				
\newcommand{\Ch}{\scrE}						
\newcommand{\Dom}{\scrD}					
\DeclareMathOperator{\Ent}{Ent}				
\newcommand{\eval}{\sfe}					
\newcommand{\Restr}{\mathrm{restr}}			
\newcommand{\push}{\sharp}					
\newcommand{\Int}{\mathrm{Int}}				
\newcommand{\Len}{\mathrm{Len}}
\newcommand{\TGeo}{\mathrm{TGeo}}
\newcommand{\thr}{\mathrm{thr}}
\newcommand{\Min}{\mathrm{Min}}
\newcommand{\TD}{\mathrm{TD}}
\newcommand{\STD}{\mathrm{STD}}
\newcommand{\mres}{\mathbin{\vrule height 1.6ex depth 0pt width
0.13ex\vrule height 0.13ex depth 0pt width 1.3ex}}
\providecommand{\bysame}{\leavevmode\hbox to3em{\hrulefill}\thinspace}
\let\oldtocsection=\tocsection
\let\oldtocsubsection=\tocsubsection
\let\oldtocsubsubsection=\tocsubsubsection
\renewcommand{\tocsection}[2]{\hspace{0em}\oldtocsection{#1}{#2}}
\renewcommand{\tocsubsection}[2]{\hspace{1em}\oldtocsubsection{#1}{#2}}
\renewcommand{\tocsubsubsection}[2]{\hspace{2em}\oldtocsubsubsection{#1}{#2}}
\newcommand{\nocontentsline}[3]{}
\newcommand{\tocless}[2]{\bgroup\let\addcontentsline=\nocontentsline#1{#2}\egroup}
\begin{document}

\title[Good geodesics satisfying the TCD condition]{Good geodesics satisfying the timelike curvature-dimension condition}
\author{Mathias Braun}
\address{Fields Institute for Research in Mathematical Sciences, 
222 College Street, Toronto, Ontario M5T 3J1, Canada}
\email{braun@math.toronto.edu}
\date{\today}
\thanks{The author is sincerely grateful to Robert McCann for help- and fruitful discussions and for pointing out the condition of regularity, and to Fabio Caval\-letti and Andrea Mondino for their help with the proof of \autoref{Le:Strong}. Most of this work has been completed during the author's employment at the Department of Mathematics at the University of Toronto, Canada. This research is supported in part by Canada Research Chairs Program funds and a
Natural Sciences and Engineering Research Council of Canada Grant (2020-04162) held by Robert McCann. The author  acknowledges funding by the Fields Institute for Research in Mathematical Sciences.}
\subjclass[2010]{49J52, 53C50, 58E10, 58Z05, 83C99}
\keywords{Timelike geodesics; Timelike curvature-dimension condition; Strong energy condition; Lo\-rentz\-ian pre-length spaces}

\begin{abstract} Let $(M,\mathsf{d},\mathfrak{m},\ll,\leq,\tau)$ be a causally closed, $\scrK$-globally hyperbolic, regular  measured Lorentzian geodesic space satisfying the weak timelike cur\-vature-dimension condition $\smash{\mathrm{wTCD}_p^e(K,N)}$ in the sense of Cavalletti and Mon\-dino. We prove the existence of geodesics of probability measures on $M$  which satisfy the entropic semiconvexity inequality defining $\smash{\mathrm{wTCD}_p^e(K,N)}$ and whose densities with respect to $\mathfrak{m}$ are additionally uniformly  $L^\infty$ in time. This holds apart from any nonbranching assumption. We also discuss similar results under the timelike measure-contraction property.
\end{abstract}

\maketitle
\thispagestyle{empty}

\tableofcontents

\addtocontents{toc}{\protect\setcounter{tocdepth}{1}}

\section{Introduction}

\subsection*{Lott--Sturm--Villani theory} Almost two decades ago, descriptions of synthetic Ricci curvature bounds by $K\in\R$ for metric mea\-sure spaces $(\mms,\met,\meas)$ via optimal transport were set up by Sturm \cite{sturm2006a,sturm2006b}, and independently Lott and Villani \cite{lott2009}. This leads to the so-called $\CD(K,\infty)$ spaces. Combined with the works of Ambrosio, Gigli, and Savaré \cite{ambrosio2014a,ambrosio2014b, ambrosio2015}, these have become a research area which is highly active  today. 
The strongest results in this framework have been obtained for  $\CD(K,N)$ spaces, i.e.~$\CD(K,\infty)$ spaces admitting a synthetic upper bound $N\in[1,\infty)$ on their ``dimension'' \cite{amb, bacher2010, cavalletti2021, erbar2015, gigli2015,  lott2009, sturm2006b}. The literature is too large to be cited exhaustively; let us only mention the splitting theorem \cite{gigli2013} for $\RCD(0,N)$ spaces, i.e.~infinitesimally Hilbertian $\CD(0,N)$ spaces, which in turn has lead to a good  structure theory for $\RCD(K,N)$ spaces  \cite{brue2020,giglimondino2015, mondino2019}.

In \cite{rajala2012a,rajala2012b} Rajala has proven the existence of $W_2$-geodesics of measures whose densities are uniformly $\Ell^\infty$ in time under $\CD(K,\infty)$ and $\CD(K,N)$, respectively, once their endpoints have bounded  support and $\Ell^\infty$-densities with respect to $\meas$. In addition, by \cite{ambrosiomondino2015,rajala2012b}  such a geodesic can be constructed to satisfy the defining entropic semiconvexity inequality. This generalizes  results from the smooth case \cite{cordero2001, mccann2001, vonrenesse2005}, and remarkably does not rely on any nonbranching assumption. 

Rajala's work has been an important ingredient to establish cor\-nerstone results for the theory of $\CD$ spaces. We only quote a few: a metric Brenier theorem \cite{ambrosio2014a},  existence of ``many'' test plans which in turn are used to axiomatize Sobolev calculus \cite{ambrosio2014a, gigli2015}, the Sobolev-to-Lipschitz property (without using heat flow \cite{ambrosio2014b}) which links this Eulerian calculus to the Lagrangian side \cite{gigli2013}, the proof of constancy of the dimension for $\CD(K,N)$ spaces \cite{brue2020}, the weak Poincaré inequality \cite{rajala2012b}, stability of super-Ricci flows \cite{sturm2018}, etc.

\subsection*{Nonsmooth Lorentzian geometry} Recently, ideas inspired in particular by \cite{erbar2015} have lead to synthetic \emph{timelike} lower Ricci bounds in nonsmooth general relativity by Cavalletti and Mondino \cite{cavalletti2020} in terms of the  \emph{\textnormal{(}weak\textnormal{)} timelike curvature-dimension conditions} $\smash{\TCD_p^e(K,N)}$ and $\smash{\wTCD_p^e(K,N)}$, $p\in (0,1)$, in the ``entropic'' sense. 

Here, the relevant structures are  \emph{Lorentzian pre-length spaces} $(\mms,\met,\ll,\leq,\tau)$  \cite{kunzinger2018}, see also \cite{sormani2016} for a related approach. In the same way metric measure spaces generalize smooth Riemannian manifolds, these provide singular  counterparts to smooth Lorentzian spacetimes. They come with a chronological relation $\ll$ and a notion of causality $\leq$ between points in $\mms$. The \emph{time separation function} $\tau$ takes over the role of a metric, in the sense that parallel to the smooth case \cite{hawking1973, mccann2020, oneill1983} it allows for notions of length, geodesics, strong causality, etc.\footnote{Though $\tau(x,y)$ should not be interpreted as a distance, cf.~\autoref{Re:Ro}, but rather as the maximal proper time a spacetime  point $x\in\mms$ needs to travel to $y\in\mms$.} Developing such a singular theory within general relativity aims to include spacetimes with low regularity metrics. In turn, this would allow one to address the Cauchy initial value problem for the Einstein equation, the cosmic censorship conjectures, and physically relevant models in wider generality than currently possible. See e.g.~the introductions of \cite{cavalletti2020,kunzinger2018} for related literature. 

The $\TCD$ and $\wTCD$ conditions are defined by asking for the  convexity, see  \autoref{Def:Convex} and \autoref{Def:TCD}, of the exponentiated relative entropy
\begin{align}\label{Eq:UN def}
\scrU_N(\mu) := \rme^{-\Ent_\meas(\mu)/N},
\end{align}
on the space of probability measures on $\mms$, along some  chronological optimal mass transport from past  to future located distributions. Here, optimality --- and the notion of \emph{geodesics} with respect to which convexity of $\scrU_N$ is  formulated --- are quantified by the \emph{$p$-Lorentz--Wasserstein distance} 
\begin{align*}
\ell_p(\mu,\nu) := \sup  \Vert\tau\Vert_{\Ell^p(\mms^2,\pi)},
\end{align*}
$p\in(0,1]$, first introduced in \cite{eckstein2017} and further studied in \cite{cavalletti2020,kell2020,mccann2020,mondinosuhr2018,suhr2016}. The supremum is taken over all causal couplings $\pi$ of $\mu$ and $\nu$; see \autoref{Sub:Lorentz}  for details. 

Notably, $\smash{\TCD_p^e(0,N)}$ and $\smash{\wTCD_p^e(0,N)}$ are nonsmooth analogues of the  \emph{strong energy condition} of Hawking and Penrose \cite{hawking1966,hawking1970,penrose1965}. The latter is a nonnegative definiteness condition on the stress-energy tensor. For a vanishing cosmological constant, this boils down to nonnegativity of the Ricci tensor in every timelike direction by the Einstein equation. In turn, the latter property can be characterized by convexity of $\scrU_N$ along suitable $\smash{\ell_p}$-geodesics. This was discovered in \cite{mccann2020,mondinosuhr2018} and in fact motivated the authors of \cite{cavalletti2020} to introduce the $\TCD$ and $\wTCD$ conditions for singular \emph{measured Lorentzian pre-length spaces} $(\mms,\met,\meas,\ll,\leq,\tau)$, i.e.~Lorentzian pre-length spaces endowed with a reference measure $\meas$. 

\subsection*{Objective} In this article, we provide a nonsmooth  Lorentzian version of Ra\-jala's re\-sults \cite{rajala2012a,rajala2012b}, namely the rich existence of \emph{good} $\smash{\ell_p}$-geodesics. We hope our results to be an equally useful contribution to the young theory of $\TCD$ and $\wTCD$ spaces as \cite{rajala2012a,rajala2012b} was for $\CD$ spaces; possible applications of \autoref{Th:Linfty estimates} and related future work we attack soon are described  below.

Before stating our major  \autoref{Th:Linfty estimates}, we specify what we mean by ``good''. A detailed account of our  notation is postponed to \autoref{Ch:Opt}. 

All over this paper, given $\pi\in\scrP(\mms^2)$ we  use the abbreviation
\begin{align*}
T_\pi := \Vert\tau\Vert_{\Ell^2(\mms,\pi)}.
\end{align*}

\begin{definition}\label{Def:Good} Let $(\mms,\met,\meas,\ll,\leq,\tau)$ be a measured Lorentzian pre-length space, and let $p\in (0,1)$, $K\in\R$, and $N\in (0,\infty)$. A timelike proper-time parametrized $\smash{\ell_p}$-geodesic $(\mu_t)_{t\in[0,1]}$, in a sense made precise in  \autoref{Sub:Geodesics}, is called \emph{good} if the following conditions hold.
\begin{enumerate}[label=\textnormal{\alph*.}]
\item There exists some $\smash{\ell_p}$-optimal coupling $\smash{\pi\in \Pi_\ll(\mu_0,\mu_1)}$ with $\tau\in\Ell^2(\mms^2,\pi)$, and for every $t\in [0,1]$, $\mu_t=\rho_t\,\meas\in\Dom(\Ent_\meas)$ and
\begin{align}\label{Eq:Ineq UN}
\scrU_N(\mu_t) \geq \sigma_{K,N}^{(1-t)}(T_\pi)\,\scrU_N(\mu_0) + \sigma_{K,N}^{(t)}(T_\pi)\,\scrU_N(\mu_0).
\end{align}
\item We have the uniform $\Ell^\infty$-bound
\begin{align*}
\sup\!\big\lbrace\Vert\rho_t\Vert_{\Ell^\infty(\mms,\meas)} : t\in [0,1]\big\rbrace < \infty.
\end{align*}
\end{enumerate}
\end{definition}

\begin{theorem}\label{Th:Linfty estimates} Assume that $(\mms,\met,\meas,\ll,\leq,\tau)$ is a causally closed, $\scrK$-globally hyperbolic, regular Lorentzian geodesic space satisfying $\smash{\wTCD_p^e(K,N)}$ for $p\in (0,1)$, $K\in\R$, and $N\in (0,\infty)$. Let  $(\mu_0,\mu_1)=(\rho_0\,\meas, \rho_1\,\meas)\in\scrP_\comp^\ac(\mms,\meas)^2$ be strongly timelike $p$-dua\-li\-zable, and suppose that the density $\rho_0,\rho_1\in\Ell^\infty(\mms,\meas)$. Then there exists a good timelike proper-time parametrized $\smash{\ell_p}$-geodesic from $\mu_0$ to $\mu_1$.

More precisely, there exists a timelike proper-time parametrized $\ell_p$-geodesic $(\mu_t)_{t\in [0,1]}$ from $\mu_0$ to $\mu_1$ satisfies, for every $t\in[0,1]$, $\mu_t=\rho_t\,\meas\in\Dom(\Ent_\meas)$ and
\begin{align*}
\Vert \rho_t\Vert_{\Ell^\infty(\mms,\meas)} \leq \rme^{D\sqrt{K^-N}/2}\,\max\!\big\lbrace\Vert\rho_0\Vert_{\Ell^\infty(\mms,\meas)}, \Vert\rho_1\Vert_{\Ell^\infty(\mms,\meas)} \big\rbrace,
\end{align*}
where  $D:= \sup\tau(\supp\mu_0\times\supp\mu_1)$.
\end{theorem}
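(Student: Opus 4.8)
The plan is to adapt Rajala's construction from the metric-measure setting to the Lorentzian one, using the (weak) timelike curvature-dimension condition to produce a midpoint-type density bound that can be iterated along dyadic times. First I would recall that the hypothesis $\wTCD_p^e(K,N)$ supplies, for the strongly timelike $p$-dualizable pair $(\mu_0,\mu_1)$, an $\ell_p$-geodesic $(\mu_t)_{t\in[0,1]}$ along which the exponentiated entropy $\scrU_N$ satisfies the semiconvexity inequality \eqref{Eq:Ineq UN}; the point of the theorem is not merely existence of \emph{some} such geodesic but existence of one whose densities are uniformly $L^\infty$. The key observation is that the $\sigma_{K,N}^{(t)}$ coefficients are bounded below on $[0,1]$ in terms of $D=\sup\tau(\supp\mu_0\times\supp\mu_1)$: one has $\sigma_{K,N}^{(t)}(\theta)\ge e^{-t\sqrt{K^-N/N}\,\theta}$ (or a comparable explicit bound), and since $T_\pi\le D$ for any $\ell_p$-optimal causal coupling $\pi$ supported in $\{\tau\le D\}$, inequality \eqref{Eq:Ineq UN} gives, after exponentiating $\scrU_N(\mu)=e^{-\Ent_\meas(\mu)/N}$ back to the entropy, a uniform upper bound on $\Ent_\meas(\mu_t)$. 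That alone controls only the entropy, not the $L^\infty$ norm, so the real work is a bootstrapping argument.

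The core step is a \emph{good midpoint lemma}: given $\mu_0=\rho_0\,\meas$, $\mu_1=\rho_1\,\meas$ with $\rho_0,\rho_1\in L^\infty$, one constructs an $\ell_p$-midpoint $\mu_{1/2}=\rho_{1/2}\,\meas$ such that $\|\rho_{1/2}\|_{L^\infty}\le e^{D\sqrt{K^-N}/4}\,\max\{\|\rho_0\|_\infty,\|\rho_1\|_\infty\}$, or more precisely a factor that telescopes correctly over dyadic refinement. I would obtain this by taking an $\ell_p$-optimal dynamical plan $\Pi$ (an element of the relevant $\lpoptgeo$ class) representing $(\mu_t)$, disintegrating it, and — following Rajala's selection argument — restricting to subplans on sets where the endpoint densities are comparable, then using the entropy convexity on each piece. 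The regularity assumption on $\mms$ (ensuring $\meas$-a.e.\ point lies on a chronological geodesic, so optimal dynamical plans are well behaved) and $\scrK$-global hyperbolicity (compactness of causal diamonds and of the space of causal geodesics between compact sets, hence tightness and narrow-closedness of the plans) are exactly what make the selection and limiting procedures work without any nonbranching hypothesis; the $L^\infty$ bound survives branching because it is a statement about marginals at fixed times, not about uniqueness of geodesics. Having the midpoint lemma, I would iterate: build $\mu_t$ at all dyadic $t$ with the telescoped bound $\|\rho_t\|_\infty\le e^{D\sqrt{K^-N}/2}\max\{\|\rho_0\|_\infty,\|\rho_1\|_\infty\}$ (the $\sum 2^{-k}=1$ telescoping of the exponents producing the clean final constant), then pass to a limit in $t$ using equicontinuity of $t\mapsto\mu_t$ in $\ell_p$ (or $W_1$-type control from global hyperbolicity) together with weak-$*$ compactness of the unit ball of $L^\infty(\meas)$ to extract a limiting geodesic defined for all $t\in[0,1]$ with the uniform bound; lower semicontinuity of $\Ent_\meas$ then upgrades to $\mu_t\in\Dom(\Ent_\meas)$ and preserves \eqref{Eq:Ineq UN}, giving a good $\ell_p$-geodesic.

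The main obstacle I anticipate is the midpoint lemma itself: in Rajala's metric argument one crucially uses that restrictions of $W_2$-geodesics to subsets are again $W_2$-geodesics and that one can glue midpoints measurably, and one needs a quantitative density estimate for the midpoint of two measures with bounded densities. In the Lorentzian setting the analogue requires that restrictions of $\ell_p$-optimal causal dynamical plans stay $\ell_p$-optimal (a cyclical-monotonicity / restriction property for $\ell_p$, which should follow from the theory developed in the cited work of Cavalletti–Mondino and McCann, but must be invoked carefully because $\ell_p$ is a sup, not an inf, and causality constraints must be preserved under restriction), and that the relevant gluing is measurable — this is presumably where the help acknowledged from Cavalletti and Mondino enters. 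A secondary technical point is checking that strong timelike $p$-dualizability is inherited by the subplans and their endpoints, so that $\wTCD_p^e(K,N)$ can actually be applied at each stage of the iteration; this should reduce to the fact that the $\ell_p$-geodesic and its pieces remain supported in a fixed compact causal diamond, keeping all couplings chronological and the $\ell_p$-distances positive and finite.
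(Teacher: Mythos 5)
Your high-level architecture — a dyadic bisection argument following Rajala, passing from dyadic times to a full $\ell_p$-geodesic by compactness, tracking strong timelike $p$-dualizability along the way, and invoking regularity so that $\ell_p$-optimal geodesic plans behave (i.e.\ $\TGeo^*(\mms)\subset\Cont([0,1];\mms)$) — matches the paper's strategy. You also correctly anticipate \autoref{Le:Strong} as the step where one must check that strong timelike $p$-dualizability is inherited by slices of an $\ell_p$-optimal geodesic plan.

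However, the crux of the whole proof, namely \emph{how} to produce a midpoint with the right $\Ell^\infty$ bound, is where your sketch and the actual argument diverge, and this is the part you leave essentially unsupported. You propose a ``good midpoint lemma'' obtained by ``restricting to subplans on sets where the endpoint densities are comparable, then using the entropy convexity on each piece.'' That is not how the argument goes, and it is not clear it can be made to work: the $\wTCD$ convexity inequality on a subplan controls only the entropy of the restricted midpoint, not its density, and no partition by density levels by itself converts an entropy estimate into a pointwise estimate. The paper's mechanism is instead \emph{variational}: at each dyadic stage the midpoint is chosen as the slice of an $\ell_p$-optimal geodesic plan which \emph{maximizes} $\scrV_N(\bdpi)=\scrU_N((\eval_{1/2})_\push\bdpi)$, i.e.\ minimizes entropy (\autoref{Le:Maximizer}). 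One then proves, by a mass-shuffling contradiction, that any such $\scrV_N$-maximizer automatically has $\Ell^\infty$ density at most the threshold $\thr=\rme^{D\sqrt{K^-N}/2}\max\{\Vert\rho_0\Vert_{\Ell^\infty},\Vert\rho_1\Vert_{\Ell^\infty}\}$. The chain of lemmas for this is: (a) the $\wTCD$ condition forces a \emph{support volume} lower bound $\meas[\{\rho_{1/2}>0\}]\geq\thr^{-1}$ for \emph{some} geodesic (\autoref{Le:Bounded midpoints}); (b) introduce the excess functional $\scrE_c(\bdpi)=\Vert(\rho_\nu-c)^+\Vert_{\Ell^1}+\nu_\perp[\mms]$ where $\nu$ is the midpoint of $\bdpi$, and show (by redistributing the over-dense portion of a hypothetical minimizer along the geodesic from (a)) that $\min\scrE_c=0$ for all $c\geq\thr$ (\autoref{Pr:Min 0}, \autoref{Cor:Thr}); (c) show that any $\scrV_N$-maximizer has $\scrE_\thr=0$, again by contradiction (mix in a bit of the low-density competitor to strictly reduce entropy, using a Rényi-type approximation as in \cite{rajala2012a}) (\autoref{Pr:Max = 0}). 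None of this appears in your outline, and without it your ``good midpoint lemma'' is an unproved assertion. Relatedly, your proposed midpoint constant $\rme^{D\sqrt{K^-N}/4}$ is chosen to make a telescoping product close, but it is not the constant that $\wTCD$ gives for a single midpoint (that is $\rme^{D\sqrt{K^-N}/2}$), and you offer no derivation.

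Two secondary omissions. First, the paper cannot work with bare $\ell_p$-intermediate points as Rajala does with $W_2$-midpoints, because an $\ell_p$-intermediate point need not admit a \emph{chronological} coupling to its endpoints (\autoref{Re:INTerm}); one must work with slices of $\ell_p$-optimal \emph{geodesic plans} throughout, and your sketch glosses over this. Second, for the semiconvexity inequality part of ``goodness'' (\autoref{Pr:Semic}) the paper must track a whole sequence of $\ell_p$-optimal couplings produced at each dyadic level and combine them using the distortion coefficient identities of \autoref{Le:Distortion}, finally passing to a limit coupling via tightness; this is a genuine subtlety of the entropic $\TCD$ formulation (the coupling in the convexity inequality is not fixed across stages) that your outline does not address.
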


We also establish the subsequent variations of \autoref{Th:Linfty estimates}.
\begin{itemize}
\item Assuming the stronger $\smash{\TCD_p^e(K,N)}$ condition in place of $\smash{\wTCD_p^e(K,N)}$, one may relax the hypotheses on $\mu_0$ and $\mu_1$, cf.~\autoref{Re:Minor}.
\item A version of it holds for a  dimension-independent $\smash{\wTCD_p(K,\infty)}$ condition, newly introduced in \autoref{Def:CD K infty} below following \cite{sturm2006a}, cf.~\autoref{Th:Ent linfty} and \autoref{Re:BL}. A key message here is that the upper dimension bound, apart from which timelike Ricci bounds have not been studied thus far, is not strictly required, but of course gives quantitatively better results.
\item Our proof can  be adapted to the situation of more general timelike convex functionals on $\scrP(\mms)$, cf.~\autoref{Th:CONV}.
\item Lastly, we present a version for the \emph{timelike measure-contraction property} $\TMCP_p^e(K,N)$ from \cite{cavalletti2020}, following the work \cite{cavalletti2017} for metric measure spaces, cf.~\autoref{Th:TMCP}. Here, by nature of the $\TMCP$ condition (the terminal measure is a Dirac measure) the r.h.s.~of the obtained inequality
\begin{align*}
\Vert\rho_t\Vert_{\Ell^\infty(\mms,\meas)} \leq \frac{1}{(1-t)^N}\,\rme^{Dt\sqrt{K^-N}}\,\Vert\rho_0\Vert_{\Ell^\infty(\mms,\meas)}
\end{align*}
is not bounded in $t\in[0,1]$, but blows up as $t$ approaches $1$.
\end{itemize}

In the $\wTCD$ context, \autoref{Th:Linfty estimates} seems optimal with respect to which endpoints $\mu_0$ and $\mu_1$ are allowed, i.e.~strongly timelike $p$-dualizable ones according to \autoref{Def:TL DUAL}. Having  covered this framework, \autoref{Th:Linfty estimates} is expected to be relevant for stability questions, since the limit of a sequence of $\TCD$ spaces  is only known to be $\wTCD$ in general  \cite[Thm.~3.14]{cavalletti2020}. We point out that local causal closedness and $\scrK$-global hyperbolicity, two main  assumptions of \autoref{Th:Linfty estimates}, are precisely the regularity conditions used to set up the corresponding notion of weak convergence of measured Lorentzian geodesic spaces in \cite{cavalletti2020}. 
Moreover, the author is indebted to Robert McCann for pointing out to add the assumption of regularity, cf.~\autoref{REG}, to \autoref{Th:Ent linfty} and to \autoref{Cor:Timelike branching} below.

As in \cite{rajala2012a,rajala2012b} we avoid any assumption on (timelike) nonbranching \cite[Def.~1.10]{cavalletti2020}, which makes \autoref{Th:Linfty estimates} convenient for spacetimes with low regularity. Indeed, while spacetimes with $\smash{\Cont^{1,1}}$-metrics are timelike nonbranching \cite{cavalletti2020}, this property is expected to fail e.g.~for lower regularity of the metric and for closed cone structures \cite{minguzzi2019}. Nevertheless, already in the timelike nonbranching case, an interesting byproduct of \autoref{Th:Linfty estimates} and the uniqueness of chronological $\smash{\ell_p}$-optimal couplings and $\smash{\ell_p}$-geodesics \cite[Thm.~3.19, Thm.~3.20]{cavalletti2020} is the following.
  
\begin{corollary}\label{Cor:Timelike branching} Assume that $(\mms,\met,\meas,\ll,\leq,\tau)$ is a timelike nonbranching,  causally closed, $\scrK$-globally hyperbolic, regular Lorentzian geodesic space obeying $\smash{\wTCD_p^e(K,N)}$ for $p\in (0,1)$, $K\in\R$, and $N\in (0,\infty)$. Let the pair $(\mu_0,\mu_1) = (\rho_0\,\meas,\rho_1\,\meas)\in\scrP_\comp^\ac(\mms,\meas)^2$  be strongly timelike $p$-dualizable, and suppose that $\rho_0,\rho_1\in\Ell^\infty(\mms,\meas)$. Then the unique timelike proper-time parametrized $\smash{\ell_p}$-geo\-desic $(\mu_t)_{t\in[0,1]}$ from $\mu_0$ to $\mu_1$ is good.
\end{corollary}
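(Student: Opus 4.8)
The plan is to obtain \autoref{Cor:Timelike branching} by combining \autoref{Th:Linfty estimates} with the uniqueness of $\ell_p$-optimal couplings and $\ell_p$-geodesics in the timelike nonbranching setting, namely \cite[Thm.~3.19, Thm.~3.20]{cavalletti2020}. The one genuine point requiring care is that \autoref{Th:Linfty estimates} is stated for \emph{strongly} timelike $p$-dualizable endpoints (\autoref{Def:TL DUAL}), whereas the corollary only assumes timelike $p$-dualizability; so the first task is to upgrade the hypothesis.

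First I would check that, for a pair $(\mu_0,\mu_1) = (\rho_0\,\meas,\rho_1\,\meas) \in \scrP_\comp^\ac(\mms)^2$ in a locally causally closed, $\scrK$-globally hyperbolic Lorentzian geodesic space, timelike $p$-dualizability already entails strong timelike $p$-dualizability. Under these structural assumptions $\tau$ is finite and continuous on the relevant causal set, the supremum defining $\ell_p(\mu_0,\mu_1)$ is attained by some $\pi \in \Pi_\ll(\mu_0,\mu_1)$, and compact support forces $\tau$ to be bounded on $\supp\mu_0 \times \supp\mu_1$; hence $\tau \in \Ell^\infty(\mms^2,\pi) \subseteq \Ell^2(\mms^2,\pi)$ and the finiteness/chronological-support requirements in the definition of strong timelike $p$-dualizability hold. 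If this implication is not directly available from \cite{cavalletti2020}, I would isolate it as a short preliminary lemma; in any case this is the delicate bookkeeping step, and once settled the rest is automatic.

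With strong timelike $p$-dualizability in hand, \autoref{Th:Linfty estimates} furnishes an $\ell_p$-geodesic $(\tilde\mu_t)_{t\in[0,1]}$ from $\mu_0$ to $\mu_1$ which is good in the sense of \autoref{Def:Good}: it is supported by an $\ell_p$-optimal coupling $\pi \in \Pi_\ll(\mu_0,\mu_1)$ with $\tau \in \Ell^2(\mms^2,\pi)$, it satisfies the entropic semiconvexity inequality \eqref{Eq:Ineq UN}, and it has uniformly $\Ell^\infty(\mms,\meas)$-bounded densities. Invoking timelike nonbranching, \cite[Thm.~3.19]{cavalletti2020} gives uniqueness of the $\ell_p$-optimal coupling of $\mu_0$ and $\mu_1$, and \cite[Thm.~3.20]{cavalletti2020} gives uniqueness of the $\ell_p$-geodesic joining $\mu_0$ to $\mu_1$; hence the geodesic $(\mu_t)_{t\in[0,1]}$ in the statement coincides with $(\tilde\mu_t)_{t\in[0,1]}$ and is therefore good. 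The main obstacle is thus entirely the compatibility of the dualizability hypotheses across \autoref{Th:Linfty estimates} and \cite[Thm.~3.19, Thm.~3.20]{cavalletti2020}; after that, no further analysis is needed.
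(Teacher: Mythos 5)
Your high-level strategy matches the paper's: the corollary is meant to follow by feeding \autoref{Th:Linfty estimates} the hypotheses and then identifying the resulting good $\ell_p$-geodesic with the unique one via [Thm.~3.19, Thm.~3.20] of Cavalletti--Mondino. The problem lies in how you justify the upgrade from timelike $p$-dualizability to strong timelike $p$-dualizability, which is indeed the one non-automatic step.

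Your proposed preliminary lemma argues that, for compactly supported absolutely continuous endpoints on a locally causally closed, $\scrK$-globally hyperbolic Lorentzian geodesic space, timelike $p$-dualizability already implies the strong version, invoking finiteness and continuity of $\tau$, attainment of the supremum defining $\ell_p$, and $\tau\in\Ell^\infty\subset\Ell^2$. This does not address what \autoref{Def:TL DUAL} actually demands in the strong case: the existence of an $l^p$-cyclically monotone Borel set $\Gamma\subset\mms_\ll^2\cap(\supp\mu_0\times\supp\mu_1)$ such that a \emph{causal} coupling $\sigma\in\Pi_\leq(\mu_0,\mu_1)$ is $\ell_p$-optimal if and only if $\sigma[\Gamma]=1$. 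In particular, strong timelike $p$-dualizability forces \emph{every} $\ell_p$-optimal causal coupling to be chronological (cf.~\autoref{Re:Always exists}); mere timelike $p$-dualizability only says \emph{some} $\ell_p$-optimal coupling is chronological. In the presence of timelike branching there may be additional $\ell_p$-optimal couplings charging $\{\tau=0\}$, and then no such $\Gamma$ can exist. So the structural assumptions you list are not sufficient, and the integrability observations ($\tau\in\Ell^\infty\subset\Ell^2$) are beside the point.

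The ingredient you are missing is precisely timelike nonbranching \emph{at this step}. You invoke it only at the end for uniqueness of the geodesic, but it is equally needed to make the dualizability upgrade legitimate: by [Thm.~3.19] the $\ell_p$-optimal coupling of $\mu_0$ and $\mu_1$ is unique, hence it coincides with the chronological optimal coupling supplied by timelike $p$-dualizability, and trivially all $\ell_p$-optimal couplings are chronological. Combining this with the $l^p$-cyclical monotonicity of the support of the (unique) optimal coupling and the sufficiency of cyclical monotonicity for $\ell_p$-optimality under compact supports (available through the duality theory of Cavalletti--Mondino), one obtains a set $\Gamma$ with the required properties, and hence strong timelike $p$-dualizability. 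After this repair, the rest of your argument is correct: \autoref{Th:Linfty estimates} produces a good $\ell_p$-geodesic, and [Thm.~3.20] identifies it with the unique one.
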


\begin{remark} One can replace timelike nonbranching by the weaker condition of timelike $p$-essential nonbranching \cite[Def.~2.21, Rem.~2.22]{braun+} in \autoref{Cor:Timelike branching}. In fact, in this case, independently of the results in this paper we have recently shown the equivalence of the conditions $\smash{\TCD_p^e(K,N)}$ and $\smash{\wTCD_p^e(K,N)}$ \cite[Thm.~3.35]{braun+} (and to the reduced $\TCD$ conditions in terms of \emph{Rényi's entropy} from \cite[Def.~3.2]{braun+}). The above mentioned uniqueness results still hold \cite[Thm.~4.15, Thm.~4.16, Rem.~4.17]{braun+}.
\end{remark}

\subsection*{Proof strategy for \autoref{Th:Linfty estimates}} The proof of  \autoref{Th:Linfty estimates} is fully carried out in \autoref{Ch:Good}. We were mostly inspired by the clever strategy \cite{rajala2012b} whose arguments we adapt  to the Lorentzian setting at various instances.

Roughly speaking, the proof is based on a bisection argument. Given $\mu_0$ and $\mu_1$ as in the assumptions, we choose an $\smash{\ell_p}$-midpoint $\mu_{1/2}$ of them at which $\scrU_N$ is maximal, see \autoref{Le:Maximizer}. By   $\smash{\wTCD_p^e(K,N)}$, it obeys \eqref{Eq:Ineq UN} for time $1/2$. Then we choose $\smash{\ell_p}$-mid\-points $\mu_{1/4}$ of $\mu_0$ and $\mu_{1/2}$, and $\mu_{3/4}$ of $\mu_{1/2}$ and $\mu_1$, in the same manner. Iteratively, we thus construct a collection of measures $\mu_t\in\Dom(\Ent_\meas)$ for every dyadic $t\in[0,1]$. Based on a crucial property of the distortion coefficients $\smash{\sigma_{K,N}^{(r)}}$, stated in \autoref{Le:Distortion}, the inequality \eqref{Eq:Ineq UN} --- which  is a priori only true for $\mu_t$ at time $1/2$ with $\mu_0$ and $\mu_1$ replaced by its ancestors --- extends to every dyadic $t\in[0,1]$ with $\mu_0$ and $\mu_1$ on the r.h.s. This inequality is stable under weak completion to a timelike proper-time parametrized $\smash{\ell_p}$-geodesic $(\mu_t)_{t\in[0,1]}$ defined on all of $[0,1]$, obtained by intermediate gluing of timelike $\smash{\ell_p}$-optimal geodesic plans, and passage to the limit.

As afterwards discussed in \autoref{Sub:Uniform density bounds}, maximality of $\scrU_N$ already suffices to guarantee the desired density bounds for $(\mu_t)_{t\in[0,1]}$. The   reason is that by the curvature-dimension condition, for appropriate endpoints mass has to be spread along some timelike proper-time parametrized $\smash{\ell_p}$-geodesic in a certain way,  see \autoref{Le:Bounded midpoints}. If the density of some midpoint in our  construction was too large, we could use the latter  geodesic to shuffle mass around and to build a midpoint with strictly smaller entropy, contradicting the maximality of $\scrU_N$ at our chosen midpoint. The proofs of the key \autoref{Pr:Min 0} and \autoref{Pr:Max = 0} are based on this principle.

Contrary to \cite{rajala2012a,rajala2012b}, however, we do not really work with bare $\smash{\ell_p}$-intermediate points, but  with intermediate slices of timelike $\smash{\ell_p}$-optimal geodesic \emph{plans}, cf.~\autoref{Sub:Geodesics}. An $\smash{\ell_p}$-intermediate point does not need to admit a \emph{chronological} coupling to any of its endpoints, which is however required to invoke the $\wTCD$ condition. On the other hand, this is clear if the chosen intermediate point already lies on a timelike $\smash{\ell_p}$-optimal geodesic plan. In fact, the  notion of \emph{strong} timelike $p$-dualizability is preserved along such plans, see \autoref{Le:Strong}. (For timelike $p$-dualizability arising in the definition of $\TCD_p^e(K,N)$, however, this is unclear, cf.~\autoref{Re:Minor}.)

\subsection*{Organization} In \autoref{Ch:Opt}, we shortly review the theories of Lorentzian pre-length spaces, Lorentzian optimal transport, and timelike curvature-dimension conditions. \autoref{Ch:Good} contains the proof of \autoref{Th:Linfty estimates}, while \autoref{Ch:Variations}  outlines, and partially proves, the above mentioned extensions of \autoref{Th:Linfty estimates}.

\addtocontents{toc}{\protect\setcounter{tocdepth}{2}}

\section{Optimal transport on Lorentzian spaces}\label{Ch:Opt}

This chapter recalls recent progress in nonsmooth Lorentzian geometry and optimal transport theory on such spaces. The reader is invited to consult the main references  \cite{cavalletti2020, kunzinger2018} for more details, proofs, and especially examples.

\subsection{Lorentzian geodesic spaces}\label{Sec:Lorentzian geo}

\subsubsection{Basic assumptions and notation}\label{Sub:Basic} Everywhere in this paper, let $(\mms,\met)$ be a proper ---  hence complete and separable --- metric space. All topological and measure-theoretic properties are understood with respect to the topology induced by $\met$ and its induced Borel $\sigma$-algebra, respectively. 

Moreover, we always fix a nontrivial Radon measure $\meas$ on $\mms$.  For simplicity, we assume  that $\meas$ is fully supported, in symbols $\supp\meas =\mms$. 

Let $\scrP(\mms)$ be the set of all probability measures on $\mms$. Let $\scrP_\comp(\mms)$ and $\scrP^\ac(\mms,\meas)$ be its subsets of all elements with compact support and $\meas$-absolutely continuous measures, respectively, and set $\smash{\scrP_\comp^\ac(\mms,\meas) := \scrP_\comp(\mms)\cap\scrP^\ac(\mms,\meas)}$. Whenever we say that a specified property is satisfied ``subject to the decomposition $\mu = \rho\,\meas + \mu_\perp$'', we mean that $\mu_\perp$ is the $\meas$-singular part in the Lebesgue decomposition of $\mu\in\scrP(\mms)$ with respect to $\meas$, and that $\mu - \mu_\perp = \rho\,\meas\in\scrP^\ac(\mms,\meas)$.

For a Borel map $F\colon \mms\to \mms'$ into a metric space $(\mms',\met')$ as well as $\mu\in\scrP(\mms)$, $F_\push\mu \in\scrP(\mms')$ designates the usual \emph{push-forward}  of $\mu$ under $F$ given by $F_\push\mu[B] := \smash{\mu\big[F^{-1}(B)\big]}$ for every Borel set $B\subset\mms'$.

Given  $\mu,\nu\in\scrP(\mms)$, let $\Pi(\mu,\nu)$ be the set of all \emph{couplings} of $\mu$ and $\nu$, i.e.~all $\pi\in\scrP(\mms^2)$ with $\pi[\, \cdot\times\mms] = \mu$ and $\pi[\mms\times\cdot\, ] = \nu$.

Let $\Cont([0,1];\mms)$ denote the set of all continuous curves $\gamma\colon [0,1]\to\mms$, endowed with the uniform topology. For $t\in [0,1]$, the \emph{evaluation map} $\eval_t\colon \Cont([0,1];\mms) \to \mms$ is defined by $\eval_t(\gamma) := \gamma_t$.

\subsubsection{Chronology and causality} Throughout, we fix a preorder $\leq$  and a transitive relation $\ll$, contained in $\leq$, on $\mms$. We say that $x,y\in\mms$ are in \emph{timelike} or \emph{causal} relation if $x\ll y$ or $x\leq y$, respectively. The triple $(\mms,\ll,\leq)$ is called \emph{causal space} \cite[Def.~2.1]{kunzinger2018}. We write $x<y$ provided $x\leq y$ yet $x\neq y$. Let us set
\begin{align*}
\mms_\ll^2 &:= \{(x,y) \in\mms^2 : x\ll y\},\\
\mms_\leq^2 &:= \{(x,y)\in\mms^2 : x\leq y\}.
\end{align*}


\begin{definition} We term a causal space $(\mms,\ll,\leq)$  \emph{causally closed} if $\leq$ is closed, i.e.~$\smash{\mms_\leq^2}$ is closed in $\mms^2$. 
\end{definition}

Given a (not necessarily Borel) set $A\subset\mms$, we define \cite[Def.~2.3]{kunzinger2018}  the \emph{chro\-no\-logical future} $I^+(A)\subset\mms$ and the \emph{causal future} $J^+(A)\subset\mms$ of $A$ by
\begin{align*}
I^+(A) &:= \{y\in\mms : x\ll y\textnormal{ for some }x\in A\},\\
J^+(A) &:= \{y\in\mms : x\leq y\textnormal{ for some }x\in A\}.
\end{align*}
Analogously, the \emph{chronological past} $I^-(A)$ and the \emph{causal past} $J^-(A)$ of $A$ are defined. By a slight abuse of notation, given  $x\in\mms$ we write $\smash{I^\pm(x) := I^\pm(\{x\})}$ and $\smash{J^\pm(x) := J^\pm(\{x\})}$. For $\mu\in\scrP(\mms)$, we write $\smash{I^\pm(\mu) := I^\pm(\supp\mu)}$ and $\smash{J^\pm(\mu)} := \smash{J^\pm(\supp\mu)}$. For all these objects, we set $I(A,B) := I^+(A) \cap I^-(B)$, and we define $I(x,y)$, $I(\mu,\nu)$, $J(A,B)$, $J(x,y)$, and $J(\mu,\nu)$ analogously. 

\subsubsection{Lorentzian pre-length spaces}  A function $\tau\colon\mms^2 \to [0,\infty]$ is a \emph{time separation function} \cite[Def.~2.8]{kunzinger2018} if it is lower semicontinuous, and for every $x,y,z\in\mms$,
\begin{enumerate}[label=\textnormal{\alph*.}]
\item $\tau(x,y) = 0$ if $x\not\leq y$,
\item $\tau(x,y)>0$ if and only if $x\ll y$  --- in other words, $\smash{\mms_\ll^2 = \{\tau > 0\}}$  --- and
\item if $x\leq y\leq z$ we have the \emph{reverse triangle inequality}
\begin{align}\label{Eq:Reverse tau}
\tau(x,z) \geq \tau(x,y) + \tau(y,z).
\end{align}
\end{enumerate}
The existence of such a $\tau$ implies that $\ll$ is an \emph{open} relation \cite[Prop.~2.13]{kunzinger2018}, whence $\smash{I^\pm(A)}$ is open for every $A\subset\mms$ \cite[Lem.~2.12]{kunzinger2018}.

\begin{remark}\label{Re:Ro} Besides \eqref{Eq:Reverse tau}, unlike the metric in metric geometry $\tau$ is only symmetric in pathological cases: for every $x\in\mms$, either $\tau(x,x)=0$ or $\tau(x,x) = \infty$, and if $\tau(x,y)\in (0,\infty)$, then $\tau(y,x) = 0$ \cite[Prop.~2.14]{kunzinger2018}.
\end{remark}

\begin{definition} A \emph{Lorentzian pre-length space} is a quintuple $(\mms,\met,\ll,\leq,\tau)$ consisting of a causal space $(\mms,\ll,\leq)$ equipped with a proper metric $\met$ and a time separation function $\tau$ as above.
\end{definition}

\subsubsection{Length of curves} Let $(\mms,\met,\ll,\leq,\tau)$ be a Lorentzian pre-length space. A \emph{curve} is a continuous map $\gamma\colon [a,b]\to\mms$, $a,b\in\R$ with $a<b$. Such a curve $\gamma$ is called \emph{\textnormal{(}future-directed\textnormal{)} timelike} or \emph{\textnormal{(}future-directed\textnormal{)} causal} if it is Lipschitz continuous with respect to $\met$, and $\gamma_s \ll \gamma_t$ or $\gamma_s \leq \gamma_t$ for every $s,t\in[a,b]$ with $s< t$, respectively. 
It is  \emph{null} if it is causal and $\gamma_a \not\ll\gamma_b$. Analogous notions make sense for  \emph{past-directed} curves and their \emph{causal character} (i.e.~their property of being chronological, causal, or null). Unless stated otherwise, every curve of a specified causal character is assumed future-directed. 

The \emph{length} of a curve $\gamma\colon [a,b]\to \mms$ is defined through
\begin{align*}
\Len_\tau(\gamma) := \inf\!\big\lbrace\tau(\gamma_{t_0},\gamma_{t_1}) + \dots + \tau(\gamma_{t_{n-1}},\gamma_{t_n})\big\rbrace, 
\end{align*}
where the infimum is taken over all $n\in\N$ and all $t_0,\dots,t_n \in[a,b]$ with $t_0=0$, $t_n=1$, and $t_i < t_{i+1}$ for every $i\in\{0,\dots,n-1\}$ \cite[Def.~2.24]{kunzinger2018}. It is additive with respect to restriction to disjoint partitions \cite[Lem.~2.25]{kunzinger2018}. Reparametrizations do neither change causal characters  \cite[Lem.~2.27]{kunzinger2018} nor the $\tau$-length \cite[Lem.~2.28]{kunzinger2018}.

\subsubsection{Geodesics}\label{Sub:GEO} A future-directed causal curve $\gamma\colon[a,b]\to\mms$ is a \emph{geodesic} (or \emph{maximal}) provided $\Len_\tau(\gamma) = \tau(\gamma_a,\gamma_b)$ \cite[Def.~2.33]{kunzinger2018}. The spaces of all such curves is denoted by $\Geo(\mms)\subset\Lip([0,1];\mms)$, and its subset of timelike curves is called $\TGeo(\mms)$. 

Recall that every element of $\TGeo(\mms)$ has a weak parametrization \cite[Def.~3.31]{kunzinger2018} by $\tau$-arclength if $\tau$ is continuous and $\tau(x,x) = 0$ for every $x\in\mms$ \cite[Cor.~3.35]{kunzinger2018}. This induces a natural reparametrization map $\sfr\colon\TGeo(\mms)\to\Cont([0,1];\mms)$ which is continuous \cite[Lem.~B.6]{braun+}. In particular, all elements $\eta$ of 
\begin{align}\label{Eq:TGeo tau}
\TGeo^\tau(\mms) := \sfr(\TGeo(\mms))
\end{align}
are timelike and proper-time parametrized, i.e.~for every $s,t\in[0,1]$ with $s<t$,
\begin{align*}
\tau(\eta_s,\eta_t) = (t-s)\,\tau(\eta_0,\eta_1)>0.
\end{align*}
In general, elements of $\smash{\TGeo^\tau(\mms)}$ are \emph{not} Lipschitz continuous \cite[p.~424]{kunzinger2018}. Yet, as $\smash{\TGeo^\tau(\mms)}$ is the continuous image of $\TGeo(\mms)$ which, as subset of $\Lip([0,1];\mms)$, will have better compactness properties under \autoref{Ass:Ass}, cf.~\autoref{Sub:GH}, the latter transfer to $\TGeo^\tau(\mms)$; compare with \cite[Sec. B.3]{braun+} and \autoref{Le:Plans}.

We call $(\mms,\met,\ll,\leq,\tau)$ \emph{geodesic} if for every $x,y\in\mms$ with $x<y$ there exists a future-directed causal geodesic $\gamma\in\Geo(\mms)$ with initial point $x$ and final point $y$.

\subsubsection{Regularity}\label{REG} We will later assume $(\mms,\met,\ll,\leq,\tau)$ to be \emph{regular}(\emph{ly localizable}) according to \cite[Def.~3.16]{kunzinger2018}. Instead of giving this rather technical definition, let us list its most important consequences which will be relevant for our purposes. Under regularity, geodesy, and global hyperbolicity, cf.~\autoref{Sub:GH} below, the $\tau$-length $\Len_\tau$ is upper semicontinuous in the following sense \cite[Prop.~3.17, Thm.~3.26]{kunzinger2018}: if $(\gamma_n)_{n\in\N}$ is a sequence of causal curves $\gamma_n \colon[0,1]\to\mms$ converging uniformly to a causal curve $\gamma\colon [0,1]\to\mms$, then $\smash{\Len_\tau(\gamma) \geq \limsup_{n\to\infty} \Len_\tau(\gamma_n)}$. Moreover, in regular Lorentzian pre-length spaces, causal geodesics have a causal character, i.e.~they are either timelike or null (hence do not switch from timelike to causal or vice versa) \cite[Thm.~3.18]{kunzinger2018}. In this case, a causal geodesic $\gamma\colon[0,1]\to\mms$ is timelike if and only if $\tau(\gamma_0,\gamma_1)>0$. In particular, on regular Lorentzian geodesic spaces, every $x,y\in \mms$ with $x\ll y$ can be connected by a \emph{timelike} geodesic --- which does \emph{not} follow from geodesy alone --- and the set of all such geodesics is closed for fixed $x$ and $y$ \cite[Lem.~B.1]{braun+}.

\subsubsection{Global hyperbolicity}\label{Sub:GH} Next, we introduce a   useful condition which ensures both that the time separation function $\tau$ behaves nicely, and that geodesics exist. 

Following \cite[Sec.~1.1]{cavalletti2020}, we call $(\mms,\met,\ll,\leq,\tau)$  \emph{non-totally imprisoning} if for every compact $C\subset \mms$ there exists a constant  $c>0$ such that the arclength of every causal curve in $C$ with respect to $\met$ is bounded from above by $c$.

\begin{definition} A Lorentzian pre-length space $(\mms,\met,\ll,\leq,\tau)$ is  \emph{globally hyperbolic} if it is non-totally imprisoning and  $J(x,y)$  is compact for every $x,y\in\mms$. It is \emph{$\scrK$-globally hyperbolic} if $J(K_0,K_1)$ is compact for all compact $K_0,K_1\subset\mms$.
\end{definition}

\begin{remark} $\scrK$-global hyperbolicity is not much more restrictive than global hyperbolicity: if in addition to the latter, $(\mms,\met,\ll,\leq,\tau)$ is locally causally closed \cite[Def.~3.4]{kunzinger2018} and $I^\pm(x) \neq \emptyset$ for every $x\in \mms$, $\scrK$-global hyperbolicity holds true  \cite[Lem.~1.5]{cavalletti2020}.

On the other hand, every locally causally closed, $\scrK$-globally hyperbolic Lorentzian geodesic space is in fact causally closed \cite[Lem.~1.6]{cavalletti2020}.
\end{remark}

Thanks to \cite[Def.~3.25, Thm.~3.26]{kunzinger2018}, global hyperbolicity implies the nonsmooth analogue of the well-known \emph{strong causality condition} for smooth Lorentzian spacetimes \cite[Def.~14.11]{oneill1983}. On every globally hyperbolic Lorentzian length space (see  \cite[Def.~3.22]{kunzinger2018} for the precise definition),  $\tau$ is finite and continuous \cite[Thm.~3.28]{kunzinger2018}; in particular, $\tau(x,x) = 0$ for every $x\in\mms$. Also, every such space is geodesic by the nonsmooth Avez--Seifert theorem \cite[Thm.~3.30]{kunzinger2018}. 

In the recent work \cite{burtscher2021}, a nonsmooth analogue of Geroch's characterization \cite{geroch1970} of global hyperbolicity in terms of Cauchy time functions has been achieved.

$\scrK$-global hyperbolicity is  convenient for optimal transport purposes later.

\subsection{Lorentz--Wasserstein distance}\label{Sub:Lorentz} Smooth Lorentzian theories of optimal transport have been studied in \cite{eckstein2017,kell2020, mccann2020, mondinosuhr2018, suhr2016}. We review the cornerstones of the accom\-panying nonsmooth theory recently developed in \cite{cavalletti2020} now.

\subsubsection{Chronological and causal couplings} Let $(\mms,\met,\ll,\leq,\tau)$ be a Lorentzian pre-length space, and $\mu,\nu\in\scrP(\mms)$. We define the set $\Pi_\ll(\mu,\nu)$ of all \emph{chronological couplings} of $\mu$ and $\nu$ to consist of all $\pi\in\Pi(\mu,\nu)$ with $\smash{\pi[\mms_\ll^2]=1}$. Similarly, the set $\Pi_\leq(\mu,\nu)$ of all \emph{causal couplings} of $\mu$ and $\nu$ is defined. If $(\mms,\met,\ll,\leq,\tau)$ is causally closed, clearly $\pi\in\Pi_\leq(\mu,\nu)$ if and only if $\pi\in\Pi(\mu,\nu)$ and $\smash{\supp\pi\subset\mms_\leq^2}$.

Intuitively, a chronological or causal coupling of $\mu$ and $\nu$ describes a law for transporting an infinitesimal mass portion $\rmd\mu(x)$ to an infinitesimal mass portion $\rmd\nu(y)$ in such a way that $x\ll y$ or $x\leq y$, respectively.

\subsubsection{The $\smash{\ell_p}$-optimal transport problem} In the following, given any $p\in (0,1)$ and following \cite{cavalletti2020,mccann2020} we adopt the  conventions
\begin{align*}
\sup\emptyset &:= -\infty,\\
(-\infty)^p &:= (-\infty)^{1/p} := -\infty,\\
\infty -\infty &:=-\infty. \textcolor{white}{a^{1/p}}
\end{align*}

The total transport cost function $\ell_p\colon\scrP(\mms)^2\to [0,\infty]\cup\{-\infty\}$  is given by
\begin{align*}
\ell_p(\mu,\nu) &:=  \sup\!\big\lbrace \Vert \tau\Vert_{\Ell^p(\mms^2,\pi)} : \pi \in \Pi_\leq(\mu,\nu)\big\rbrace\\
&\textcolor{white}{:}= \sup\!\big\lbrace \Vert l\Vert_{\Ell^p(\mms^2,\pi)} : \pi\in\Pi(\mu,\nu)\big\rbrace.
\end{align*}
Here, the function $l\colon \mms^2 \to [0,\infty]\cup\{-\infty\}$ is defined by
\begin{align*}
l^p(x,y)  := \begin{cases} \tau^p(x,y) & \textnormal{if }x\leq y,\\
-\infty & \textnormal{otherwise}.
\end{cases}
\end{align*}

\begin{remark}
The sets of  maximizers for both suprema defining $\smash{\ell_p(\mu,\nu)}$ coincide. One advantage of the second formulation  is that under (local) causal closedness and global hyperbolicity assumptions, $l^p$ is upper semicontinuous on $\mms^2$. In this case,  standard optimal transport techniques \cite{ambrosiogigli, villani2009} can be applied to study the second problem, which in turn yields results for the first  \cite[Rem.~2.2]{cavalletti2020}. Moreover, the preimages $l^{-1}([0,\infty))$ and $l^{-1}((0,\infty))$ conveniently  encode causality and chronology of points in $\mms^2$, respectively.
\end{remark}

A coupling $\pi\in\Pi(\mu,\nu)$ of $\mu,\nu\in\scrP(\mms)$ is \emph{$\ell_p$-optimal} if $\pi\in\Pi_\leq(\mu,\nu)$ and 
\begin{align*}
\ell_p(\mu,\nu) = \Vert\tau\Vert_{\Ell^p(\mms^2,\pi)} = \Vert l \Vert_{\Ell^p(\mms^2,\pi)}.
\end{align*}
If $(\mms,\met,\ll,\leq,\tau)$ is locally causally closed and globally hyperbolic, and if $\mu,\nu\in\scrP_\comp(\mms)$ with $\Pi_\leq(\mu,\nu)\neq\emptyset$, then there exists an $\smash{\ell_p}$-optimal coupling $\pi$ of $\mu$ and $\nu$, and its total transport cost $\smash{\Vert\tau\Vert_{\Ell^p(\mms^2,\pi)}}$ is finite \cite[Prop.~2.3]{cavalletti2020}.\footnote{This holds in more generality, but the named case will be the only relevant in our work.}

Lastly, an important property of $\smash{\ell_p}$ is the \emph{reverse triangle inequality} \cite[Prop. 2.5]{cavalletti2020} strongly reminiscent of \eqref{Eq:Reverse tau}: for \emph{every} $\mu,\nu,\sigma\in\scrP(\mms)$,
\begin{align}\label{Eq:Reverse triangle lp}
\ell_p(\mu,\sigma) \geq \ell_p(\mu,\nu) + \ell_p(\nu,\sigma).
\end{align}

\subsubsection{Timelike $p$-dualizability}\label{Sub:Timelike dual} Next, we review the concept of \textit{\textnormal{(}strong\textnormal{)} timelike $p$-dualizability}, $p\in(0,1]$, of pairs $(\mu,\nu)\in\scrP(\mms)$. It originates in \cite{cavalletti2020} and generalizes the notion of \emph{$p$-separation} from \cite[Def.~4.1]{mccann2020}. Pairs satisfying this condition allow for a good duality theory \cite[Prop.~2.19, Prop.~2.21, Thm.~2.26]{cavalletti2020}, which has been used to characterize $\smash{\ell_p}$-geodesics in the smooth case \cite[Thm.~4.3, Thm.~5.8]{mccann2020}. In our case, it is needed to set up the timelike curvature-dimension condition from \autoref{Def:TCD} below.

In view of the subsequent \autoref{Def:TL DUAL} taken from \cite[Def.~2.18, Def.~2.27]{cavalletti2020}, we refer to \cite[Def.~2.6]{cavalletti2020} for the inherent definition of \emph{cyclical monotonicity} of a subset of $\smash{\mms_\leq^2}$ with respect to $l^p$, which generalizes the standard concept of cyclical monotonicity with respect to any given cost function \cite[Def.~5.1]{villani2009}. It will only be relevant in \autoref{Le:Strong} below.

As usual, given any $a,b\colon\mms\to\R$ we define the function $a\oplus b\colon\mms^2\to\R$ by $(a\oplus b)(x,y) := a(x) + b(y)$.

\begin{definition}\label{Def:TL DUAL} Given $p\in (0,1]$, a pair $(\mu,\nu)\in\scrP(\mms)$ is termed 
\begin{enumerate}[label=\textnormal{\alph*.}]
\item  \emph{timelike $p$-dua\-lizable by $\pi\in \Pi_\ll(\mu,\nu)$} if  $\pi$ is an $\smash{\ell_p}$-optimal coupling, and  there exist Borel functions $a,b\colon\mms\to\R$ with $a\oplus b\in\Ell^1(\mms^2,\mu\otimes\nu)$ and $l^p\leq a\oplus b$ on $\supp\mu\times\supp\nu$,
\item  \emph{strongly timelike $p$-dualizable by $\pi\in\Pi_\ll(\mu,\nu)$} provided $(\mu,\nu)$ is timelike $p$-dualizable by $\pi$, and there exists some $l^p$-cyclically monotone Borel set $\Gamma\subset \smash{\mms_\ll^2\cap(\supp\mu_0\times\supp\mu_1)}$ such that any given coupling $\sigma\in \Pi_\leq(\mu_0,\mu_1)$ is $\ell_p$-optimal if and only if $\sigma[\Gamma]=1$, and
\item \emph{timelike $p$-dualizable} if $(\mu,\nu)$ is timelike $p$-dualizable by some $\pi\in\Pi_\ll(\mu,\nu)$; analogously for strong timelike $p$-dualizability.
\end{enumerate}
Moreover, any $\pi$ as above is called \emph{timelike $p$-dualizing}. 
\end{definition}

In this framework, we define
\begin{align*}
\TD_p(\mms) &:= \{(\mu,\nu)\in\scrP(\mms)^2 : (\mu,\nu)\textnormal{ timelike }p\textnormal{-dualizable}\},\\
\STD_p(\mms) &:= \{(\mu,\nu)\in\scrP(\mms)^2 : (\mu,\nu)\textnormal{ strongly timelike }p\textnormal{-dualizable}\}
\end{align*}

\begin{remark}\label{Re:Always exists} It will be useful to keep in mind that by definition, \emph{every} $\smash{\ell_p}$-optimal coupling of a strongly timelike $p$-dualizable pair is concentrated on $\smash{\mms_\ll^2}$.
\end{remark}

\begin{example}\label{Ex:STD} Evidently, if $\mu,\nu\in\scrP_\comp(\mms)$, then $(\mu,\nu)$ is timelike $p$-dualizable if and only if there exists an $\ell_p$-optimal coupling $\smash{\pi\in\Pi_\leq(\mu,\nu)}$ concentrated on $\smash{\mms_\ll^2}$.

A relevant example (already in the smooth case, cf.~\cite[Lem.~4.4, Thm.~7.1]{mccann2020}) for the strong version is the following. If $\mu,\nu\in\scrP_\comp(\mms)$ on a locally causally closed, globally hyperbolic Lorentzian geodesic space $(\mms,\met,\ll,\leq,\tau)$ with $\smash{\supp\mu\times\supp\nu\subset\mms_\ll^2}$, then the pair $(\mu,\nu)$ is strongly timelike $p$-dualizable, $p\in (0,1]$ \cite[Cor.~2.29]{cavalletti2020}.
\end{example}

\subsubsection{Geodesics revisited}\label{Sub:Geodesics}  Given  $p\in (0,1]$, following \cite[Subsec.~2.3.6, App.~B]{braun+}, see also \cite[Def.~2.31]{cavalletti2020} and \cite[Def.~1.1]{mccann2020}, we now recall the nonsmooth notion of timelike proper-time parametrized $\ell_p$-geodesics.

Recall the continuous reparametrization map $\sfr$ for elements of $\TGeo(\mms)$ introduced in  \autoref{Sub:GEO}. For $\mu_0,\mu_1\in\scrP(\mms)$, we define
\begin{align*}
\OptTGeo_{\ell_p}(\mu_0,\mu_1) &:= \{\bdpi\in\scrP(\Geo(\mms)) : (\eval_0,\eval_1)_\push\bdpi \in\Pi_\ll(\mu_0,\mu_1)\\
&\qquad\qquad\text{is }\ell_p\textnormal{-optimal}\},\\
\OptTGeo_{\ell_p}^\tau(\mu_0,\mu_1) &:= \sfr_\push\OptTGeo_{\ell_p}(\mu_0,\mu_1).
\end{align*}
All elements of the latter class are concentrated on the set $\TGeo^\tau(\mms)$ from \eqref{Eq:TGeo tau}.

In the following definition, we say that $\bdpi\in\scrP(\Cont([0,1];\mms))$ \emph{represents} a curve $(\mu_t)_{t\in[0,1]}$ in $\scrP(\mms)$ if $\mu_t= (\eval_t)_\push\bdpi$ holds for every $t\in[0,1]$.

\begin{definition}\label{Def:LP Geo} A collection $(\mu_t)_{t\in [0,1]}$ of elements of $\scrP(\mms)$ is termed \emph{timelike proper-time parametrized $\ell_p$-geodesic} if it is represented by some element $\bdpi$ belonging to $\smash{\OptTGeo_{\ell_p}^\tau(\mu_0,\mu_1)}$. Any such $\bdpi$ will be called \emph{timelike $\ell_p$-optimal geodesic plan}.
\end{definition}

Note that every curve $(\mu_t)_{t\in[0,1]}$ as in \autoref{Def:LP Geo} obeys
\begin{align*}
\ell_p(\mu_s,\mu_t) = (t-s)\,\ell_p(\mu_0,\mu_1) > 0.
\end{align*}
Thus, $(\mu_t)_{t\in[0,1]}$ is an $\smash{\ell_p}$-geodesic in the sense of \cite[Def.~2.13]{cavalletti2020} and \cite[Def.~1.1]{mccann2020} provided $\smash{\ell_p(\mu_0,\mu_1)<\infty}$. By regularity and geodesy of $(\mms,\met,\ll,\leq,\tau)$ and a standard measurable selection argument, cf.~\autoref{Le:Plans} and \autoref{Ass:Ass} below, timelike proper-time parametrized $\smash{\ell_p}$-geo\-de\-sics exist in great generality.

We then have the following result from \cite[Prop.~B.11]{braun+} used at many occasions below. The compactness of $\mms$ assumed therein is only made for notational simplicity and is not restrictive, as \autoref{Le:Plans} will always be applied within causal diamonds which are compact by assumption. Note that the chronology  assumption on the limit marginals in the last clause therein is essential. Indeed, unlike $\smash{\ell_p}$-optimality \cite[Sec.~2.3]{cavalletti2020}, chronology is in general \emph{not} stable under weak limits (in contrast to causality, which will be a closed condition by assumption).

To formulate the lemma, given $s,t\in[0,1]$ with $s<t$, let $\smash{\Restr_s^t}\colon\Cont([0,1];\mms) \to \Cont([0,1];\mms)$ be the restriction map defined by
\begin{align*}
\Restr_s^t(\gamma)_r := \gamma_{(1-r)s+rt}.
\end{align*}

\begin{lemma}\label{Le:Plans} Let $p\in (0,1]$ and suppose that $(\mms,\met,\ll,\leq,\tau)$ is a compact, causally closed, $\scrK$-globally hyperbolic, regular Lorentzian geodesic space. Suppose  $(\mu_0,\mu_1)\in \TD_p(\mms)$. Then the following properties hold.
\begin{enumerate}[label=\textnormal{\textcolor{black}{(}\roman*\textcolor{black}{)}}]
\item For every $\smash{\ell_p}$-optimal $\pi\in\Pi_\ll(\mu_0,\mu_1)$, there is $\bdpi\in\OptTGeo^\tau_{\ell_p}(\mu_0,\mu_1)$ such that $\smash{\pi = (\eval_0,\eval_1)_\push\bdpi}$.
\item There is at least one proper-time parametrized $\smash{\ell_p}$-geodesic from $\mu_0$ to $\mu_1$.
\item For every $\bdpi\in\smash{\OptTGeo_{\ell_p}^\tau(\mu_0,\mu_1)}$ and every $s,t\in[0,1]$ with $s<t$,
\begin{align*}
(\Restr_s^t)_\push\bdpi \in\OptTGeo_{\ell_p}^\tau((\eval_s)_\push\bdpi,(\eval_t)_\push\bdpi).
\end{align*}
\item If $\smash{\bdpi\in\OptTGeo_{\ell_p}^\tau(\mu_0,\mu_1)}$ and if $\smash{\bdsigma}$ is any nontrivial measure on  $\smash{\Cont([0,1];\mms)}$ with $\bdsigma\leq\bdpi$, then $\bdsigma[\Cont([0,1];\mms)]^{-1}\,\bdsigma$ is an element of $\smash{\OptTGeo_{\ell_p}^\tau(\sigma_0,\sigma_1)}$, where $\sigma_i := \bdsigma[\Cont([0,1];\mms)]^{-1}\,(\eval_i)_\push\bdsigma\in\scrP(\mms)$, $i\in\{0,1\}$.
\item If $(\mu_0,\mu_1)\in\STD_p(\mms)$ is the weak limit of a given sequence $\smash{(\mu_0^n,\mu_1^n)_{n\in\N}}$ in $\scrP(\mms)^2$, then every sequence $(\bdpi^n)_{n\in\N}$ satisfying $\smash{\bdpi^n\in\OptTGeo_{\ell_p}^\tau(\mu_0^n,\mu_1^n)}$ for every $n\in\N$ has an accumulation point, and any such point belongs to $\smash{\OptTGeo_{\ell_p}^\tau(\mu_0,\mu_1)}$.
\end{enumerate}
\end{lemma}

\subsection{Entropic timelike curvature-dimension condition}

\begin{definition} A sextuple $(\mms,\met,\meas,\ll,\leq,\tau)$ consisting of a Lorentzian pre-length space $(\mms,\met,\ll,\leq,\tau)$ endowed with a Radon measure $\meas$ as hypothesized in \autoref{Sub:Basic} will be called \emph{measured Lorentzian pre-length space}.
\end{definition}

For measured Lorentzian pre-length spaces, all notions from \autoref{Sec:Lorentzian geo} are understood with respect to the inherent Lorentzian pre-length structure.

\subsubsection{Timelike $(K,N,p)$-convexity} For later convenience, we introduce the following \autoref{Def:Convex} leaned on \cite[Def.~6.5]{mccann2020}. With a slight abuse of notation compared to \eqref{Eq:UN def}, given a functional $\rmE\colon \scrP(\mms)\to [-\infty,\infty]$ with sufficiently large finiteness domain $\Dom(\rmE)\subset\scrP(\mms)$ and $N\in (0,\infty)$, define $\scrU_N\colon \scrP(\mms) \to [0,\infty]$ by $\smash{\scrU_N(\mu) := \rme^{-\rmE(\mu)/N}}$. In our work,  the most relevant functional $\rmE$ is the relative entropy $\Ent_\meas$ introduced in \autoref{Sub:Entropy} below, but see also \autoref{Sub:Conv}.

For  $K\in\R$, $r\in[0,1]$, and $\vartheta \in [0,\infty]$, we consider the distortion coefficients
\begin{align*}
\sigma_{K,N}^{(r)}(\vartheta) := \begin{cases}
\displaystyle\frac{\sin(r\vartheta\sqrt{K/N})}{\sin(\vartheta\sqrt{K/N})} & \textnormal{if }0 <K\vartheta^2 <N\pi^2,\\
r & \textnormal{if }K\vartheta^2 =0,\\
\displaystyle\frac{\sinh(r\vartheta\sqrt{-K/N})}{\sinh(\vartheta\sqrt{-K/N})} & \textnormal{if } K\vartheta^2 < 0,\\
\infty & \textnormal{if }K\vartheta^2\geq \pi^2.
\end{cases}
\end{align*}
Here we employ the convention $0\cdot \infty := 0$.

\begin{definition}\label{Def:Convex} Let $p\in (0,1]$, $K\in\R$, and $N\in (0,\infty)$. We say that a functional $\rmE\colon\scrP(\mms)\to [-\infty,\infty]$ on a Lorentzian pre-length space $(\mms,\met,\ll,\leq,\tau)$ is \emph{$(K,N,p)$-convex} relative to $\scrQ\subset\scrP(\mms)^2$ if the following holds. For every $\mu_0,\mu_1\in\scrQ \cap \Dom(\rmE)^2$ with $\smash{\ell_p(\mu_0,\mu_1) < \infty}$ there exist an $\smash{\ell_p}$-optimal coupling $\smash{\pi\in \Pi_\ll(\mu_0,\mu_1)}$ and a timelike proper-time parametrized $\smash{\ell_p}$-geodesic $(\mu_t)_{t\in [0,1]}$ from $\mu_0$ to $\mu_1$ such that for every $t\in[0,1]$, we have
\begin{align*}
\scrU_N(\mu_t) \geq \sigma_{K,N}^{(1-t)}(T_\pi)\,\scrU_N(\mu_0) + \sigma_{K,N}^{(t)}(T_\pi)\,\scrU_N(\mu_1).
\end{align*}
\end{definition}

\begin{remark}\label{Re:Pathological} Unlike the metric definition of $(K,N)$-convex functions \cite[Def.~2.7]{erbar2015}, in \autoref{Def:Convex} the pathological situation $\smash{T_\pi=\infty}$ might occur. This either reduces to a trivial condition ($K<0$), does not involve any $\Ell^2$-norm of $\tau$ at all ($K=0$) or --- for $K>0$ and in the relevant case  $\rmE = \Ent_\meas$ --- cannot hold by the timelike  Bonnet--Myers theorem  \cite[Prop.~3.6]{cavalletti2020}.
\end{remark}

\begin{remark} In the framework of \autoref{Def:Convex}, define $e\colon[0,1]\to [-\infty,\infty]$ by $e(t) := \rmE(\mu_t)$. Then $(K,N,p)$-convexity of $\rmE$ is equivalent to the following. For every $\mu_0,\mu_1 \in\scrQ\cap\Dom(\rmE)^2$ with $\smash{\ell_p(\mu_0,\mu_1)<\infty}$ there exist an $\smash{\ell_p}$-optimal coupling $\pi\in\Pi_\ll(\mu_0,\mu_1)$ and a timelike proper-time parametrized $\smash{\ell_p}$-geodesic $(\mu_t)_{t\in[0,1]}$ such that if $\smash{e^{-1}(\{-\infty\})}$ is empty and $\smash{\Vert\tau\Vert_{\Ell^2(\mms^2,\pi)}<\infty}$, then $e$ is semiconvex on $(0,1)$ and satisfies
\begin{align*}
\ddot{e} - \frac{1}{N}\,\dot{e}^2 \geq K\,\big\Vert\tau\big\Vert_{\Ell^2(\mms^2,\pi)}^2
\end{align*}
in the distributional sense on $(0,1)$.
\end{remark}

\subsubsection{Relative entropy}\label{Sub:Entropy} 

We define $\Ent_\meas\colon\scrP(\mms) \to [-\infty,\infty]$ by
\begin{align*}
\Ent_\meas(\mu) = \begin{cases} \displaystyle \int_\mms \rho\log\rho\d\meas &\textnormal{if }\mu = \rho\,\meas\ll \meas,\ (\rho\log\rho)^+\in\Ell^1(\mms,\meas),\\
\infty & \textnormal{otherwise}.
\end{cases}
\end{align*}
This functional possesses the following  properties, details of which can be found in \cite{cavalletti2020, mccann2020, sturm2006a}. By Jensen's inequality, $\Ent_\meas(\mu) \geq -\log\meas[\supp\mu]>-\infty$ for every $\mu\in\scrP_\comp(\mms)$. Moreover, $\Ent_\meas$ is weakly lower semicontinuous in the following form: if a sequence $(\mu_n)_{n\in\N}$ in $\scrP(\mms)$ converges weakly to $\mu\in\scrP(\mms)$ and there is a Borel set $C\subset\mms$ with $\meas[C]<\infty$ and $\supp\mu_n\subset C$ for every $n\in\N$,  then
\begin{align*}
\Ent_\meas(\mu) \leq \liminf_{n\to\infty}\Ent_\meas(\mu_n).
\end{align*}

\subsubsection{The curvature-dimension condition} Now we finally come to the main definition from \cite{cavalletti2020}, namely \cite[Def.~3.2]{cavalletti2020}, based on the groundbreaking results \cite[Cor.~6.6, Cor.~7.5]{mccann2020} and \cite[Cor.~4.4]{mondinosuhr2018}. 

Recall the definition \eqref{Eq:UN def} of the exponentiated relative entropy $\scrU_N$.

\begin{definition}\label{Def:TCD} Let $(\mms,\met,\meas,\ll,\leq,\tau)$ be a measured Lorentzian pre-length space, and let $p\in (0,1)$, $K\in\R$, and $N\in (0,\infty)$. We say that the former satisfies
\begin{enumerate}[label=\textnormal{\alph*.}]
\item  the \emph{\textnormal{(}entropic\textnormal{)} timelike curvature-dimension condition} $\smash{\TCD_p^e(K,N)}$ if   $\scrU_N$ is $(K,N,p)$-convex relative to $\TD_p(\mms)$, and
\item the \emph{weak \text{(}entropic\text{)} timelike curvature-dimension condition} $\smash{\wTCD_p^e(K,N)}$ if $\scrU_N$ is $(K,N,p)$-convex relative to $\STD_p(\mms)\cap\scrP_\comp(\mms)^2$.
\end{enumerate}
\end{definition}

\begin{remark} If the space $(\mms,\met,\meas,\ll,\leq,\tau)$ is $\scrK$-globally hyperbolic and satisfies the $\smash{\wTCD_p(K,N)}$ condition (in fact, $\smash{\TMCP_p^e(K,N)}$ according to \autoref{Def:TMCP}  suffices), then it is timelike geodesic. If in addition, it is causally path connected \cite[Def.~3.4]{kunzinger2018} --- in particular, if $(\mms,\met,\meas,\ll,\leq,\tau)$ is a Lorentzian length space --- then it is geodesic \cite[Rem.~3.9]{cavalletti2020}. Hence, we may and will always assume the geodesic property with no  restriction.
\end{remark}

In \autoref{Sub:Infinite dim}, we  introduce an ``infinite-dimensional'' analogue of the $\wTCD$ condition in the spirit of \cite{sturm2006a}.


Among the many properties of these $\TCD$ and $\wTCD$ conditions proven in \cite{cavalletti2020}, let us quote: the timelike Brunn--Minkowski inequality \cite[Prop.~3.4]{cavalletti2020}, the timelike Bishop-Gromov inequality \cite[Prop.~3.5]{cavalletti2020}, the timelike Bonnet--Myers inequality \cite[Prop.~3.6]{cavalletti2020}, consistency and scaling properties \cite[Lem.~3.10]{cavalletti2020}, or  nonsmooth Hawking--Penrose singularity theorems \cite[Thm.~5.6, Cor.~5.13]{cavalletti2020}. The limit of a sequence of measured Lorentzian geodesic $\smash{\TCD_p^e(K,N)}$ spaces converging weakly, in a certain sense, is (only) $\smash{\wTCD_p^e(K,N)}$ \cite[Thm.~3.12]{cavalletti2020}. Lastly, under the additional assumption of timelike nonbranching, the following hold. Given $\mu_0\in\Dom(\Ent_\meas)$ and $\mu_1\in\scrP(\mms)$ admitting an $\smash{\ell_p}$-optimal coupling in $\smash{\Pi_\ll(\mu_0,\mu_1)}$, we have uniqueness of $\smash{\ell_p}$-optimal couplings of $\mu_0$ to $\mu_1$ \cite[Thm.~3.19]{cavalletti2020}; similarly, they are connected by a unique timelike proper-time parametrized $\smash{\ell_p}$-geodesic \cite[Thm.~3.20]{cavalletti2020}.

\begin{remark} Except for the Bonnet--Myers inequality, all preceding results are in fact valid under the weaker  \cite[Prop.~3.11]{cavalletti2020} timelike measure contraction property from \autoref{Def:TMCP} below.
\end{remark}

\section{Existence of good geodesics}\label{Ch:Good}

In this chapter, we prove \autoref{Th:Linfty estimates}. We show every intermediate result under the most general assumptions, possibly beyond those of \autoref{Th:Linfty estimates}. Together, however, these reduce precisely to the hypotheses of our main result.

\subsection{Strong timelike $p$-dualizability along $\smash{\ell_p}$-geodesics} The main argument for the construction of the timelike proper-time parametrized $\smash{\ell_p}$-geodesic for \autoref{Th:Linfty estimates} is based on bisection by iteratively selecting appropriate midpoints of timelike proper-time parametrized $\smash{\ell_p}$-geodesics. To this aim, we have to ensure that strong timelike $p$-dualizability behaves well along these. 

The proof of the corresponding nontrivial \autoref{Le:Strong} is grounded on a private communication of the author with Fabio Cavalletti and Andrea Mondino.

\begin{lemma}\label{Le:Strong} Let $(\mms,\met,\ll,\leq,\tau)$ be a globally hyperbolic, regular Lorentzian geodesic space, $p\in (0,1]$, and $(\mu_0,\mu_1)\in\scrP(\mms)^2$. Moreover, let $\smash{\bdpi\in\OptTGeo_{\ell_p}^\tau(\mu_0,\mu_1)}$ and define $\mu_t := (\eval_t)_\push\bdpi\in\scrP(\mms)$, $t\in [0,1]$. If the pair $(\mu_0,\mu_1)$ is \textnormal{(}strongly\textnormal{)} timelike $p$-dualizable, so is $(\mu_s,\mu_t)$ for every $s,t\in [0,1]$ with $s<t$. 
\end{lemma}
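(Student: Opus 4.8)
The plan is to verify each defining ingredient of strong timelike $p$-dualizability for the pair $(\mu_s,\mu_t)$, using the intermediate measures of a fixed optimal geodesic plan $\bdpi$. Write $\bdpi_{s,t} := (\eval_s,\eval_t)_\push\bdpi$; by \autoref{Le:Plans}(iv) this is an $\smash{\ell_p}$-optimal coupling of $\mu_s$ and $\mu_t$. Let $\Gamma\subset\smash{\mms_\ll^2\cap(\supp\mu_0\times\supp\mu_1)}$ be the $l^p$-cyclically monotone Borel set witnessing strong timelike $p$-dualizability of $(\mu_0,\mu_1)$; in particular every $\smash{\ell_p}$-optimal coupling of $\mu_0$ and $\mu_1$ is concentrated on $\smash{\mms_\ll^2}$, so by \autoref{Re:Always exists} and the assumption $\bdpi\in\OptGeo_{\ell_p}(\mu_0,\mu_1)$ we may apply \autoref{Le:Plans}. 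First I would record that $\smash{(\eval_0,\eval_1)_\push\bdpi}$ is concentrated on $\Gamma$, hence $\bdpi$-a.e.\ curve $\gamma$ is timelike maximal with $\gamma_0\ll\gamma_1$; since $\gamma\in\TGeo^*(\mms)$ this forces $\tau(\gamma_s,\gamma_t)=(t-s)\,\tau(\gamma_0,\gamma_1)>0$, so $\bdpi_{s,t}$ is concentrated on $\smash{\mms_\ll^2}$. That already gives $\smash{\bdpi_{s,t}\in\Pi_\ll(\mu_s,\mu_t)}$.

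Next I would produce the Borel potentials. From the pair $(a,b)$ with $a\oplus b\in\Ell^1(\mms^2,\mu_0\otimes\mu_1)$ and $l^p\le a\oplus b$ on $\supp\mu_0\times\supp\mu_1$, one builds intermediate potentials by the standard Hopf--Lax / restriction-of-a-Kantorovich-potential procedure: using the reverse triangle inequality \eqref{Eq:Reverse triangle lp} together with concavity of $r\mapsto r^{1/p}$ (equivalently the superadditivity encoded in the conventions for $p\in(0,1)$), one checks that $a_s := $ the "forward" $l^p$-transform of $a$ along the first $s$-portion and $b_t := $ the corresponding "backward" transform of $b$ along the last $(1-t)$-portion satisfy $l^p\le a_s\oplus b_t$ on $\supp\mu_s\times\supp\mu_t$, with the $\Ell^1(\mu_s\otimes\mu_t)$-integrability inherited via the optimal plan. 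This is the point where I would lean on the duality package of \cite[Prop.~2.19, Prop.~2.21, Thm.~2.26]{cavalletti2020}; conceptually the potentials for $(\mu_s,\mu_t)$ are just the restrictions of those for $(\mu_0,\mu_1)$ to the time window $[s,t]$, re-scaled. This yields timelike $p$-dualizability of $(\mu_s,\mu_t)$ by $\bdpi_{s,t}$.

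For the strong part I would define $\Gamma_{s,t} := (\eval_s,\eval_t)(\{\gamma\in\TGeo^*(\mms) : (\gamma_0,\gamma_1)\in\Gamma\})$, or more carefully a Borel set between $\bdpi_{s,t}$ and the full $l^p$-cyclically monotone closure, and verify: (a) $\Gamma_{s,t}\subset\smash{\mms_\ll^2\cap(\supp\mu_s\times\supp\mu_t)}$ and is $l^p$-cyclically monotone — cyclical monotonicity being stable under the "middle-restriction" of maximal curves, again via \eqref{Eq:Reverse triangle lp} and the affine reparametrization of elements of $\TGeo^*(\mms)$; (b) $\bdpi_{s,t}[\Gamma_{s,t}]=1$; and (c) the "only if" direction, namely that any $\sigma\in\Pi_\leq(\mu_s,\mu_t)$ concentrated on $\Gamma_{s,t}$ is $\smash{\ell_p}$-optimal. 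Item (c) is the crux: given such a $\sigma$ one glues it with the fixed plan $\bdpi$ on the complementary windows $[0,s]$ and $[t,1]$ — using \autoref{Le:Plans}(vi) to disintegrate $\bdpi$ over its endpoints and re-compose — to obtain a competitor coupling of $\mu_0$ and $\mu_1$ concentrated on $\Gamma$, which by hypothesis is $\smash{\ell_p}$-optimal; then the reverse triangle inequality forces each of the three pieces, in particular $\sigma$, to be optimal. I expect this gluing/optimality-transfer argument to be the main obstacle, since it requires care with measurable selections of the disintegration of $\bdpi$ and with the fact that $\smash{\ell_p}$-optimality of a concatenation does not automatically descend to the pieces without the cyclical-monotonicity certificate. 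The non-branching-free setting is exactly why one must route everything through the fixed plan $\bdpi$ rather than through bare intermediate points.
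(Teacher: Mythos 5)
Your overall architecture — pushing $\Gamma$ forward to $\Gamma_{s,t}:=(\eval_s,\eval_t)\big[(\eval_0,\eval_1)^{-1}(\Gamma)\big]$ and routing everything through the fixed plan $\bdpi$ — matches the paper, but you have the gluing argument pointed in the wrong direction, and the cyclical-monotonicity step is underjustified.

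The definition of strong timelike $p$-dualizability requires that a coupling $\sigma\in\Pi_\leq(\mu_s,\mu_t)$ is $\ell_p$-optimal \emph{if and only if} $\sigma[\Gamma_{s,t}]=1$. Your item (b) only records $\bdpi_{s,t}[\Gamma_{s,t}]=1$ for the single coupling $\bdpi_{s,t}=(\eval_s,\eval_t)_\push\bdpi$, and in the branching setting there is no reason for that to be the only optimal coupling of $(\mu_s,\mu_t)$; the direction ``every optimal $\sigma$ is concentrated on $\Gamma_{s,t}$'' is not addressed at all. That is exactly where gluing is needed: given an arbitrary $\ell_p$-optimal $\sigma\in\Pi_\leq(\mu_s,\mu_t)$, one glues the disintegration of $\bdpi$ over $(\eval_s,\eval_t)$ with $\sigma$ (via the measurable selection of \cite[Thm.~2.10]{ambrosiogigli}) to manufacture some $\bdalpha\in\OptGeo_{\ell_p}(\mu_0,\mu_1)$ with $(\eval_s,\eval_t)_\push\bdalpha=\sigma$; then $(\eval_0,\eval_1)_\push\bdalpha$ is $\ell_p$-optimal, hence concentrated on $\Gamma$, hence $\sigma[\Gamma_{s,t}]=1$. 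Your item (c), by contrast, is free once $\Gamma_{s,t}$ is known to be $l^p$-cyclically monotone (that is precisely the content of \cite[Prop.~2.8]{cavalletti2020}), so the gluing effort you spend there is misplaced.

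Second, $l^p$-cyclical monotonicity of $\Gamma_{s,t}$ does not follow from a direct application of \eqref{Eq:Reverse tau} together with affine reparametrization: the reverse triangle inequality gives superadditivity of $\tau$ along causal concatenations, but for $p\in(0,1)$ this does not transfer to $\tau^p$, so the term-by-term comparison of $\sum_i\tau^p(\gamma^i_s,\gamma^{\sigma(i)}_t)$ with $\sum_i\tau^p(\gamma^i_0,\gamma^{\sigma(i)}_1)$ fails. The paper instead fixes finitely many $(x^i,y^i)\in\Gamma_{s,t}$ with interpolating $\gamma^i\in\TGeo^*(\mms)$ whose endpoints lie in $\Gamma$, observes (again \cite[Prop.~2.8]{cavalletti2020}) that the empirical plan on the $\gamma^i$ belongs to $\OptGeo_{\ell_p}$ of its marginals, passes to its $(s,t)$-slice, and reads off cyclical monotonicity of $\{(x^i,y^i)\}$ from optimality of that slice. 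Separately, the Hopf--Lax construction of interpolated potentials is more machinery than the paper invokes: plain timelike $p$-dualizability of $(\mu_s,\mu_t)$ is obtained directly from the facts that $(\eval_s,\eval_t)_\push\bdpi$ is chronological, $\ell_p$-optimal, and has finite positive cost $(t-s)\,\ell_p(\mu_0,\mu_1)$.
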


\begin{proof} We assume that $s,t\in(0,1)$, the case $\{s,t\}\cap\{0,1\}\neq \emptyset$ is analogous. Note that $(\eval_0,\eval_1)_\push\bdpi$ is concentrated on $\smash{\mms_\ll^2}$, and so is $\smash{(\eval_s,\eval_t)_\push\bdpi\in\Pi(\mu_s,\mu_t)}$. Since the latter is $\smash{\ell_p}$-optimal and the total cost
\begin{align*}
\ell_p(\mu_s,\mu_t) = (t-s)\,\ell_p(\mu_0,\mu_1)
\end{align*}
is positive and finite, the pair $(\mu_s,\mu_t)$ is timelike $p$-dualizable. 

To show strong timelike $p$-dualizability of $(\mu_s,\mu_t)$ if $(\mu_0,\mu_1)$ has this property, we have to construct an $l^p$-cyclically monotone Borel set $\Gamma_{s,t}\subset\smash{\mms_\ll^2\cap(\supp\mu_s\times\supp\mu_t)}$ such that $\pi[\Gamma_{s,t}]=1$ for every $\smash{\ell_p}$-optimal coupling $\smash{\pi\in\Pi_\leq(\mu_s,\mu_t)}$. To this aim, let $\Gamma\subset\smash{\mms_\ll^2\cap(\supp\mu_0\times\supp\mu_1)}$ be an $l^p$-cyclically monotone Borel  set on which every $\smash{\ell_p}$-optimal coupling belonging to $\Pi_\leq(\mu_0,\mu_1)$ is concentrated, and define
\begin{align*}
\Gamma_{s,t} := (\eval_s,\eval_t)\big[(\eval_0,\eval_1)^{-1}(\Gamma)\big].
\end{align*}
To show that $\Gamma_{s,t}$ is $l^p$-cyclically monotone, we  follow the proof of \cite[Lem.~4.4]{galaz2018}. Let $n\in\N$ and $(x^1, y^1),\dots,(x^n,y^n)\in \Gamma_{s,t}$, and select $\smash{\gamma^1,\dots,\gamma^n\in\TGeo^\tau(\mms)}$ with $\smash{(x^i,y^i) = (\gamma_s^i,\gamma_t^i)}$ for every $i\in\{1,\dots,n\}$. Since $\Gamma$ is $l^p$-cyclically monotone and since $\smash{\gamma^i\in\TGeo^\tau(\mms)}$  for every $i\in\{1,\dots,n\}$, the empirical measure $\bdsigma$ of $\smash{\gamma^1,\dots,\gamma^n}$ 
is a timelike $\smash{\ell_p}$-optimal geodesic plan interpolating its endpoints \cite[Prop.~2.8]{cavalletti2020}. Therefore, $(\eval_s,\eval_t)_\push\bdsigma$ is an $\smash{\ell_p}$-optimal coupling of its marginals, and applying \cite[Prop.~2.8]{cavalletti2020} again yields the $l^p$-cyclical monotonicity of
\begin{align*}
\supp (\eval_s,\eval_t)_\push\bdsigma = \bigcup_{i=1}^n \big\lbrace (\gamma_s^i,\gamma_t^i)\big\rbrace = \bigcup_{i=1}^n \big\lbrace (x^i,y^i)\big\rbrace.
\end{align*}

Given any $\smash{\ell_p}$-optimal coupling $\smash{\pi\in\Pi_\leq(\mu_s,\mu_t)}$, by gluing and a measurable se\-lection argument as in the proof of \cite[Thm.~2.10]{ambrosiogigli}, using that $\mu_s$ and $\mu_t$ lie on a timelike proper-time parametrized $\smash{\ell_p}$-geodesic, we find $\bdalpha\in\smash{\OptTGeo_{\ell_p}^\tau(\mu_0,\mu_1)}$  with $(\eval_s,\eval_t)_\push\bdalpha = \pi$. Noting that
\begin{align*}
\pi[\Gamma_{s,t}] = (\eval_s,\eval_t)_\push\bdalpha[\Gamma_{s,t}] = (\eval_0,\eval_1)_\push\bdalpha[\Gamma] = 1
\end{align*}
then terminates the proof.
\end{proof}



\subsection{Construction of a candidate}\label{Sub:Construction} In this section, we construct an appropriate timelike proper-time parametrized $\smash{\ell_p}$-geodesic $(\mu_t)_{t\in [0,1]}$ for which we verify in \autoref{Sub:Verification} and \autoref{Sub:Uniform density bounds} that it satisfies the goodness properties from \autoref{Def:Good}.

\begin{assumption}\label{Ass:Ass} From now on, until the end of this article, and  unless explicitly stated otherwise we assume $(\mms,\met,\meas,\ll,\leq,\tau)$ to be a causally closed, $\scrK$-globally hyperbolic, regular Lorentzian geodesic space.
\end{assumption}

Given $N\in (0,\infty)$, let $\scrU_N$ be as in \eqref{Eq:UN def}, and for $t\in (0,1)$ define the functional $\scrV_N^t \colon\scrP(\Cont([0,1];\mms)) \to [0,\infty]$ through
\begin{align*}
\scrV_N^t(\bdpi) := \scrU_N((\eval_t)_\push\bdpi).
\end{align*}
Except for \autoref{Sub:TMCP cond}, we  mostly work with the functional
\begin{align}\label{Eq:scrV}
\scrV_N := \scrV_N^{1/2}.
\end{align}

\begin{remark}  $\scrV_N$ only depends on a single slice of its argument, and one is tempted to follow the $\CD$-treatise  \cite{rajala2012a} more closely instead and consider the functional $\scrU_N$ on the set of $\ell_p$-midpoints of $\mu_0$ and $\mu_1$ similar to \cite[Ch.~3]{rajala2012a} or \cite[Sec.~3.2]{rajala2012b}. However, in our case it is more convenient to work with timelike $\smash{\ell_p}$-optimal geodesic plans. For instance, for $\smash{\bdpi\in\OptTGeo_{\ell_p}^\tau(\mu_0,\mu_1)}$, $\mu_0,\mu_1\in\scrP(\mms)$, the pairs $(\mu_0,(\eval_{1/2})_\push\bdpi)$ and $((\eval_{1/2})_\push\bdpi,\mu_1)$ inherit the dualizability and chronology  properties of $(\mu_0,\mu_1)$ --- needed e.g.~for  \autoref{Pr:Min 0}  below (and recall  \autoref{Le:Strong}) --- while this seems  unclear for general $\smash{\ell_p}$-mid\-points.
\end{remark}

\begin{lemma}\label{Le:Maximizer} Let $p,t\in (0,1)$, $N\in (0,\infty)$, as well as $\mu_0,\mu_1\in\STD_p(\mms)\cap\scrP_\comp(\mms)^2$. Then $\smash{\scrV_N^t}$ has a maximizer in $\smash{\OptTGeo_{\ell_p}^\tau(\mu_0,\mu_1)}$ with finite value. Moreover, if $\smash{\wTCD_p^e(K,N)}$ holds for some $K\in\R$ and $N\in (0,\infty)$, and if the pair $(\mu_0,\mu_1)\in(\scrP_\comp(\mms)\cap\Dom(\Ent_\meas))^2$ is strongly timelike $p$-dualizable, for every  maximizer $\bdpi\in\smash{\OptTGeo_{\ell_p}^\tau(\mu_0,\mu_1)}$ of $\smash{\scrV_N^t}$ the measure $(\eval_t)_\push\bdpi\in\scrP_\comp(\mms)$ has finite entropy; in particular $(\eval_t)_\push\bdpi\ll\meas$.
\end{lemma}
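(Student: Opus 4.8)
\emph{Strategy.} The two assertions are handled separately: the first is a soft compactness--semicontinuity argument, the second invokes $\wTCD_p^e(K,N)$ through a representing plan.

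\emph{Existence of a maximizer with finite value.} Under the stated hypotheses $\OptGeo_{\ell_p}(\mu_0,\mu_1)$ is nonempty and weakly compact by \autoref{Le:Plans}. Put $C:=J(\mu_0,\mu_1)$, which is compact by $\scrK$-global hyperbolicity, so $\meas[C]<\infty$ since $\meas$ is Radon. Any $\bdpi\in\OptGeo_{\ell_p}(\mu_0,\mu_1)$ is concentrated on causal curves from $\supp\mu_0$ to $\supp\mu_1$, hence $\supp(\eval_t)_\push\bdpi\subset C$ for every $t\in[0,1]$; Jensen's inequality then gives $\Ent_\meas((\eval_t)_\push\bdpi)\geq-\log\meas[C]$, whence
\begin{align*}
\scrV_N^t(\bdpi)=\rme^{-\Ent_\meas((\eval_t)_\push\bdpi)/N}\leq\meas[C]^{1/N}<\infty.
\end{align*}
Thus $\scrV_N^t$ is bounded above on $\OptGeo_{\ell_p}(\mu_0,\mu_1)$. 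Moreover $\bdpi\mapsto(\eval_t)_\push\bdpi$ is weakly continuous (as $\eval_t$ is continuous on $\TGeo^*(\mms)$, by regularity, cf.~\autoref{Re:Regularity cond}), and $\Ent_\meas$ is weakly lower semicontinuous along sequences of measures supported in the fixed set $C$ of finite $\meas$-mass; hence $\scrV_N^t$ is weakly upper semicontinuous on $\OptGeo_{\ell_p}(\mu_0,\mu_1)$. An upper semicontinuous, bounded-above functional on a nonempty weakly compact set attains its supremum, which by the display above is finite.

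\emph{Finiteness of the entropy at a maximizer.} Assume in addition $\wTCD_p^e(K,N)$, let $(\mu_0,\mu_1)\in\bigl(\STD_p(\mms)\cap\scrP_\comp(\mms)^2\bigr)\cap\Dom(\Ent_\meas)^2$, and let $\bdpi\in\OptGeo_{\ell_p}(\mu_0,\mu_1)$ maximize $\scrV_N^t$. Strong timelike $p$-dualizability provides a chronological $\ell_p$-optimal coupling of $\mu_0$ and $\mu_1$, so $\ell_p(\mu_0,\mu_1)\in(0,\infty)$ --- positive because that coupling lives on $\mms_\ll^2=\{\tau>0\}$, finite by \cite[Prop.~2.3]{cavalletti2020}; and $\scrU_N(\mu_i)=\rme^{-\Ent_\meas(\mu_i)/N}\in(0,\infty)$ since $\mu_i\in\scrP_\comp(\mms)\cap\Dom(\Ent_\meas)$. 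Applying the $(K,N,p)$-convexity of $\scrU_N$ relative to $\STD_p(\mms)\cap\scrP_\comp(\mms)^2$ to $(\mu_0,\mu_1)$ yields an $\ell_p$-optimal $\pi\in\Pi_\ll(\mu_0,\mu_1)$ and an $\ell_p$-geodesic $(\nu_r)_{r\in[0,1]}$ from $\mu_0$ to $\mu_1$ with $\scrU_N(\nu_r)\geq\sigma_{K,N}^{(1-r)}(T_\pi)\,\scrU_N(\mu_0)+\sigma_{K,N}^{(r)}(T_\pi)\,\scrU_N(\mu_1)$ for every $r\in[0,1]$. By \autoref{Re:Always exists} every $\ell_p$-optimal coupling of $\mu_0$ and $\mu_1$ lives on $\mms_\ll^2$, hence is chronological, so item \ref{La:Interpolation geo} of \autoref{Le:Plans} furnishes $\bdpi'\in\OptGeo_{\ell_p}(\mu_0,\mu_1)$ with $(\eval_r)_\push\bdpi'=\nu_r$ for all $r$. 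Maximality of $\bdpi$ then forces
\begin{align*}
\scrV_N^t(\bdpi)\geq\scrV_N^t(\bdpi')=\scrU_N(\nu_t)\geq\sigma_{K,N}^{(1-t)}(T_\pi)\,\scrU_N(\mu_0)+\sigma_{K,N}^{(t)}(T_\pi)\,\scrU_N(\mu_1).
\end{align*}

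It remains to see that the right-hand side is strictly positive. Here $T_\pi\leq D:=\sup\tau(\supp\mu_0\times\supp\mu_1)<\infty$, because $\pi$ is a probability measure carried by the compact set $\supp\mu_0\times\supp\mu_1$ on which $\tau$ is bounded under $\scrK$-global hyperbolicity. For $t\in(0,1)$ and finite $\vartheta=T_\pi$, inspection of the four cases in the definition shows $\sigma_{K,N}^{(t)}(\vartheta)>0$: it equals $t$ when $K\vartheta^2=0$, a positive ratio of hyperbolic sines when $K\vartheta^2<0$, a positive ratio of sines on $(0,\pi)$ when $0<K\vartheta^2<N\pi^2$, and $+\infty$ otherwise. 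Since $\scrU_N(\mu_1)>0$, this gives $\scrV_N^t(\bdpi)>0$, i.e.~$\Ent_\meas((\eval_t)_\push\bdpi)<\infty$; combined with $\Ent_\meas((\eval_t)_\push\bdpi)\geq-\log\meas[C]>-\infty$ we conclude that the entropy is finite, and hence $(\eval_t)_\push\bdpi\ll\meas$ by the definition of $\Ent_\meas$.

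\emph{Main obstacle.} The substance lies in the second part, namely in converting $\wTCD_p^e(K,N)$ into a genuine \emph{lower} bound for $\scrV_N^t$ at the maximizer. The mechanism is that the curvature-dimension condition only guarantees \emph{some} good $\ell_p$-geodesic $(\nu_r)$, whose time-$t$ slice has strictly positive $\scrU_N$; since $\bdpi$ was chosen to maximize $\scrV_N^t$ over \emph{all} optimal geodesic plans and $(\nu_r)$ is represented by one such plan --- which is precisely where strong $p$-dualizability (\autoref{Re:Always exists}) together with \autoref{Le:Plans} is used, and explains why only strongly, not merely, timelike $p$-dualizable endpoints are admitted, cf.~\autoref{Re:No weak?} --- the slice $(\eval_t)_\push\bdpi$ inherits that positivity. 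The only further care is to exclude the degenerate value $T_\pi=\infty$ (at which the distortion coefficient can vanish for $K<0$), which is dispatched above via boundedness of $\tau$ on $\supp\mu_0\times\supp\mu_1$, cf.~also \autoref{Re:Pathological}.
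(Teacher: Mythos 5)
Your proof is correct and follows essentially the same approach as the paper's: boundedness of $\scrV_N^t$ via Jensen's inequality on $J(\mu_0,\mu_1)$, weak upper semicontinuity plus weak compactness of $\OptGeo_{\ell_p}(\mu_0,\mu_1)$ (\autoref{Le:Plans}) for existence of a maximizer by the direct method, and then the $\wTCD_p^e(K,N)$ geodesic (passed through \autoref{Le:Plans}\ref{La:Interpolation geo} to a competitor plan $\bdpi'$) to force $\scrV_N^t(\bdpi)>0$. You simply spell out a few points the paper leaves implicit --- the finiteness of $T_\pi$, the case analysis showing $\sigma_{K,N}^{(t)}(T_\pi)>0$, and why \autoref{Re:Always exists} is what lets one pass from the $\wTCD$ geodesic to a plan in $\OptGeo_{\ell_p}(\mu_0,\mu_1)$ --- so the substance is identical.
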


\begin{proof} First, recall from \autoref{Sub:Lorentz} that $\smash{\OptTGeo_{\ell_p}^\tau(\mu_0,\mu_1)\neq\emptyset}$. As $\supp(\eval_t)_\push\bdpi = \{\gamma_t : \gamma\in\supp\bdpi\} \subset J(\mu_0,\mu_1)$ for every $\bdpi\in \smash{\OptTGeo_{\ell_p}^\tau(\mu_0,\mu_1)}$,  we have
\begin{align*}
\scrV_N^t(\bdpi) \leq \meas\big[J(\mu_0,\mu_1)\big]^{1/N}
\end{align*}
by Jensen's inequality. Thus, $\smash{\scrV_N^t}$ is bounded on $\smash{\OptTGeo_{\ell_p}^\tau(\mu_0,\mu_1)}$.  

Moreover, $\smash{\scrV_N^t}$ is weakly upper semicontinuous on $\smash{\OptTGeo_{\ell_p}^\tau(\mu_0,\mu_1)}$. Since the latter is weakly compact by \autoref{Le:Plans}, the existence of a maximizer for $\smash{\scrV_N^t}$ follows from the direct method.

The last claim follows by taking the $t$-slice of a timelike $\smash{\ell_p}$-optimal geodesic plan representing a timelike proper-time parametrized $\smash{\ell_p}$-geodesic from $\mu_0$ to $\mu_1$ witnessing the $(K,N,p)$-convexity inequality of $\scrU_N$ as a competitor. Hence, the maximum of $\scrV_N$ is strictly positive.
\end{proof}

We construct $(\mu_t)_{t\in [0,1]}$ as follows. Let the pair  $\mu_0,\mu_1\in\scrP_\comp(\mms)\cap\Dom(\Ent_\meas)$ be strongly time\-like $p$-dualizable. Initially, set $\mu_{1/2} := (\eval_{1/2})_\push\bdpi_1\in\scrP_\comp(\mms)\cap\Dom(\Ent_\meas)$, where $\bdpi_1\in\smash{\OptTGeo_{\ell_p}^\tau(\mu_0,\mu_1)}$ is a maximizer of $\scrV_N$ according to \autoref{Le:Maximizer}. 

By induction, suppose that for a given $n\in\N$ we have defined $\mu_{k2^{-n}}\in\scrP_\comp(\mms)\in\Dom(\Ent_\meas)$ for every $k\in\{0,\dots,2^n\}$. For every odd $\smash{k\in \{1,\dots,2^{n+1}-1\}}$, by construction the pair $(\mu_{(k-1)2^{-n-1}},\mu_{(k+1)2^{-n-1}})\in(\scrP_\comp(\mms)\cap\Dom(\Ent_\meas))^2$  is strongly timelike $p$-dualizable thanks to \autoref{Le:Strong}. Let
\begin{align*}
\bdpi^k_{n+1}\in\OptTGeo_{\ell_p}^\tau(\mu_{(k-1)2^{-n-1},(k+1)2^{-n-1}})
\end{align*}
maximize $\scrV_N$ on the latter set, cf.~\autoref{Le:Maximizer}. We glue together these timelike $\smash{\ell_p}$-optimal geodesic plans $\smash{\bdpi_{n+1}^0,\dots,\bdpi_{n+1}^{2^n}}$ and obtain $\smash{\bdpi_{n+1}\in\OptTGeo_{\ell_p}^\tau(\mu_0,\mu_1)}$. Inductively, we thus get a sequence $\smash{(\bdpi_n)_{n\in\N}}$ in $\smash{\OptTGeo_{\ell_p}^\tau(\mu_0,\mu_1)}$ which, by  \autoref{Le:Plans}, has a weak limit $\smash{\bdpi\in\OptTGeo_{\ell_p}^\tau(\mu_0,\mu_1)}$ along a nonrelabeled subsequence. In turn, the plan $\bdpi$ induces a timelike proper-time parametrized $\smash{\ell_p}$-geodesic $(\mu_t)_{t\in[0,1]}$ defined by
\begin{align*}
\mu_t = (\eval_t)_\push\bdpi.
\end{align*}

In the rest of this chapter, we refer to $(\mu_t)_{t\in[0,1]}$  as the \emph{candidate} (but we may use the notation $(\mu_t)_{t\in[0,1]}$ at other occasions as well, whenever convenient).

Let $\boldsymbol{\mathrm{D}}\subset\Q$ henceforth denote the set of dyadic numbers.

\subsection{$(K,N,p)$-convexity inequality}\label{Sub:Verification} Now we start proving that the candidate $(\mu_t)_{t\in[0,1]}$ is good according to \autoref{Def:Good}: it has to obey the $(K,N,p)$-convexity inequality for $\scrU_N$ defining $\smash{\wTCD_p^e(K,N)}$ for  $p\in (0,1)$, $K\in\R$, and $N\in (0,\infty)$, and  $\Vert \rho_t\Vert_{\Ell^\infty(\mms,\meas)}$ has to be uniformly bounded in $t\in[0,1]$ subject to the decomposition $\mu_t = \rho_t\,\meas$. (Recall that $\mu_t\in\Dom(\Ent_\meas)$ for every $t\in[0,1]$ by \autoref{Le:Maximizer}, weak lower semicontinuity of $\Ent_\meas$, and Jensen's inequality.) We start with the former.

The subsequent identities taken from \cite[Lem. 3.2]{rajala2012a} are crucial in the proof of the main  \autoref{Pr:Semic}. 

\begin{lemma}\label{Le:Distortion} Let $K\in\R$ and $N\in (0,\infty)$, and let  $t_1,t_2,t_3\in[0,1]$ with $t_1 < t_2$ as well as $\vartheta\geq 0$. Then
\begin{align*}
\sigma_{K,N}^{((1-t_3)t_1 + t_3t_2)}(\vartheta) &= \sigma_{K,N}^{(1-t_3)}((t_2-t_1)\vartheta)\,\sigma_{K,N}^{(t_1)}(\vartheta)\\
&\qquad\qquad  + \sigma_{K,N}^{(t_3)}((t_2-t_1)\vartheta)\,\sigma_{K,N}^{(t_2)}(\vartheta),\\
\sigma_{K,N}^{(1-(1-t_3)t_1-t_3t_2)}(\vartheta) &= \sigma_{K,N}^{(1-t_3)}((t_2-t_1)\vartheta)\,\sigma_{K,N}^{(1-t_1)}(\vartheta)\\
&\qquad\qquad + \sigma_{K,N}^{(t_3)}((t_2-t_1)\vartheta)\,\sigma_{K,N}^{(1-t_2)}(\vartheta).
\end{align*}
\end{lemma}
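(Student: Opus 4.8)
The plan is to reduce both identities to a single elementary addition formula for a ``model sine'' function. I would first note that the second identity is the first one in disguise: with $s_1 := 1-t_2$, $s_2 := 1-t_1$, $s_3 := 1-t_3$ one has $s_1 < s_2$, $s_2 - s_1 = t_2 - t_1$, and $(1-s_3)\,s_1 + s_3\,s_2 = 1 - \bigl((1-t_3)\,t_1 + t_3\,t_2\bigr)$, while $\sigma_{K,N}^{(s_1)} = \sigma_{K,N}^{(1-t_2)}$, $\sigma_{K,N}^{(s_2)} = \sigma_{K,N}^{(1-t_1)}$, $\sigma_{K,N}^{(s_3)} = \sigma_{K,N}^{(1-t_3)}$ and $\sigma_{K,N}^{(1-s_3)} = \sigma_{K,N}^{(t_3)}$. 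Hence the first identity applied to $(s_1,s_2,s_3)$ produces the second, and it suffices to prove the first.

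Next I would set $\kappa := K/N$ and introduce $\mathfrak{s}_\kappa$, the solution of $\mathfrak{s}_\kappa'' + \kappa\,\mathfrak{s}_\kappa = 0$ with $\mathfrak{s}_\kappa(0) = 0$, $\mathfrak{s}_\kappa'(0) = 1$, together with $\mathfrak{c}_\kappa := \mathfrak{s}_\kappa'$; concretely $\mathfrak{s}_\kappa(\theta)$ equals $\sin(\sqrt\kappa\,\theta)/\sqrt\kappa$, $\theta$, or $\sinh(\sqrt{-\kappa}\,\theta)/\sqrt{-\kappa}$ according to the sign of $\kappa$, and $\mathfrak{c}_\kappa(\theta)$ equals $\cos(\sqrt\kappa\,\theta)$, $1$, or $\cosh(\sqrt{-\kappa}\,\theta)$. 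Straight from the definition, whenever $\vartheta > 0$ and $\sqrt\kappa\,\vartheta < \pi$ (the latter being vacuous for $K\le 0$) one has $\sigma_{K,N}^{(r)}(\vartheta) = \mathfrak{s}_\kappa(r\vartheta)/\mathfrak{s}_\kappa(\vartheta)$ for all $r\in[0,1]$, and, since $t_2 - t_1 \le 1$, the same formula applies to $\sigma_{K,N}^{(\cdot)}\bigl((t_2-t_1)\vartheta\bigr)$. Working in this regime and abbreviating $u := t_1\vartheta$, $v := (t_2-t_1)\vartheta$, $w := t_3\,(t_2-t_1)\vartheta$ — so that $u+w = \bigl((1-t_3)t_1 + t_3 t_2\bigr)\vartheta$, $v-w = (1-t_3)(t_2-t_1)\vartheta$, $u+v = t_2\vartheta$, and $0\le w\le v\le u+v\le\vartheta$ — I would multiply the first identity through by the positive quantity $\mathfrak{s}_\kappa(\vartheta)\,\mathfrak{s}_\kappa(v)$, which turns it into
\begin{align*}
\mathfrak{s}_\kappa(u+w)\,\mathfrak{s}_\kappa(v) = \mathfrak{s}_\kappa(v-w)\,\mathfrak{s}_\kappa(u) + \mathfrak{s}_\kappa(w)\,\mathfrak{s}_\kappa(u+v).
\end{align*}

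This last relation I would derive from the addition formulas $\mathfrak{s}_\kappa(x\pm y) = \mathfrak{s}_\kappa(x)\,\mathfrak{c}_\kappa(y) \pm \mathfrak{c}_\kappa(x)\,\mathfrak{s}_\kappa(y)$, classical for $\kappa\ne 0$ and trivial for $\kappa = 0$: expanding $\mathfrak{s}_\kappa(v-w)$ and $\mathfrak{s}_\kappa(u+v)$ on the right, the two terms involving $\mathfrak{s}_\kappa(u)\,\mathfrak{c}_\kappa(v)\,\mathfrak{s}_\kappa(w)$ carry opposite signs and cancel, leaving $\mathfrak{s}_\kappa(v)\bigl[\mathfrak{s}_\kappa(u)\,\mathfrak{c}_\kappa(w) + \mathfrak{c}_\kappa(u)\,\mathfrak{s}_\kappa(w)\bigr] = \mathfrak{s}_\kappa(v)\,\mathfrak{s}_\kappa(u+w)$. (Equivalently, for fixed $u,v$ the difference of the two sides solves $f'' + \kappa f = 0$ and vanishes to first order at $w=0$, hence vanishes identically.)

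It remains to dispose of the degenerate cases, and this is the only part I expect to need genuine care. If $\vartheta = 0$ then $\sigma_{K,N}^{(r)}(0) = r$, so both identities reduce to the tautology $(1-t_3)t_1 + t_3 t_2 = (1-t_3)t_1 + t_3 t_2$. If $K > 0$ and $K\vartheta^2 \ge N\pi^2$, then every distortion coefficient whose second argument is $\vartheta$ equals $\infty$ by definition — in particular the left-hand side of the first identity; on the right-hand side one uses that $\sigma_{K,N}^{(1-t_3)}\bigl((t_2-t_1)\vartheta\bigr)$ vanishes only if $t_3 = 1$ and $\sigma_{K,N}^{(t_3)}\bigl((t_2-t_1)\vartheta\bigr)$ only if $t_3 = 0$ (and $t_1 < t_2$ rules out $(t_2-t_1)\vartheta = 0$), so at least one summand is a strictly positive multiple of $\infty$ and the right-hand side is again $\infty$, by the convention $0\cdot\infty := 0$. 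Assembling the non-degenerate computation with these case checks completes the argument; the substantive content is the one-line addition-formula cancellation above, everything else being bookkeeping with the conventions for $\pm\infty$ and the correct relabelling $t_i\leftrightarrow 1-t_i$.
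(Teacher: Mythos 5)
Your proof is correct. Note that the paper does not give its own argument for this lemma — it simply quotes \cite[Lem.~3.2]{rajala2012a} — so there is nothing in the text to compare against; what you have supplied is a full derivation of the cited result. Your reduction of the second identity to the first via $t_i \mapsto 1-t_i$ is clean and correct, and the core step — multiplying through by $\mathfrak{s}_\kappa(\vartheta)\,\mathfrak{s}_\kappa(v)$ to reduce everything to the addition formula $\mathfrak{s}_\kappa(x\pm y)=\mathfrak{s}_\kappa(x)\,\mathfrak{c}_\kappa(y)\pm\mathfrak{c}_\kappa(x)\,\mathfrak{s}_\kappa(y)$ for the solution of $\mathfrak{s}''+\kappa\,\mathfrak{s}=0$ — is the standard and natural route, and is essentially what one finds in Rajala's proof. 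Two small remarks: (i) the paper's fourth case in the definition of $\sigma_{K,N}^{(r)}$ reads ``$K\vartheta^2\geq\pi^2$'' but is evidently a typo for ``$K\vartheta^2\geq N\pi^2$'', which you correctly interpreted; (ii) your bound $t_2-t_1\leq 1$, used to ensure that $\sigma_{K,N}^{(\cdot)}\bigl((t_2-t_1)\vartheta\bigr)$ stays in the finite regime whenever $\sigma_{K,N}^{(\cdot)}(\vartheta)$ does, is exactly the observation that makes the single addition-formula computation cover all of $t_1<t_2$, and your degenerate-case bookkeeping (the $\vartheta=0$ tautology and the $K\vartheta^2\geq N\pi^2$ case where at least one of $\sigma_{K,N}^{(t_3)},\sigma_{K,N}^{(1-t_3)}$ evaluated at $(t_2-t_1)\vartheta$ is strictly positive) is careful and complete under the paper's stated conventions.
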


\begin{remark} Recall that analogous formulas are not valid for the distortion coefficients used to set up the finite-dimensional $\CD$ condition for metric measure spaces \cite[Def.~1.3]{sturm2006b}. Related to this, \autoref{Le:Distortion} is  the main reason for the local-to-global property of its reduced counterpart \cite[Thm.~5.1]{bacher2010}, see also \cite[Thm.~3.45]{braun+}.
\end{remark}

\begin{proposition}\label{Pr:Semic} Assume $\smash{\wTCD_p^e(K,N)}$ for some $p\in (0,1)$, $K\in\R$ and $N\in (0,\infty)$. Let  $(\mu_0,\mu_1)\in\scrP_\comp(\mms)\cap\Dom(\Ent_\meas)$ be strongly timelike $p$-dualizable. Then there exists some timelike $p$-dualizing coupling $\smash{\pi\in\Pi_\ll(\mu_0,\mu_1)}$ such that the candidate $(\mu_t)_{t\in [0,1]}$ associated to $\mu_0$ and $\mu_1$ from \autoref{Sub:Construction} obeys, for every $t\in[0,1]$,
\begin{align*}
\scrU_N(\mu_t) \geq \sigma_{K,N}^{(1-t)}(T_\pi)\,\scrU_N(\mu_0) + \sigma_{K,N}^{(t)}(T_\pi)\,\scrU_N(\mu_1).
\end{align*}
\end{proposition}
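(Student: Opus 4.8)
The plan is a bisection argument along the candidate $(\mu_t)_{t\in[0,1]}$ produced in \autoref{Sub:Construction}: I would establish the asserted inequality first for every dyadic $t$ and then let it propagate to all $t\in[0,1]$ by weak stability. The multiplicative identities of \autoref{Le:Distortion} are the mechanism that turns the time-$\tfrac12$ inequalities available at each bisection step into the global one, and it is they that dictate which coupling $\pi$ to carry through the whole construction.

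First I would fix $\pi$ and dispatch the base case. Since $\mu_0,\mu_1\in\scrP_\comp(\mms)\cap\Dom(\Ent_\meas)$ form a strongly timelike $p$-dualizable pair, $\smash{\wTCD_p^e(K,N)}$ applies to $(\mu_0,\mu_1)$ and returns an $\smash{\ell_p}$-optimal $\pi\in\Pi_\ll(\mu_0,\mu_1)$ — which is timelike $p$-dualizing and will be the coupling in the conclusion — together with a witnessing $\smash{\ell_p}$-geodesic $(\nu_t)_{t\in[0,1]}$; put $\vartheta:=T_\pi$, the case $\vartheta=\infty$ being disposed of as in \autoref{Re:Pathological}. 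By \autoref{Le:Plans}, $(\nu_t)$ is represented by an element of $\smash{\OptGeo_{\ell_p}(\mu_0,\mu_1)}$, so since $\mu_{1/2}$ was selected via \autoref{Le:Maximizer} to maximize $\scrV_N$ over that set,
\begin{align*}
\scrU_N(\mu_{1/2}) \;\geq\; \scrU_N(\nu_{1/2}) \;\geq\; \sigma_{K,N}^{(1/2)}(\vartheta)\,\big[\scrU_N(\mu_0)+\scrU_N(\mu_1)\big].
\end{align*}
I then claim, by induction on $n$, that $\scrU_N(\mu_t)\geq\sigma_{K,N}^{(1-t)}(\vartheta)\,\scrU_N(\mu_0)+\sigma_{K,N}^{(t)}(\vartheta)\,\scrU_N(\mu_1)$ for every $t=k2^{-n}$; the endpoints are trivial since $\sigma_{K,N}^{(0)}\equiv 0$ and $\sigma_{K,N}^{(1)}\equiv 1$. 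For the passage from level $n-1$ to $n$, fix odd $k$, write $t=k2^{-n}$ and $t_\pm=(k\pm1)2^{-n}$, so $t=\tfrac12 t_-+\tfrac12 t_+$; the pair $(\mu_{t_-},\mu_{t_+})$ is strongly timelike $p$-dualizable by \autoref{Le:Strong} and lies in $\scrP_\comp(\mms)\cap\Dom(\Ent_\meas)$, and by construction $\mu_t$ is the $\tfrac12$-slice of a $\scrV_N$-maximizing plan in $\smash{\OptGeo_{\ell_p}(\mu_{t_-},\mu_{t_+})}$. Repeating the base-case reasoning at this sub-scale yields
\begin{align*}
\scrU_N(\mu_t) \;\geq\; \sigma_{K,N}^{(1/2)}\!\big((t_+-t_-)\vartheta\big)\,\big[\scrU_N(\mu_{t_-})+\scrU_N(\mu_{t_+})\big],
\end{align*}
and substituting the inductive hypotheses for $\mu_{t_\pm}$ and applying both identities of \autoref{Le:Distortion} with $t_3=\tfrac12$, $t_1=t_-$, $t_2=t_+$ and argument $\vartheta$ collapses the sum of four $\sigma$-products into exactly $\sigma_{K,N}^{(1-t)}(\vartheta)\,\scrU_N(\mu_0)+\sigma_{K,N}^{(t)}(\vartheta)\,\scrU_N(\mu_1)$, closing the induction.

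To reach arbitrary $t\in[0,1]$, pick dyadics $t_n\to t$; along the subsequence produced in the proof of \autoref{Cor:Ex geodesic} one has $\mu_{t_n}\to\mu_t$ weakly with all measures supported in the compact set $J(\mu_0,\mu_1)$ of finite $\meas$-mass, so $\Ent_\meas$ is weakly lower semicontinuous along them, $\scrU_N$ is weakly upper semicontinuous, and together with continuity of $r\mapsto\sigma_{K,N}^{(r)}(\vartheta)$ and the finiteness and positivity of $\scrU_N(\mu_0),\scrU_N(\mu_1)$ the dyadic inequalities pass to the limit.

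The delicate point — what I expect to be the main obstacle — lies in ``repeating the base-case reasoning at this sub-scale'': $\smash{\wTCD_p^e(K,N)}$ applied to $(\mu_{t_-},\mu_{t_+})$ delivers the time-$\tfrac12$ inequality with argument $T_{\pi'}$ for \emph{some} $\smash{\ell_p}$-optimal coupling $\pi'$ of that pair, and \autoref{Le:Distortion} only applies with the fixed $\vartheta$ if $T_{\pi'}=(t_+-t_-)\vartheta$. In the metric ($p=2$) case this is automatic, because $T_{\pi'}=W_2(\mu_{t_-},\mu_{t_+})=(t_+-t_-)W_2(\mu_0,\mu_1)$ for every optimal coupling; but for $p\in(0,1)$ the transport cost is an $\Ell^p$-norm of $\tau$, distinct $\smash{\ell_p}$-optimal couplings may carry distinct values of $T$, and no nonbranching — hence no uniqueness of $\smash{\ell_p}$-optimal couplings or geodesics — is assumed. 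The way around it I would pursue is to never accept an arbitrary coupling: realize $\pi$ by an $\smash{\ell_p}$-optimal geodesic plan $\bdpi$, taken via \autoref{Le:Plans} to represent the completed candidate, and use that along any $\gamma\in\TGeo^*(\mms)$ one has $\tau(\gamma_{t_-},\gamma_{t_+})=(t_+-t_-)\,\tau(\gamma_0,\gamma_1)$, so that $(\eval_{t_-},\eval_{t_+})_\push\bdpi$ is an $\smash{\ell_p}$-optimal coupling of $(\mu_{t_-},\mu_{t_+})$ with $T$-value precisely $(t_+-t_-)\vartheta$; one then runs the $\scrV_N$-maximality comparison against geodesic plans whose endpoint coupling is pinned to this restriction, and checks that the plans chosen at the $2^n$ nodes of level $n$ can all be taken as restrictions of one global plan. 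Getting this bookkeeping consistent is the real work; the remaining ingredients — measurable gluing of plans, weak compactness of $\smash{\OptGeo_{\ell_p}}$, lower semicontinuity of $\Ent_\meas$ — are routine given the results already quoted.
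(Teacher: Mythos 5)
Your overall architecture — bisection, $\scrV_N$-maximality at each dyadic node, \autoref{Le:Distortion} to collapse the four-term sum, and limit passage by weak upper semicontinuity of $\scrU_N$ on a compact window — matches the paper's, and you correctly identify the crux: $\smash{\wTCD_p^e(K,N)}$ hands back \emph{some} $\smash{\ell_p}$-optimal coupling $\omega$ of $(\mu_{t_-},\mu_{t_+})$, its $T$-value $T_\omega$ need not equal $(t_+-t_-)\vartheta$ when uniqueness of optimal couplings is absent, and \autoref{Le:Distortion} requires all four $\sigma$-factors to be evaluated at scalings of one fixed $\vartheta$. But the fix you propose does not work. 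You suggest ``pinning'' by running the maximality comparison only against plans whose $(t_-,t_+)$-marginal is $(\eval_{t_-},\eval_{t_+})_\push\bdpi$ for a global plan $\bdpi$. The problem is that the coupling and geodesic in \autoref{Def:Convex} are \emph{existentially quantified}: the condition gives you no control over which $\omega$ appears, so you cannot force $\omega$ to be a restriction of a plan you chose in advance, and restricting the competitor class in the $\scrV_N$-maximization does nothing to constrain what $\wTCD$ returns. There is also no reason to expect a single global plan whose restrictions at all $2^{n+1}-1$ nodes of level $n+1$ simultaneously coincide with the $2^n$ couplings produced by $\wTCD$ at that level. Your ``the bookkeeping is the real work'' is thus pointing at a genuine impossibility, not a deferred computation.

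The paper's resolution goes in the opposite direction: it does \emph{not} try to fix a coupling in advance, but lets it vary with $n$ and controls it a posteriori. Each coupling $\omega_{n+1}^k$ returned by $\wTCD$ at the $(n+1)$-st level is lifted (via \autoref{Cor:Ex geodesic} and \autoref{Le:Plans} and a gluing argument) to a \emph{global} plan $\bdalpha_{n+1}^k\in\OptGeo_{\ell_p}(\mu_0,\mu_1)$, and since such plans are concentrated on $\TGeo^*(\mms)$ one has the exact scaling $T_{\omega_{n+1}^k}=2^{-n}T_{\pi_{n+1}^k}$ with $\pi_{n+1}^k:=(\eval_0,\eval_1)_\push\bdalpha_{n+1}^k$. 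One then sets $\pi_{n+1}$ to be the coupling among $\{\pi_n,\pi_{n+1}^1,\dots,\pi_{n+1}^{2^{n+1}-1}\}$ that maximizes (resp.\ minimizes, if $K\geq0$) $T_\pi$, and exploits the monotonicity of $\vartheta\mapsto\sigma_{K,N}^{(r)}(\vartheta)$ to trade every $T_{\pi_n}$ and $T_{\pi_{n+1}^k}$ for the single value $T_{\pi_{n+1}}$ before \autoref{Le:Distortion} is applied. Finally, the sequence $(\pi_n)_{n\in\N}$ lives in the tight family $\Pi(\mu_0,\mu_1)$ on the compact product $\supp\mu_0\times\supp\mu_1$, so by Prokhorov and \autoref{Le:Convergence lp} a subsequence converges weakly to an $\smash{\ell_p}$-optimal $\pi\in\Pi_\ll(\mu_0,\mu_1)$ with $T_{\pi_n}\to T_\pi$ (here one uses continuity and boundedness of $\tau$ on the compact set, and that every $\smash{\ell_p}$-optimal coupling of a strongly timelike $p$-dualizable pair is chronological). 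Passing to the limit in the $n$-level inequality then gives the statement for all dyadic $t$ with the single coupling $\pi$, and the limit to non-dyadic $t$ is handled as you describe. The moral: do not try to lock down the coupling; instead allow a level-dependent worst case and recover a single coupling only at the end by compactness.
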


\begin{proof} Given the above candidate, we have to construct $\pi$. To this aim, we loosely follow \cite[Sec.~3.1]{rajala2012a}, but have to perform a subtle modification. The curvature-dimension condition in \cite[Def.~1.1]{rajala2012a} might be different from its entropic counterpart for metric measure spaces which are not essentially nonbran\-ching \cite[Def.~3.1, Thm.~3.12]{erbar2015}, and the $\smash{\TCD_p^e(K,N)}$ condition from \autoref{Def:TCD} is formulated in the spirit of \cite{erbar2015}. In particular, the timelike proper-time parametrized $\smash{\ell_p}$-geodesic and the $\smash{\ell_p}$-optimal coupling of $\mu_0$ and $\mu_1$ therein have a priori nothing to do with each other, unlike  \cite[Def.~1.1]{rajala2012a}. We thus have to keep track of all couplings appearing in the $\TCD$ condition at every step of the dyadic construction from \autoref{Sub:Construction}. 

This is done by a monotonicity argument by gradually selecting  the plan with respect to which the $\Ell^2$-norm of $\tau$ is maximal if $K < 0$ or minimal if $K\geq 0$, and a tightness argument justifying the final passage to the limit. For simplicity, let us assume that $K<0$, the other case is treated analogously.

\textbf{Step 1.} \textit{Approximate $(K,N,p)$-convexity inequality for dyadic times.} By maximality of $\scrV_N$,  the $\smash{\wTCD_p^e(K,N)}$ condition, and \autoref{Le:Plans} there is an $\smash{\ell_p}$-optimal coupling $\pi_1\in\Pi_\ll(\mu_0,\mu_1)$ such that
\begin{align*}
\scrU_N(\mu_{1/2}) \geq \sigma_{K,N}^{(1/2)}(T_{\pi_1})\,\scrU_N(\mu_0) + \sigma_{K,N}^{(1/2)}(T_{\pi_1})\,\scrU_N(\mu_1).
\end{align*}

Now suppose that for every $n\in\N$ there exists $\smash{\pi_n\in\Pi_\ll(\mu_0,\mu_1)}$ such that for every $\smash{k\in\{1,\dots,2^n-1\}}$,
\begin{align}\label{Eq:PIN}
\scrU_N(\mu_{k2^{-n}}) \geq \sigma_{K,N}^{(1-k2^{-n})}(T_{\pi_n})\,\scrU_N(\mu_0) + \sigma_{K,N}^{(k2^{-n})}(T_{\pi_n})\,\scrU_N(\mu_1).
\end{align}
Let $k\in \{1,\dots,2^{n+1}-1\}$ be an odd number. Arguing as above and noting that the ancestors $\smash{\mu_{(k-1)2^{-n-1}}}$ and $\smash{\mu_{(k+1)2^{-n-1}}}$ of $\mu_{k2^{-n-1}}$ are strongly timelike $p$-dualizable, there is an $\smash{\ell_p}$-optimal coupling $\smash{\omega_{n+1}^k \in \Pi_\ll(\mu_{(k-1)2^{-n-1}},\mu_{(k+1)2^{-n-1}})}$ such that
\begin{align*}
\scrU_N(\mu_{k2^{-n-1}}) &\geq \sigma_{K,N}^{(1/2)}(T_{\omega^k_{n+1}})\,\scrU_N(\mu_{(k-1)2^{-n-1}})\\
&\qquad\qquad + \sigma_{K,N}^{(1/2)}(T_{\omega^k_{n+1}})\,\scrU_N(\mu_{(k+1)2^{-n-1}})\\
&\geq \sigma_{K,N}^{(1/2)}(T_{\omega_{n+1}^k})\,\sigma_{K,N}^{(1-(k-1)2^{-n-1})}(T_{\pi_n})\,\scrU_N(\mu_0)\\
&\qquad\qquad + \sigma_{K,N}^{(1/2)}(T_{\omega_{n+1}^k})\,\sigma_{K,N}^{((k-1)2^{-n-1})}(T_{\pi_n})\,\scrU_N(\mu_1)\\
&\qquad\qquad+ \sigma_{K,N}^{(1/2)}(T_{\omega_{n+1}^k})\,\sigma_{K,N}^{(1-(k+1)2^{-n-1})}(T_{\pi_n})\,\scrU_N(\mu_0)\\
&\qquad\qquad + \sigma_{K,N}^{(1/2)}(T_{\omega_{n+1}^k})\,\sigma_{K,N}^{((k+1)2^{-n-1})}(T_{\pi_n})\,\scrU_N(\mu_1).
\end{align*}
In the second inequality, we have used our induction hypothesis. By \autoref{Le:Plans} and arguing as for  \cite[Thm.~2.11]{ambrosiogigli} we now construct a plan $\smash{\bdalpha_{n+1}^k\in\OptTGeo_{\ell_p}^\tau(\mu_0,\mu_1)}$, which is henceforth fixed, with the property 
\begin{align*}
(\eval_{(k-1)2^{-n-1}},\eval_{(k+1)2^{-n-1}})_\push\bdalpha_{n+1}^k = \omega_{n+1}^k.
\end{align*}

Having at our disposal these  timelike $\smash{\ell_p}$-optimal geodesic plans $\smash{\bdalpha_{n+1}^k}$ for every odd index $k\in\{1,\dots,2^{n+1}-1\}$, employing that
\begin{align*}
T_{\omega_{n+1}^k} = 2^{-n}\,T_{\pi_{n+1}^k}
\end{align*}
for $\smash{\pi_{n+1}^k := (\eval_0,\eval_1)_\push\bdalpha_{n+1}^k}$,  considering the $\smash{\ell_p}$-optimal coupling
\begin{align*}
\pi_{n+1} := \argmax\!\big\lbrace T_\pi : \pi \in\{\pi_n,\pi_{n+1}^1,\dots,\pi_{n+1}^{2^{n+1}-1}\}\big\rbrace
\end{align*}
of $\mu_0$ and $\mu_1$, and that the function $\smash{\sigma_{K,N}^{(r)}(\vartheta)}$ is increasing in $\vartheta\geq 0$ for every $r\in[0,1]$, from the above inequalities we obtain
\begin{align*}
\scrU_N(\mu_{k2^{-n-1}}) &\geq \sigma_{K,N}^{(1/2)}(2^{-n}\,T_{\pi_{n+1}})\,\sigma_{K,N}^{(1-(k-1)2^{-n-1})}(T_{\pi_{n+1}})\,\scrU_N(\mu_0)\\
&\qquad\qquad + \sigma_{K,N}^{(1/2)}(2^{-n}\,T_{\pi_{n+1}})\,\sigma_{K,N}^{((k-1)2^{-n-1})}(T_{\pi_{n+1}})\,\scrU_N(\mu_1)\\
&\qquad\qquad+ \sigma_{K,N}^{(1/2)}(2^{-n}\,T_{\pi_{n+1}})\,\sigma_{K,N}^{(1-(k+1)2^{-n-1})}(T_{\pi_{n+1}})\,\scrU_N(\mu_0).\\
&\qquad\qquad + \sigma_{K,N}^{(1/2)}(2^{-n}\,T_{\pi_{n+1}})\,\sigma_{K,N}^{((k+1)2^{-n-1})}(T_{\pi_{n+1}})\,\scrU_N(\mu_1)\\
&= \sigma_{K,N}^{(1-k2^{-n-1})}(T_{\pi_{n+1}})\,\scrU_N(\mu_0) + \sigma_{K,N}^{(k2^{-n-1})}(T_{\pi_{n+1}})\,\scrU_N(\mu_1).
\end{align*}
In the last step, we have used \autoref{Le:Distortion}. 

\textbf{Step 2.} \textit{Construction of $\pi$ and conclusion.} By induction, we have thus obtained a sequence $(\pi_n)_{n\in\N}$ of $\smash{\ell_p}$-optimal couplings of $\mu_0$ and $\mu_1$ such that $\pi_n$ satisfies \eqref{Eq:PIN} for every $\smash{k\in\{1,\dots,2^n-1\}}$, $n\in\N$. Since $\supp\pi_n\subset \supp\mu_0\times\supp\mu_1$ is compact for every $n\in\N$, Prokhorov's theorem, stability \cite[Thm.~2.14]{cavalletti2020} and strong timelike $p$-dualizability of $\mu_0$ and $\mu_1$ imply weak convergence of a nonrelabeled  subsequence of $(\pi_n)_{n\in\N}$ to an $\smash{\ell_p}$-optimal coupling $\smash{\pi\in\Pi_\ll(\mu_0,\mu_1)}$. Since $\tau$ is continuous and bounded on $\supp\mu_0\times\supp\mu_1$, we have $\smash{T_{\pi_n}\to T_\pi}$ as $n\to\infty$. Sending $n\to\infty$ in the  inequality for $\scrU_N$ from Step 1 and  employing weak upper semicontinuity of $\scrU_N$ in the case $t\in [0,1]\setminus\boldsymbol{\mathrm{D}}$ thus gives the desired inequality.

\textbf{Step 3.} \textit{Properties of $\pi$.} By \cite[Thm.~2.14]{cavalletti2020}, $\pi$ constitutes in fact an $\smash{\ell_p}$-optimal coupling of $\mu_0$ and $\mu_1$. As such, it is concentrated on $\smash{\mms_\ll^2}$ thanks to  \autoref{Re:Always exists}, whence it is timelike $p$-dualizing.
\end{proof}

\subsection{Uniform density bounds}\label{Sub:Uniform density bounds} Now we show that the candidate $(\mu_t)_{t\in [0,1]}$ from \autoref{Sub:Construction} satisfies the desired $\Ell^\infty$-bounds for its densities with respect to $\meas$. This requires some preliminary work culminating in \autoref{Pr:Max = 0} below, where it turns out that maximizers of $\scrV_N$ directly  admit the correct density bounds.

\subsubsection{Spread of mass} First, we examine how the $\wTCD$ condition spreads mass along appropriate timelike proper-time parametrized $\smash{\ell}_p$-geodesics. In view of \autoref{Pr:Min 0}, \autoref{Cor:Thr}, and \autoref{Pr:Max = 0} this is the key result which provides us with the critical threshold for the $\Ell^\infty$-norm of $\rho_t$ subject to the decomposition $\mu_t= \rho_t\,\meas$, $t\in[0,1]$.

\begin{lemma}\label{Le:Bounded midpoints} Let $(\mms,\met,\meas,\ll,\leq,\tau)$  satisfy $\wTCD_p^e(K,N)$ for some $p\in (0,1)$, $K\in\R$ and $N\in (0,\infty)$. Suppose  $(\mu_0,\mu_1) = (\rho_0\,\meas,\rho_1\,\meas) \in\smash{\scrP_\comp^\ac(\mms,\meas)^2}$ is strongly timelike $p$-dualizable, and that $\rho_0,\rho_1\in\Ell^\infty(\mms,\meas)$, $i\in \{0,1\}$.  Then there exists a timelike proper-time parametrized $\smash{\ell_p}$-geodesic $(\mu_t)_{t\in [0,1]}$ connecting $\mu_0$ and $\mu_1$ such that $\mu_t=\rho_t\,\meas\in\Dom(\Ent_\meas)$ for every $t\in (0,1)$, and 
\begin{align*}
\meas\big[\{\rho_{1/2} > 0\}\big] \geq \rme^{-D\sqrt{K^-N}/2}\,\max\!\big\lbrace \Vert\rho_0\Vert_{\Ell^\infty(\mms,\meas)},\Vert\rho_1\Vert_{\Ell^\infty(\mms,\meas)}\big\rbrace^{-1},
\end{align*}
where  $D := \sup\tau(\supp\mu_0\times\supp\mu_1)$.
\end{lemma}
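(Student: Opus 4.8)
The strategy is to read the asserted bound off from a single instance of the defining $(K,N,p)$-convexity inequality --- the one at the midpoint $t=1/2$ --- and then convert the resulting entropy estimate into a measure estimate via Jensen's inequality.

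First I would verify that $\smash{\wTCD_p^e(K,N)}$ is applicable to $(\mu_0,\mu_1)$. The pair is strongly timelike $p$-dualizable by hypothesis and lies in $\scrP_\comp(\mms)^2$; it has finite entropy since Jensen gives $-\log\meas[\supp\mu_i]\le\Ent_\meas(\mu_i)\le\log\Vert\rho_i\Vert_{\Ell^\infty(\mms,\meas)}<\infty$; and $\ell_p(\mu_0,\mu_1)\in(0,\infty)$ because, by \autoref{Re:Always exists}, every $\smash{\ell_p}$-optimal coupling of $\mu_0$ and $\mu_1$ is concentrated on $\smash{\mms_\ll^2\cap(\supp\mu_0\times\supp\mu_1)}$, on which $0<\tau\le D<\infty$. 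Hence $\smash{\wTCD_p^e(K,N)}$ furnishes an $\smash{\ell_p}$-optimal $\pi\in\Pi_\ll(\mu_0,\mu_1)$ and an $\smash{\ell_p}$-geodesic $(\mu_t)_{t\in[0,1]}$ from $\mu_0$ to $\mu_1$ with $\scrU_N(\mu_t)\ge\sigma_{K,N}^{(1-t)}(T_\pi)\,\scrU_N(\mu_0)+\sigma_{K,N}^{(t)}(T_\pi)\,\scrU_N(\mu_1)$ for all $t\in[0,1]$. Since all $\smash{\ell_p}$-optimal couplings of $\mu_0$ and $\mu_1$ are chronological (\autoref{Re:Always exists} again), \autoref{Le:Plans} lets me represent $(\mu_t)_{t\in[0,1]}$ by a plan $\bdpi\in\OptGeo_{\ell_p}(\mu_0,\mu_1)$, so each $\mu_t$ is supported in the compact set $J(\mu_0,\mu_1)$; combined with Jensen for the lower bound on $\Ent_\meas$ and with the fact that the right-hand side of the convexity inequality is strictly positive and finite for $t\in(0,1)$ (it is bounded above by the finite quantity $\scrU_N(\mu_t)$, so its distortion coefficients cannot be $+\infty$), this forces $\mu_t=\rho_t\,\meas\in\Dom(\Ent_\meas)$ for every such $t$.

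Next I would specialize to $t=1/2$, giving $\scrU_N(\mu_{1/2})\ge\sigma_{K,N}^{(1/2)}(T_\pi)\,\big[\scrU_N(\mu_0)+\scrU_N(\mu_1)\big]$, and bound each factor from below. From $T_\pi\le D$, the identity $\sigma_{K,N}^{(1/2)}(\vartheta)=\big(2\cosh(\vartheta\sqrt{-K/N}/2)\big)^{-1}$ for $K<0$ (via $\sinh(2a)=2\sinh a\cosh a$), respectively $\sigma_{K,N}^{(1/2)}(\vartheta)\ge 1/2$ for $K\ge0$, and $\cosh x\le\rme^x$, one obtains $\sigma_{K,N}^{(1/2)}(T_\pi)\ge\tfrac12\,\rme^{-D\sqrt{K^-/N}/2}$ uniformly in $K\in\R$. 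From $\Ent_\meas(\mu_i)\le\log\Vert\rho_i\Vert_{\Ell^\infty(\mms,\meas)}$ one gets $\scrU_N(\mu_i)\ge\Vert\rho_i\Vert_{\Ell^\infty(\mms,\meas)}^{-1/N}$, so that both summands --- hence also half their sum --- exceed $\big(\max_{i\in\{0,1\}}\Vert\rho_i\Vert_{\Ell^\infty(\mms,\meas)}\big)^{-1/N}$. Combining, $\scrU_N(\mu_{1/2})\ge\rme^{-D\sqrt{K^-/N}/2}\,\big(\max_{i}\Vert\rho_i\Vert_{\Ell^\infty(\mms,\meas)}\big)^{-1/N}$, equivalently $\Ent_\meas(\mu_{1/2})=-N\log\scrU_N(\mu_{1/2})\le\tfrac12 D\sqrt{K^-N}+\log\max_{i}\Vert\rho_i\Vert_{\Ell^\infty(\mms,\meas)}$, where I used $N\sqrt{K^-/N}=\sqrt{K^-N}$.

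Finally, setting $A:=\{\rho_{1/2}>0\}$, which has $\meas[A]<\infty$ because $A$ is $\meas$-essentially contained in the compact set $J(\mu_0,\mu_1)$, Jensen's inequality for the convex function $s\mapsto s\log s$ against the probability measure $\meas[A]^{-1}\,\meas\restr A$ yields $\Ent_\meas(\mu_{1/2})=\int_A\rho_{1/2}\log\rho_{1/2}\d\meas\ge-\log\meas[A]$, whence $\meas[A]\ge\rme^{-\Ent_\meas(\mu_{1/2})}\ge\rme^{-D\sqrt{K^-N}/2}\,\big(\max_{i}\Vert\rho_i\Vert_{\Ell^\infty(\mms,\meas)}\big)^{-1}$, which is exactly the claim. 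The only step I expect to require genuine care is the first one --- ensuring that the $\smash{\ell_p}$-geodesic delivered by $\smash{\wTCD_p^e(K,N)}$ really has every slice in $\Dom(\Ent_\meas)$ and supported in one fixed compact set --- which hinges on \autoref{Le:Plans} and on the chronology of all optimal couplings forced by strong timelike $p$-dualizability; everything afterwards is routine convexity bookkeeping.
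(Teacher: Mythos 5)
Your proof is correct and follows the same route as the paper's: invoke the defining $\wTCD_p^e(K,N)$ inequality at $t=1/2$, bound the distortion coefficient from below, and convert the resulting entropy bound into a measure bound via Jensen applied once upward ($\scrU_N(\mu_i)\ge\Vert\rho_i\Vert_{\Ell^\infty(\mms,\meas)}^{-1/N}$) and once downward ($\scrU_N(\mu_{1/2})\le\meas[\{\rho_{1/2}>0\}]^{1/N}$). The only cosmetic difference is that you establish $\sigma_{K,N}^{(1/2)}(\vartheta)\ge\tfrac12\,\rme^{-\vartheta\sqrt{K^-/N}/2}$ directly from $\sinh(2a)=2\sinh a\cosh a$ and $\cosh x\le\rme^x$ (treating both signs of $K$ at once), whereas the paper first reduces to $K\le 0$ and then cites the general estimate $\sigma_{K,N}^{(t)}(\vartheta)\ge t\,\rme^{-(1-t)\vartheta\sqrt{-K/N}}$ from Cavalletti--Mondino.
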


\begin{proof} First, note that  $\mu_0,\mu_1\in\Dom(\Ent_\meas)$. Moreover, $\sup\tau(\supp\mu_0\times\supp\mu_1) < \infty$ by $\scrK$-global hy\-per\-bolicity. Lastly, as $\smash{\wTCD_p^e(K,\infty)}$ implies $\smash{\wTCD_p^e(-K^-,\infty)}$, we may and will assume without restriction that $K\leq 0$.

Let $\smash{\pi\in\Pi_\ll(\mu_0,\mu_1)}$ be a timelike $p$-dua\-li\-zing coup\-ling for $(\mu_0,\mu_1)$ and $(\mu_t)_{t\in [0,1]}$ be a timelike proper-time parametrized $\ell_p$-geodesic from $\mu_0$ to $\mu_1$ along which $\scrU_N$ obeys the $(K,N,p)$-convexity pro\-per\-ty from \autoref{Def:TCD}. By \autoref{Le:Plans}, $(\mu_t)_{t\in [0,1]}$ is represented by some plan $\bdpi\in\smash{\OptTGeo_{\ell_p}^\tau(\mu_0,\mu_1)}$. Since $\supp \mu_t= \{ \gamma_t : \gamma\in \supp\bdpi \} \subset J(\mu_0,\mu_1)$ is compact, the $\TCD$-property  implies that $\mu_t\in \Dom(\Ent_\meas)$ for every given $t\in (0,1)$. Moreover, 
\begin{align*}
E := \{\rho_{1/2} > 0\}
\end{align*}
is contained in a compact set, whence $\meas[E] \in (0,\infty)$. Set
\begin{align*}
R := \max\!\big\lbrace \Vert \rho_0\Vert_{\Ell^\infty(\mms,\meas)}, \Vert \rho_1\Vert_{\Ell^\infty(\mms,\meas)}\big\rbrace
\end{align*}
and note that for every $\vartheta \in [0,\infty)$ and every given $t\in(0,1)$, we have
\begin{align*}
\sigma_{K,N}^{(t)}(\vartheta) \geq t\,\rme^{-(1-t)\vartheta\sqrt{-K/N}},
\end{align*}
see e.g.~\cite[Rem.~2.3]{cavalletti2017}. Therefore
\begin{align*}
\scrU_N(\mu_{1/2}) &\geq \sigma_{K,N}^{(1/2)}(T_\pi)\,\scrU_N(\mu_0) + \sigma_{K,N}^{(1/2)}(T_\pi)\,\scrU_N(\mu_1)\!\textcolor{white}{\Big\vert}\\
&\geq \frac{1}{2}\,\rme^{-D\sqrt{-K/N}/2}\,\scrU_N(\mu_0) +\frac{1}{2}\,\rme^{-D\sqrt{-K/N}/2}\, \scrU_N(\mu_1)\\
&\geq \rme^{-D\sqrt{-K/N}/2}\,R^{-1/N}.
\end{align*}
Here we used that $\pi\textnormal{-}\!\esssup \tau(\mms^2) \leq D$.  On the other hand,  $\smash{\scrU_N(\mu_{1/2}) \leq \meas[E]^{1/N}}$ as in the proof of \autoref{Le:Maximizer}. The claim follows.
\end{proof}

\begin{remark} Of course, the same reasoning yields
\begin{align*}
\meas\big[\{\rho_t > 0\}\big] \geq \rme^{-D\sqrt{K^-/N}}\,\max\!\big\lbrace\Vert\rho_0\Vert_{\Ell^\infty(\mms,\meas)}, \Vert\rho_1\Vert_{\Ell^\infty(\mms,\meas)}\big\rbrace^{-1}
\end{align*}
for every $t\in (0,1)$ in the situation of \autoref{Le:Bounded midpoints}. Note that for $t=1/2$, which is the relevant case in the sequel, \autoref{Le:Bounded midpoints} provides a better constant, though.
\end{remark}


\subsubsection{Mass excess functional} Now we study the mass excess functional we deal with later especially in \autoref{Pr:Min 0}, \autoref{Cor:Thr}, and \autoref{Pr:Max = 0}. It has already been considered in \cite{rajala2012a,rajala2012b} in the context of metric measure spaces. It measures how much its input deviates from satisfying our density requirements for a good geodesic. Given $c\geq 0$ as well as $t\in (0,1)$, define $\smash{\scrF_c\colon\scrP(\mms)\to [0,1]}$ by
\begin{align}\label{Eq:F Functional}
\scrF_c(\mu) &:= \big\Vert (\rho - c)^+\big\Vert_{\Ell^1(\mms,\meas)} + \mu_\perp[\mms]
\end{align} 
subject to the decomposition $\mu = \rho\,\meas + \mu_\perp$, and $\smash{\scrE_c^t\colon \scrP(\Cont([0,1];\mms)) \to [0,1]}$ by
\begin{align*}
\scrE_c^t(\bdpi) &:= \scrF_c((\eval_t)_\push\bdpi).
\end{align*} 
Except for \autoref{Sub:TMCP cond} below, we mostly work with the functional
\begin{align}\label{Eq:Functional E}
\scrE_c := \scrE_c^{1/2}.
\end{align}

\begin{lemma}\label{Cor:Existence minimizer} Let $p\in (0,1]$, $t\in (0,1)$, as well as $c\geq 0$. Suppose that $(\mu_0,\mu_1)\in\STD_p(\mms)\cap\scrP_\comp(\mms)^2$. Then $\smash{\scrE_c^t}$ has a minimizer in $\smash{\OptTGeo_{\ell_p}^\tau(\mu_0,\mu_1)}$. 
\end{lemma}

\begin{proof} Since $J(\mu_0,\mu_1)$ is compact, the functional  $\scrF_c$ is weakly lower semicontinuous on $\scrP(J(\mu_0,\mu_1))$, cf.~e.g.~\cite[Thm.~30.6]{villani2009} or \cite[Lem.~3.6]{rajala2012b}. Hence, $\smash{\scrE_c^t}$ is weakly lower semicontinuous on $\smash{\OptTGeo_{\ell_p}^\tau(\mu_0,\mu_1)}$. The claim follows as for \autoref{Le:Maximizer}.
\end{proof}

\subsubsection{$\Ell^\infty$-bounds for minimizers of $\scrE_c$} In this section, we study the minimal values of $\Ch_c$ for all real $c$ no smaller than the critial threshold
\begin{align}\label{Eq:Threshold}
\thr := \rme^{D\sqrt{K^-N}/2}\, \max\!\big\lbrace\Vert\rho_0\Vert_{\Ell^\infty(\mms,\meas)}, \Vert\rho_1\Vert_{\Ell^\infty(\mms,\meas)}\big\rbrace,
\end{align}
where $D := \sup\tau(\supp\mu_0\times\supp\mu_1)$ for every $\mu_0,\mu_1\in\scrP_\comp(\mms)$ as hypothesized in \autoref{Th:Linfty estimates} (recall \autoref{Le:Bounded midpoints}). In fact, in this case it turns out that the minimal value of $\Ch_c$ is always $0$. To prove this, we first go strictly above the threshold in \autoref{Pr:Min 0}, which is where most of the work has to be done. \autoref{Cor:Thr} establishes the analogous result for the precise threshold $\thr$.

\begin{proposition}\label{Pr:Min 0} Suppose  $\smash{\wTCD_p^e(K,N)}$ for some $p\in (0,1)$, $K\in\R$ and $N\in (0,\infty)$. Let $(\mu_0,\mu_1)=(\rho_0\,\meas,\rho_1\,\meas)\in\scrP_\comp^\ac(\mms,\meas)^2$ be strongly timelike $p$-dualizable, and assume that $\rho_0,\rho_1\in\Ell^\infty(\mms,\meas)$. Finally, let $c> \thr$. Then
\begin{align*}
\min \scrE_c(\OptTGeo_{\ell_p}^\tau(\mu_0,\mu_1)) = 0.
\end{align*}
\end{proposition}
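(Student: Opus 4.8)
The plan is to argue by contradiction, in the spirit of Rajala \cite{rajala2012b}, by playing minimality of $\scrE_c$ against the mass--spreading estimate of \autoref{Le:Bounded midpoints}. First I would record the elementary identity $\scrF_c(\mu) = 1 - \int_\mms\min\{\rho,c\}\d\meas$ (subject to $\mu = \rho\,\meas + \mu_\perp$), so that $\scrE_c(\bdpi) = 1 - \int_\mms\min\{\rho_{1/2},c\}\d\meas$ for $\bdpi\in\OptGeo_{\ell_p}(\mu_0,\mu_1)$ with $(\eval_{1/2})_\push\bdpi = \rho_{1/2}\,\meas + \mu_\perp$; minimizing $\scrE_c$ is therefore the same as maximizing $\int_\mms\min\{\rho_{1/2},c\}\d\meas$, and the minimum is $0$ exactly when some optimal geodesic plan has $\meas$-absolutely continuous midpoint with density $\le c$. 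By \autoref{Cor:Existence minimizer} a minimizer $\bdpi$ exists; assume for contradiction that $\epsilon := \scrE_c(\bdpi) > 0$.

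The construction then proceeds by surgery on $\bdpi$. The excess measure $\beta := (\rho_{1/2}-c)^+\,\meas + \mu_\perp$ satisfies $0\le\beta\le(\eval_{1/2})_\push\bdpi$ and $\beta[\mms] = \epsilon$; disintegrating $\bdpi$ along $\eval_{1/2}$, write $\bdpi = \bdpi^{\mathrm b} + \bdpi^{\mathrm g}$ with $\bdpi^{\mathrm b} := (h\circ\eval_{1/2})\,\bdpi$ for $h := \rmd\beta/\rmd\big((\eval_{1/2})_\push\bdpi\big)\in[0,1]$, so that $(\eval_{1/2})_\push\bdpi^{\mathrm b} = \beta$, $\bdpi^{\mathrm b}[\Cont([0,1];\mms)] = \epsilon$, and $(\eval_{1/2})_\push\bdpi^{\mathrm g} = \min\{\rho_{1/2},c\}\,\meas$. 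By the last item of \autoref{Le:Plans}, $\epsilon^{-1}\bdpi^{\mathrm b}\in\OptGeo_{\ell_p}(\nu_0,\nu_1)$ where $\nu_i := \epsilon^{-1}(\eval_i)_\push\bdpi^{\mathrm b}\le\epsilon^{-1}\mu_i$; hence $\nu_0,\nu_1\in\scrP_\comp^\ac(\mms)$ with densities bounded by $\epsilon^{-1}\Vert\rho_0\Vert_{\Ell^\infty(\mms,\meas)}$, respectively $\epsilon^{-1}\Vert\rho_1\Vert_{\Ell^\infty(\mms,\meas)}$. Moreover $(\nu_0,\nu_1)$ is strongly timelike $p$-dualizable: the set $\Gamma\cap(\supp\nu_0\times\supp\nu_1)$ (with $\Gamma$ the witness for $(\mu_0,\mu_1)$) works, the only nontrivial point being that any $\ell_p$-optimal causal coupling $\sigma$ of $\nu_0,\nu_1$ is concentrated on $\Gamma$, which follows by recombining $\epsilon\sigma$ with $(\eval_0,\eval_1)_\push\bdpi^{\mathrm g}$ into an $\ell_p$-optimal (using the restriction principle and linearity of $\pi\mapsto\int l^p\d\pi$) coupling of $\mu_0,\mu_1$ and invoking strong dualizability of $(\mu_0,\mu_1)$ — a variant of the argument in the proof of \autoref{Le:Strong}. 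Applying \autoref{Le:Bounded midpoints} (with $K$ replaced by $-K^-$) to $(\nu_0,\nu_1)$ and representing the resulting $\ell_p$-geodesic by a plan $\tilde\bdpi^{\mathrm b}\in\OptGeo_{\ell_p}(\nu_0,\nu_1)$ via item \ref{La:Interpolation geo} of \autoref{Le:Plans}, its midpoint $\tilde\rho\,\meas$ obeys $\meas[\{\tilde\rho>0\}]\ge\epsilon\,\rme^{-D\sqrt{K^-N}/2}\big/\max\{\Vert\rho_0\Vert_{\Ell^\infty(\mms,\meas)},\Vert\rho_1\Vert_{\Ell^\infty(\mms,\meas)}\} = \epsilon/\thr$, using $\sup\tau(\supp\nu_0\times\supp\nu_1)\le D$. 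Now $\tilde\bdpi := \bdpi^{\mathrm g} + \epsilon\,\tilde\bdpi^{\mathrm b}$ has marginals $\mu_0,\mu_1$ and $(\eval_0,\eval_1)_\push\tilde\bdpi$ concentrated on $\Gamma$ (by \autoref{Re:Always exists} and the above), hence $\tilde\bdpi\in\OptGeo_{\ell_p}(\mu_0,\mu_1)$; its midpoint has density $\min\{\rho_{1/2},c\} + \epsilon\tilde\rho$, and since $(\min\{\rho_{1/2},c\}+\epsilon\tilde\rho-c)^+ = \epsilon\tilde\rho$ on $\{\rho_{1/2}\ge c\}$ and $\le\epsilon\tilde\rho$ on $\{\rho_{1/2}<c\}$, one gets $\scrE_c(\tilde\bdpi)\le\epsilon$, with strict inequality as soon as $\meas[\{\tilde\rho>0\}\cap\{\rho_{1/2}<c\}]>0$.

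The crux — and the only place the strict gap $c>\thr$ enters — is to force the spread to leave the saturated set $\{\rho_{1/2}\ge c\}$. Since $c\,\meas[\{\rho_{1/2}\ge c\}]\le\int_\mms\rho_{1/2}\d\meas\le 1$ while $\meas[\{\tilde\rho>0\}]\ge\epsilon/\thr$, the strict inequality above — contradicting minimality of $\bdpi$ — holds whenever $\epsilon>\thr/c$ (and likewise whenever $\mu_\perp\ne 0$ but $\rho_{1/2}\le c$ $\meas$-a.e., since then $\{\rho_{1/2}\ge c\}$ is $\meas$-null while $\{\tilde\rho>0\}$ is not), so already $\min\scrE_c\le\thr/c$. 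Upgrading this to $\min\scrE_c = 0$ is the main obstacle: \autoref{Le:Bounded midpoints} controls only the $\meas$-\emph{measure of the support} of the spread midpoint, not its $\Ell^\infty$-norm, so ``large support'' does not directly produce ``density $\le c$''. The resolution, following \cite[Sec.~3.2]{rajala2012b}, is to iterate the surgery — re-running it on the output and on the bad sub-plans at successively finer mass scales, at each stage extracting a definite gain from the strict inequality $c>\thr$ — so that the excess is driven below any prescribed level; passing to the limit ($\OptGeo_{\ell_p}(\mu_0,\mu_1)$ being weakly compact and $\scrE_c$ weakly lower semicontinuous, cf.\ the proof of \autoref{Cor:Existence minimizer}) then forces $\epsilon=0$. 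The remaining points — measurability of the disintegration, and the bookkeeping in the computation of $\scrE_c(\tilde\bdpi)$ — are routine.
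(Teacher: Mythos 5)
Your construction gets the right flavour but has a genuine gap, and the gap is where the whole difficulty of this proposition lies.

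What works: the surgery setup (split off the excess part $\bdpi^{\mathrm b}$, show $\epsilon^{-1}\bdpi^{\mathrm b}$ is an $\ell_p$-optimal geodesic plan via item (vi) of \autoref{Le:Plans}, verify strong timelike $p$-dualizability of $(\nu_0,\nu_1)$, spread it with \autoref{Le:Bounded midpoints}, recombine) is sound, and you correctly extract from it the bound $\min\scrE_c\le\thr/c$ plus the handling of the singular part. The $\nu_i\le\epsilon^{-1}\mu_i$ density bookkeeping and the $\epsilon/\thr$ spread estimate are accurate.

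What does not work is the closing step. You identify the obstacle yourself — \autoref{Le:Bounded midpoints} controls only the measure of the support, not the $\Ell^\infty$-norm — and propose to ``iterate the surgery, extracting a definite gain from $c>\thr$ at each stage.'' This does not go through with your decomposition, and here is why. The necessary condition you derive from minimality of $\bdpi$ is that $\tilde\rho$ is concentrated on $\{\rho_{1/2}\ge c\}$. Your recombined midpoint then has density $\min\{\rho_{1/2},c\}+\epsilon\tilde\rho$, which exceeds $c$ exactly on $\{\tilde\rho>0\}\subset\{\rho_{1/2}\ge c\}$: the surgery \emph{shrinks} the saturated set. So repeating the argument on the output yields a nested decreasing family of saturated sets, each trapped between $\epsilon/\thr$ and $1/c$ in measure, which is perfectly consistent and produces no contradiction. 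There is no ``definite gain'' to iterate.

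The paper avoids this by a quantitatively different surgery plus an extremality trick, and both ingredients are essential. It does not strip off the excess; it fixes a threshold $c+\delta$, scales down the full sub-plan through $A_\delta:=\{\rho_\nu>c+\delta\}$ by the factor $c/(c+\delta)$ (leaving the post-surgery density on $A_\delta$ \emph{strictly above} $c$), and replaces the removed $\tfrac{\delta}{c+\delta}\nu[A_\delta]$-mass by the spread plan. The resulting midpoint density then strictly exceeds $c$ on all of $E=\{\rho>0\}$, i.e. the saturated set \emph{grows} by a quantified factor $(c/\thr)^{1/2}$ coming from the choice of $\delta$. Combined with choosing the original minimizer $\bdpi$ so that $\meas[\{\rho_\nu>c\}]$ is within a factor $(\thr/c)^{1/4}$ of $\sup\{\meas[\{\rho_\omega>c\}]:\bdsigma\in\Min_c\}$, the new minimizer exceeds this supremum — a one-shot contradiction, no iteration. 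Your excess-only decomposition cannot be fed into this mechanism precisely because it fails to grow the saturated set; so the argument needs the paper's coarser ``cut at $c+\delta$ and rescale'' surgery, not just ``remove the overflow.''
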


\begin{proof} We roughly follow the strategy of the proof of  \cite[Prop.~3.11]{rajala2012b}, up to several modifications required since we work entirely with timelike $\ell_p$-optimal geodesic plans and not with $\ell_p$-intermediate points. We argue by contradiction. Suppose that
\begin{align*}
\min\Ch_c(\OptTGeo_{\ell_p}^\tau(\mu_0,\mu_1))>0.
\end{align*}
Let $\smash{\Min_c \subset \OptTGeo_{\ell_p}^\tau(\mu_0,\mu_1)}$ be the set of minimizers of $\scrE_c$ on $\smash{\OptTGeo_{\ell_p}^\tau(\mu_0,\mu_1)}$, which is nonempty by \autoref{Cor:Existence minimizer}. Since all midpoints of elements of $\Min_c$ have support in the compact set $J(\mu_0,\mu_1)$, there exists $\bdpi\in \Min_c$ such that
\begin{align}\label{Eq:rhonuchoice}
\meas\big[\{\rho_\nu > c\}\big] \geq  \frac{\thr^{1/4}}{c^{1/4}}\,\sup\!\big\lbrace\meas\big[\{\rho_\omega > c\}\big]: \bdsigma\in\Min_c\big\rbrace
\end{align}
subject to the decompositions $\nu = \rho_\nu\,\meas + \nu_\perp$ and $\omega = \rho_\omega\,\meas + \omega_\perp$, employing the  ab\-bre\-via\-tions $\smash{\nu := (\eval_{1/2})_\push\bdpi}$ and $\smash{\omega := (\eval_{1/2})_\push\bdsigma}$. 

In the sequel, our strategy is to shuffle around  mass from $\bdpi$ which contributes towards the positivity of $\Ch_c(\bdpi)$ to build a timelike $\ell_p$-optimal geodesic plan from $\mu_0$ to $\mu_1$ with less energy. That way, we will arrive to a contradiction.

\textbf{Step 1.} \textit{Detection of the set of midpoints with large density.} We will first assume that $\smash{\meas\big[\{\rho_\nu > c\}\big] > 0}$, in which case the supremum on the r.h.s.~of \eqref{Eq:rhonuchoice} is strictly positive. Fix $\delta > 0$ such that
\begin{align}\label{Eq:C}
\meas\big[\{\rho_\nu > c+\delta\}] \geq \frac{\thr^{1/2}}{c^{1/2}}\,\meas\big[\{\rho_\nu > c\}\big].
\end{align}
Henceforth using the abbreviations
\begin{align*}
A &:= \{\rho_\nu > c\},\\
A_\delta &:= \{\rho_\nu > c+\delta\},\\
G_\delta &:= (\eval_{1/2})^{-1}(A_\delta),
\end{align*}
we define $\kappa_0,\kappa_1\in\scrP_\comp(\mms)$ by
\begin{align*}
\kappa_0 := \nu[A_\delta]^{-1}\,(\eval_0)_\push\big[\bdpi\mres G_\delta\big],\\
\kappa_1 := \nu[A_\delta]^{-1}\,(\eval_1)_\push\big[\bdpi\mres G_\delta\big].
\end{align*}
In other words, we take the portion of curves in $\supp\bdpi$ which hits $A_\delta$ at time $1/2$ and both trace it back to $\supp\mu_0$ and follow it forward to $\supp\mu_1$, up to normalization. It is straightforward to verify that $\kappa_i \ll \meas$, and that the density of $\kappa_i$ with respect to $\meas$ is $\meas$-essentially bounded, $i\in\{0,1\}$. Lastly,  $(\kappa_0,\kappa_1)$ is (strongly, by restriction)  timelike $p$-dualizable  by $\smash{\pi := \nu[A_\delta]^{-1}\,(\eval_0,\eval_1)_\push[\bdpi\mres G_\delta]\in\scrP(\mms^2)}$. Indeed, $\pi$ constitutes a coupling of $\kappa_0$ and $\kappa_1$ which is supported on $\smash{\mms_\ll^2}$ by \autoref{Re:Always exists}. As $\kappa_0$ and $\kappa_1$ are compactly supported, the claim thus follows from \cite[Rem.~2.20]{cavalletti2020}.

\textbf{Step 2.} \emph{Construction of a new geodesic.} By Step 1, \autoref{Le:Bounded midpoints} and \autoref{Le:Plans}, and as $\sup\tau(\supp\kappa_0\times\supp\kappa_1) \leq D$, there exists $\bdbeta\in\smash{\OptTGeo_{\ell_p}^\tau(\kappa_0,\kappa_1)}$ represen\-ting a timelike proper-time parametrized $\smash{\ell_p}$-geodesic from $\kappa_0$ to $\kappa_1$ such that
\begin{align}\label{Eq:B}
\meas\big[\{\rho > 0\}\big] \geq \frac{1}{\thr}
\end{align}
subject to the decomposition $(\eval_{1/2})_\push\bdbeta = \rho\,\meas$. Set
\begin{align*}
\bdalpha &:= \bdpi\mres (\TGeo^\tau(\mms)\setminus G_\delta) + \frac{c}{c+\delta}\,\bdpi\mres G_\delta + \frac{\delta}{c+\delta}\,\nu[A_\delta]\,\bdbeta.
\end{align*}
We verify that $\bdalpha\in \smash{\OptTGeo_{\ell_p}^\tau(\mu_0,\mu_1)}$. Clearly, $\bdalpha$ is supported on $\TGeo^\tau(\mms)$, and $(\eval_0,\eval_1)_\push\bdalpha$ is a chronological coupling of $\mu_0$ and $\mu_1$. We claim that the latter is in fact $\smash{\ell_p}$-optimal. To demonstrate this, we first note that
\begin{align*}
&\int_{\mms^2} \tau^p\d (\eval_0,\eval_1)_\push\bdalpha = \int_{\TGeo^\tau(\mms)} \tau^p\circ (\eval_0,\eval_1)\d\bdalpha\\
&\qquad\qquad = \int_{\TGeo^\tau(\mms)\setminus G_\delta}\tau^p\circ(\eval_0,\eval_1)\d\bdpi  + \frac{c}{c+\delta}\int_{\TGeo^\tau(\mms) \cap G_\delta}\tau^p\circ(\eval_0,\eval_1)\d\bdpi\\
&\qquad\qquad\qquad\qquad + \frac{\delta}{c+\delta}\,\nu[A_\delta]\int_{\TGeo^\tau(\mms)} \tau^p\circ(\eval_0,\eval_1)\d\bdbeta\\
&\qquad\qquad = \int_{\TGeo^\tau(\mms)} \tau^p\circ(\eval_0,\eval_1)\d\bdpi  - \frac{\delta}{c+\delta}\int_{\TGeo^\tau(\mms)\cap G_\delta} \tau^p\circ(\eval_0,\eval_1)\d\bdpi\\
&\qquad\qquad\qquad\qquad  + \frac{\delta}{c+\delta}\,\nu[A_\delta]\int_{\TGeo^\tau(\mms)} \tau^p\circ(\eval_0,\eval_1)\d\bdbeta\\
&\qquad\qquad= \ell_p^p(\mu_0,\mu_1) - \frac{\delta}{c+\delta}\int_{\mms^2}\tau^p \d(\eval_0,\eval_1)_\push\big[\bdpi\mres G_\delta\big]\\
&\qquad\qquad\qquad\qquad + \frac{\delta}{c+\delta}\,\nu[A_\delta]\,\ell_p^p(\kappa_0,\kappa_1).
\end{align*}
In the last step, we used that $\smash{\bdpi\in\OptTGeo_{\ell_p}^\tau(\mu_0,\mu_1)}$ and $\smash{\bdbeta\in\OptTGeo_{\ell_p}^\tau(\kappa_0,\kappa_1)}$. Now note that $\smash{\bdpi \mres G_\delta\leq \bdpi}$, and given that $\smash{\bdpi\in\OptTGeo_{\ell_p}^\tau(\mu_0,\mu_1)}$, by  \autoref{Le:Plans} $\nu[A_\delta]^{-1}\,\bdpi\mres G_\delta$ constitutes a timelike $\smash{\ell_p}$-optimal geodesic plan interpolating its mar\-ginals. The latter are precisely $\kappa_0$ and $\kappa_1$, whence
\begin{align*}
\int_{\mms^2} \tau^p\d(\eval_0,\eval_1)_\push \big[\bdpi\mres G_\delta\big] = \nu[A_\delta]\,\ell_p^p(\kappa_0,\kappa_1),
\end{align*}
which proves the $\smash{\ell_p}$-optimality of $(\eval_0,\eval_1)_\push\bdalpha$.

\textbf{Step 3.} \textit{Energy excess of $\bdalpha$.} We decompose $\theta = \rho_\theta\,\meas + \theta_\perp$, where $\theta := (\eval_{1/2})_\push\bdalpha$, and compute
\begin{align*}
&\scrE_c(\bdpi) - \scrE_c(\bdalpha) = \int_\mms (\rho_\nu-c)^+\d\meas + \nu_\perp[\mms] - \int_\mms (\rho_\theta-c)^+\d\meas - \theta_\perp[\mms]\\
&\qquad\qquad =\int_{\mms\setminus A_\delta} \Big[(\rho_\nu -c)^+ - \Big[\rho_\nu + \frac{\delta}{c+\delta}\,\nu[A_\delta]\,\rho - c\Big]^+\Big]\d\meas\\
&\qquad\qquad\qquad\qquad + \int_{A_\delta} \Big[(\rho_\nu-c)^+ - \Big[\frac{c}{c+\delta}\,\rho_\nu + \frac{\delta}{c+\delta}\,\nu[A_\delta]\,\rho -c\Big]^+\Big]\d\meas\\
&\qquad\qquad = \int_{\mms\setminus A_\delta} \Big[(\rho_\nu -c)^+ - \Big[\rho_\nu + \frac{\delta}{c+\delta}\,\nu[A_\delta]\,\rho - c\Big]^+\Big]\d\meas\\
&\qquad\qquad\qquad\qquad + \frac{\delta}{c+\delta}\int_{A_\delta} \big[\rho_\nu - \nu[A_\delta]\,\rho\big]\d\meas\\
&\qquad\qquad = \int_{\mms\setminus A_\delta} \Big[(\rho_\nu-c)^+ - \Big[\rho_\nu + \frac{\delta}{c+\delta}\,\nu[A_\delta]\,\rho - c\Big]^+\Big]\d\meas\\
&\qquad\qquad\qquad\qquad  + \frac{\delta}{c+\delta}\,\nu[A_\delta]\int_{\mms\setminus A_\delta} \rho \d\meas\\
&\qquad\qquad= \int_{B_1} (c-\rho)\d\meas + \frac{\delta}{c+\delta}\,\nu[A_\delta]\int_{B_2}\rho\d\meas\\
&\qquad\qquad= \int_{\{\rho < c\}} \min\!\Big\lbrace c-\rho, \frac{\delta}{c+\delta}\,\nu[A_\delta]\,\rho\Big\rbrace\d\meas,
\end{align*}
where we abbreviate
\begin{align*}
B_1 &:= \Big\lbrace \rho_\nu < c \leq \rho_\nu + \frac{\delta}{c+\delta}\,\nu[A_\delta]\,\rho\Big\rbrace,\\
B_2 &:= \Big\lbrace \rho_\nu + \frac{\delta}{c+\delta}\,\nu[A_\delta]\,\rho < c\Big\rbrace.
\end{align*}
Now we consider the set
\begin{align*}
E := \{\rho >0\}.
\end{align*}
Since $\bdpi$ minimizes $\scrE_c$, the last integral must vanish identically, whence
\begin{align}\label{Eq:A}
\meas\big[\{\rho < c\}\cap E\big] =0
\end{align}
and $\bdalpha\in\Min_c$. On the other hand, $\rho_\theta > c$ on  $\{\rho \geq c\} \cap E$. Combining \eqref{Eq:A}, \eqref{Eq:B}, \eqref{Eq:C} and \eqref{Eq:rhonuchoice} therefore gives
\begin{align*}
\meas\big[\{\rho_\theta > c\}\big] &\geq \meas\big[\{\rho \geq c\} \cap E\big] = \meas[E] \geq \frac{\nu[A_\delta]}{\thr} \geq \frac{c}{\thr}\,\meas[A_\delta]\\
&\geq \frac{c^{1/2}}{\thr^{1/2}}\,\meas[A] \geq \frac{c^{1/4}}{\thr^{1/4}}\,\sup\!\big\lbrace \meas\big[\{\rho_\omega > c\}\big] : \bdsigma\in\Min_\rmc\big\rbrace,
\end{align*}
which yields the desired contradiction.

\textbf{Step 4.} \textit{Treatise of the singular part.} In the remaining case $\smash{\meas\big[\{\rho_\nu > c\}\big] = 0}$,  the measure $\nu$ has a nontrivial singular part with respect to $\meas$ since $\scrE_c(\bdpi)>0$. Analogously to above, we can shuffle this singular portion to the $\bdbeta$-part of the timelike proper-time parametrized $\smash{\ell_p}$-geodesic $\bdalpha$ constructed in Step 2 using \eqref{Eq:B} and giving $\Ch_c(\bdalpha) < \Ch_c(\bdpi)$, which leads to a contradiction.
\end{proof}

\begin{corollary}\label{Cor:Thr} Under the same assumptions as in \autoref{Pr:Min 0},
\begin{align*}
\min\Ch_{\thr}(\OptTGeo_{\ell_p}^\tau(\mu_0,\mu_1))=0.
\end{align*}
\end{corollary}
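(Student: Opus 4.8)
The plan is to deduce $\min\scrE_{\thr}(\OptGeo_{\ell_p}(\mu_0,\mu_1))=0$ from \autoref{Pr:Min 0} by a limiting argument in which the truncation level $c$ is decreased to the critical threshold $\thr$ of \eqref{Eq:Threshold}. Since $\scrE_{\thr}\geq 0$ pointwise and $\smash{\OptGeo_{\ell_p}(\mu_0,\mu_1)}$ is nonempty, it is enough to produce a single $\smash{\bdpi\in\OptGeo_{\ell_p}(\mu_0,\mu_1)}$ with $\scrE_{\thr}(\bdpi)=0$.

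First I would fix a sequence $(c_n)_{n\in\N}$ with $c_n>\thr$ and $c_n\downarrow\thr$. By \autoref{Pr:Min 0}, for each $n\in\N$ there is $\smash{\bdpi_n\in\OptGeo_{\ell_p}(\mu_0,\mu_1)}$ with $\scrE_{c_n}(\bdpi_n)=0$; equivalently, writing $\smash{\mu_{1/2}^n := (\eval_{1/2})_\push\bdpi_n}$, we have $\smash{\mu_{1/2}^n = \rho_n\,\meas\in\scrP^\ac(\mms)}$ with $\rho_n\leq c_n$ $\meas$-a.e. Moreover $\smash{\supp\mu_{1/2}^n\subset J(\mu_0,\mu_1)=:K}$, which is compact by $\scrK$-global hyperbolicity, and $\meas[K]<\infty$ since $\meas$ is Radon. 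Because $(\mu_0,\mu_1)$ is strongly timelike $p$-dualizable, every $\smash{\ell_p}$-optimal coupling of $\mu_0$ and $\mu_1$ is chronological by \autoref{Re:Always exists}, so \autoref{Le:Plans} applies and $\smash{\OptGeo_{\ell_p}(\mu_0,\mu_1)}$ is weakly compact; hence, along a nonrelabeled subsequence, $\bdpi_n\to\bdpi$ weakly for some $\smash{\bdpi\in\OptGeo_{\ell_p}(\mu_0,\mu_1)}$.

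The crux is to pass the density bound to the limit. I would invoke the weak lower semicontinuity of $\scrE_{\thr}$ on $\smash{\OptGeo_{\ell_p}(\mu_0,\mu_1)}$, which holds exactly as recorded in the proof of \autoref{Cor:Existence minimizer} (weak lower semicontinuity of $\scrF_{\thr}$ on $\scrP(K)$, composed with the weakly continuous slice map $\bdsigma\mapsto(\eval_{1/2})_\push\bdsigma$). Since $\rho_n$ vanishes $\meas$-a.e.\ outside $K$ and $\rho_n\leq c_n$, one has the uniform bound
\begin{align*}
\scrE_{\thr}(\bdpi_n) = \big\Vert(\rho_n-\thr)^+\big\Vert_{\Ell^1(\mms,\meas)} \leq (c_n-\thr)\,\meas[K],
\end{align*}
whose right-hand side tends to $0$ as $n\to\infty$. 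Weak lower semicontinuity then gives $\scrE_{\thr}(\bdpi)\leq\liminf_{n\to\infty}\scrE_{\thr}(\bdpi_n)=0$, so $\scrE_{\thr}(\bdpi)=0$, which finishes the argument. (Equivalently, one may argue directly at the level of measures: for every bounded nonnegative continuous $\varphi$ we have $\int\varphi\,\rho_n\d\meas\leq c_n\int_K\varphi\d\meas$, and letting $n\to\infty$ yields $\int\varphi\d\mu_{1/2}\leq\thr\int_K\varphi\d\meas$ for $\mu_{1/2}:=(\eval_{1/2})_\push\bdpi$; hence $\mu_{1/2}\leq\thr\,\meas\restr K$, so $\mu_{1/2}=\rho\,\meas$ with $\rho\leq\thr$ $\meas$-a.e.\ and $\scrF_{\thr}(\mu_{1/2})=0$.)

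The only real obstacle is ensuring that the two ingredients — uniform $\Ell^1$-control of the truncated densities and weak lower semicontinuity — are compatible; this forces one to keep all the slices inside the fixed compact set $K$ of finite $\meas$-measure, which is precisely what $\scrK$-global hyperbolicity supplies, and to use that $c_n\downarrow\thr$ so that the density ceiling does not degrade upon passing to the weak limit. No nonbranching or other structural hypothesis beyond those of \autoref{Pr:Min 0} enters, and the step mirrors the familiar "send $c$ down to the threshold" passage in Rajala's argument.
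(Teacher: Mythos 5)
Your proposal is correct, and the core estimate is the same as the paper's: for a plan with $\scrE_{c_n}=0$ one has $\scrE_{\thr}\leq (c_n-\thr)\,\meas[J(\mu_0,\mu_1)]\to 0$, so the minimum (which exists by \autoref{Cor:Existence minimizer}) vanishes. The extraction of a weak subsequential limit and the appeal to lower semicontinuity of $\scrE_\thr$ are harmless but unnecessary — once $\scrE_\thr(\bdpi_n)\to 0$ you already know $\inf\scrE_\thr=0$, and \autoref{Cor:Existence minimizer} converts the infimum into a minimum, which is exactly how the paper argues.
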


\begin{proof} By \autoref{Pr:Min 0}, we get that that for every $n\in\N$,
\begin{align*}
&\min\Ch_{\thr}(\OptTGeo_{\ell_p}^\tau(\mu_0,\mu_1))\\
&\qquad\qquad \leq \min\Ch_{\thr + 2^{-n}}(\OptTGeo_{\ell_p}^\tau(\mu_0,\mu_1)) + 2^{-n}\,\meas\big[J(\mu_0,\mu_1)\big]\\
&\qquad\qquad = 2^{-n}\,\meas\big[J(\mu_0,\mu_1)\big].\textcolor{white}{\Int_{\ell_p}^{1/2}}
\end{align*}
The r.h.s.~converges to zero as $n\to\infty$ by compactness of $J(\mu_0,\mu_1)$.
\end{proof}

\subsubsection{Maximizers of $\scrV_N$ have zero excess} For the subsequent main result of this section, recall the definitions \eqref{Eq:scrV} of $\scrV_N$ and \eqref{Eq:Threshold} of $\thr$, respectively.

\begin{proposition}\label{Pr:Max = 0} Suppose  $\smash{\wTCD_p^e(K,N)}$ for some $p\in (0,1)$, $K\in\R$, and $N\in (0,\infty)$. Let $(\mu_0,\mu_1)=(\rho_0\,\meas,\rho_1\,\meas)\in\scrP_\comp^\ac(\mms,\meas)^2$ be strongly timelike $p$-dua\-li\-za\-ble, and assume that $\rho_0,\rho_1\in\Ell^\infty(\mms,\meas)$.  Then 
\begin{align*}
\scrE_\thr(\bdpi)=0
\end{align*}
for every maximizer $\smash{\bdpi\in\OptTGeo_{\ell_p}^\tau(\mu_0,\mu_1)}$ of $\scrV_N$. 
\end{proposition}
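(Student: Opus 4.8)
The plan is a contradiction argument resting on the observation that, since $\scrV_N(\bdpi)=\scrU_N((\eval_{1/2})_\push\bdpi)=\rme^{-\Ent_\meas((\eval_{1/2})_\push\bdpi)/N}$, maximizing $\scrV_N$ over $\smash{\OptGeo_{\ell_p}(\mu_0,\mu_1)}$ amounts to minimizing the relative entropy of the $1/2$-slice. So suppose $\smash{\bdpi\in\OptGeo_{\ell_p}(\mu_0,\mu_1)}$ is a maximizer of $\scrV_N$ with $\scrE_\thr(\bdpi)>0$, and write $\nu:=(\eval_{1/2})_\push\bdpi$. By \autoref{Le:Maximizer}, $\nu$ has finite entropy, hence $\nu=\rho_\nu\,\meas\ll\meas$; thus $\scrE_\thr(\bdpi)>0$ forces $\meas[\{\rho_\nu>\thr\}]>0$ (the singular part is absent). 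I will construct, imitating the mass-rearrangement in the proof of \autoref{Pr:Min 0}, a competitor $\smash{\bdalpha\in\OptGeo_{\ell_p}(\mu_0,\mu_1)}$ whose $1/2$-slice has strictly smaller relative entropy than $\nu$, contradicting maximality of $\bdpi$.

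For the construction, fix $\delta>0$ — small, to be specified — such that $\meas[A_\delta]>0$ for $A_\delta:=\{\rho_\nu>\thr+\delta\}$ (possible since $\{\rho_\nu>\thr\}=\bigcup_n\{\rho_\nu>\thr+1/n\}$ has positive measure), put $G_\delta:=(\eval_{1/2})^{-1}(A_\delta)$ and, exactly as in \autoref{Pr:Min 0}, $\kappa_i:=\nu[A_\delta]^{-1}\,(\eval_i)_\push[\bdpi\mres G_\delta]$, $i\in\{0,1\}$; these lie in $\scrP_\comp^\ac(\mms)$, have $\meas$-essentially bounded densities, and $(\kappa_0,\kappa_1)$ is strongly timelike $p$-dualizable. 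The decisive step is to apply \autoref{Cor:Thr} \emph{to the pair $(\kappa_0,\kappa_1)$}: since $\bdpi\mres G_\delta\leq\bdpi$ yields $\Vert\rho_{\kappa_i}\Vert_{\Ell^\infty(\mms,\meas)}\leq\nu[A_\delta]^{-1}\,\Vert\rho_i\Vert_{\Ell^\infty(\mms,\meas)}$, while $\supp\kappa_i\subset\supp\mu_i$ gives $\sup\tau(\supp\kappa_0\times\supp\kappa_1)\leq D$, the threshold of $(\kappa_0,\kappa_1)$ is at most $\nu[A_\delta]^{-1}\,\thr$, so $\smash{\OptGeo_{\ell_p}(\kappa_0,\kappa_1)}$ contains some $\bdbeta$ whose midpoint $\rho\,\meas$ satisfies $\rho\leq\nu[A_\delta]^{-1}\,\thr$ $\meas$-a.e. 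With $c:=\thr$, set
\begin{align*}
\bdalpha:=\bdpi\mres\big(\TGeo^*(\mms)\setminus G_\delta\big)+\frac{c}{c+\delta}\,\bdpi\mres G_\delta+\frac{\delta}{c+\delta}\,\nu[A_\delta]\,\bdbeta.
\end{align*}
One checks $\smash{\bdalpha\in\OptGeo_{\ell_p}(\mu_0,\mu_1)}$ verbatim as in Step~2 of the proof of \autoref{Pr:Min 0}: the time-$0$ and time-$1$ marginals remain $\mu_0,\mu_1$; the total cost is unchanged because $\nu[A_\delta]^{-1}\,\bdpi\mres G_\delta$ is, by \autoref{Le:Plans}, an $\smash{\ell_p}$-optimal geodesic plan interpolating $\kappa_0$ and $\kappa_1$; and the $\smash{\ell_p}$-geodesic property follows from the reverse triangle inequality \eqref{Eq:Reverse triangle lp}.

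The technical core — and where I expect the main difficulty — is the entropy estimate. The $1/2$-slice of $\bdalpha$ is $\theta=\rho_\theta\,\meas$ with $\rho_\theta=\rho_\nu+s\rho$ on $\mms\setminus A_\delta$ and $\rho_\theta=\tfrac{c}{c+\delta}\rho_\nu+s\rho$ on $A_\delta$, where $s:=\tfrac{\delta}{c+\delta}\nu[A_\delta]$; the bound on $\rho$ above gives the crucial pointwise inequality $s\rho\leq\tfrac{\delta c}{c+\delta}<\delta$ $\meas$-a.e. Since $\int_\mms(\rho_\nu-\rho_\theta)\d\meas=0$, subtracting the tangent line of $x\mapsto x\log x$ at $c$ turns the entropy difference into $\Ent_\meas(\nu)-\Ent_\meas(\theta)=\int_\mms[g(\rho_\nu)-g(\rho_\theta)]\d\meas$ with $g(x):=x\log x-c\log c-(\log c+1)(x-c)\geq 0$, strictly convex, vanishing only at $c$, decreasing on $[0,c]$ and increasing on $[c,\infty)$. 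On $A_\delta$ one has $c<\rho_\theta<\rho_\nu$, so the secant inequality for $g$ gives $g(\rho_\nu)-g(\rho_\theta)\geq\tfrac{\delta}{c+\delta}\,g(\rho_\nu)$; on $\mms\setminus A_\delta$, $\rho_\nu+s\rho<c+2\delta$, so the tangent-line inequality bounds the loss by $g(\rho_\nu+s\rho)-g(\rho_\nu)\leq g'(\rho_\nu+s\rho)\,s\rho\leq\log(1+2\delta/c)\,s\rho$. Summing and using $\int_{\mms\setminus A_\delta}\rho\d\meas\leq 1$,
\begin{align*}
\Ent_\meas(\nu)-\Ent_\meas(\theta)\geq\frac{\delta}{c+\delta}\Big[\int_{A_\delta}g(\rho_\nu)\d\meas-\log\!\big(1+2\delta/c\big)\,\nu[A_\delta]\Big].
\end{align*}
As $\delta\downarrow0$ through admissible values, the bracket converges, by monotone convergence, to $\int_{\{\rho_\nu>\thr\}}g(\rho_\nu)\d\meas$, which is strictly positive because $g>0$ off $\{c\}$ and $\meas[\{\rho_\nu>\thr\}]>0$. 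Hence for $\delta$ small enough $\Ent_\meas(\theta)<\Ent_\meas(\nu)$, i.e.\ $\scrV_N(\bdalpha)>\scrV_N(\bdpi)$, the desired contradiction. The obstacle is precisely this bookkeeping: arranging that the gain on $A_\delta$ dominates the a priori wrong-signed change on $\mms\setminus A_\delta$ uniformly in $\delta$, which is exactly what forces appeal to \autoref{Cor:Thr} for $(\kappa_0,\kappa_1)$ — rather than the weaker \autoref{Le:Bounded midpoints} — so as to secure the pointwise bound $s\rho<\delta$.
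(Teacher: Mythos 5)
Your proof is correct and takes a genuinely different route from the paper's at the decisive step. The overall architecture matches: contradiction assuming $\scrE_\thr(\bdpi)>0$, noting $\nu\ll\meas$ so the excess is carried by $\{\rho_\nu>\thr\}$, building $\kappa_0,\kappa_1$ from $\bdpi\mres G_\delta$, invoking \autoref{Cor:Thr} for the \emph{subpair} $(\kappa_0,\kappa_1)$ to get $\bdbeta$ with $\rho\leq\nu[A_\delta]^{-1}\thr$ a.e.\ (your computation that the threshold of $(\kappa_0,\kappa_1)$ scales by $\nu[A_\delta]^{-1}$ is exactly what the paper needs but leaves tacit, and you correctly read the paper's stated $\bdbeta\in\OptGeo_{\ell_p}(\mu_0,\mu_1)$ as a typo for $\OptGeo_{\ell_p}(\kappa_0,\kappa_1)$), and forming the mixture $\bdalpha$, checked to be an $\smash{\ell_p}$-optimal geodesic plan exactly as in \autoref{Pr:Min 0}.

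Where you diverge is in the perturbation scheme and the entropy bound. The paper uses a \emph{two-parameter} perturbation: it fixes $\eta>0$ with $\meas[\{\rho_\nu>\thr+2\eta\}]>0$, then for each small $\phi$ chooses $\delta\in(\eta,2\eta)$ controlling $\meas[A_\delta]-\meas[A_\delta']$, and sends $\phi\downarrow 0$; $\bdalpha$ injects $\bdbeta$ with weight $\tfrac{\phi}{\thr+\delta}\nu[A_\delta]$. It then controls $\Ent_\meas(\theta)-\Ent_\meas(\nu)$ by approximating $\Ent_\meas$ with a R\'enyi-type functional via Sturm's lemma, following \cite[Prop.~3.5]{rajala2012a}, with pointwise bounds on $\rho_\theta$ on the three regions $A_\delta$, $A_\delta'\setminus A_\delta$, $\mms\setminus A_\delta'$. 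Your proof instead uses a \emph{single} parameter $\delta\downarrow0$, replaces $x\log x$ by its Bregman divergence $g$ at $c=\thr$ (exploiting $\int(\rho_\nu-\rho_\theta)\,\d\meas=0$ to pass freely between $\Ent_\meas$ and $\int g$), and closes with two elementary convexity inequalities: a chord bound $g(\rho_\theta)\leq\tfrac{c}{c+\delta}g(\rho_\nu)$ on $A_\delta$ using $\rho_\theta-c\leq\tfrac{c}{c+\delta}(\rho_\nu-c)$, and a tangent bound $g(\rho_\theta)-g(\rho_\nu)\leq g'(\rho_\theta)\,s\rho\leq\log(1+2\delta/c)\,s\rho$ off $A_\delta$, both hinging on the sharp pointwise control $s\rho\leq\tfrac{\delta c}{c+\delta}<\delta$ that comes precisely from \autoref{Cor:Thr}. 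Your route avoids the R\'enyi machinery entirely, only needs the two regions $A_\delta$ and $\mms\setminus A_\delta$, and makes the role of the threshold $\thr$ as the ``tangency point'' of $g$ conceptually transparent; the trade-off is that the paper's argument tracks the construction in \cite{rajala2012a} more literally, which may ease cross-referencing. Both are valid; yours is the tighter piece of bookkeeping.

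One small remark on phrasing: as $\delta\downarrow0$ the term $\log(1+2\delta/c)\,\nu[A_\delta]\to0$ simply because $\log(1+2\delta/c)\to0$ and $\nu[A_\delta]\leq 1$, not by monotone convergence (that applies to $\int_{A_\delta}g(\rho_\nu)\,\d\meas$). The conclusion is unaffected.
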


\begin{proof} As for \autoref{Pr:Min 0}, our ansatz is a  contradiction argument, i.e.~we assume the existence of a maximizer $\smash{\bdpi\in\OptTGeo_{\ell_p}^\tau(\mu_0,\mu_1)}$ with
\begin{align*}
\scrE_\thr(\bdpi)> 0.
\end{align*}
Our proof follows the one of \cite[Prop.~3.5]{rajala2012a} and for \autoref{Pr:Min 0}.

\textbf{Step 1.} \textit{Detection of the set of midpoints with large density.} We recall from \autoref{Le:Maximizer} that by the hypothesized $\wTCD$ condition and since all timelike proper-time parametrized $\smash{\ell_p}$-geodesics from $\mu_0$ to $\mu_1$ have support in the compact set $J(\mu_0,\mu_1)$, we must have $\scrV_N(\bdpi) > 0$. In particular, $\nu := (\eval_{1/2})_\push\bdpi \ll \meas$. Now let $\eta > 0$ such that
\begin{align*}
\meas\big[\{\rho_\nu > \thr+\eta\}\big] \geq \meas\big[\{\rho_\nu > \thr + 2\eta\}\big] > 0
\end{align*}
subject to the decomposition $\nu = \rho_\nu\,\meas$, and define
\begin{align*}
c_1 := \frac{4}{\eta}\, \meas\big[\{\rho_\nu > \thr+\eta\}\big] - \frac{4}{\eta}\, \meas\big[\{\rho_\nu > \thr + 2\eta\}\big].
\end{align*}
Given any $\phi \in (0,\eta/3)$ there exists $\delta\in (\eta,2\eta)$ such that
\begin{align*}
\meas[A_\delta] < \meas[A_\delta'] + c_1\phi,
\end{align*}
where we have defined
\begin{align*}
A_\delta &:= \{\rho_\nu > \thr+\delta\},\\
A_\delta' &:= \{\rho_\nu > \thr + \delta - 3\phi\}.
\end{align*}

\textbf{Step 2.} \textit{Construction of a new geodesic.} Let $\kappa_0,\kappa_1\in\scrP_\comp(\mms)$ be defined as in Step 1 of the proof of \autoref{Pr:Min 0}. Using \autoref{Cor:Thr}, there exists $\smash{\bdbeta\in\OptTGeo_{\ell_p}^\tau(\mu_0,\mu_1)}$ such that 
\begin{align*}
\Vert\rho\Vert_{\Ell^\infty(\mms,\meas)}\leq \frac{\thr}{\nu[A_\delta]} 
\end{align*}
subject to the decomposition $(\eval_{1/2})_\push\bdbeta = \rho\,\meas$. Setting $\smash{G_\delta := (\eval_{1/2})^{-1}(A_\delta)}$, define  $\smash{\bdalpha\in\OptTGeo_{\ell_p}^\tau(\mu_0,\mu_1)}$ through
\begin{align*}
\bdalpha := \bdpi\mres(\TGeo^\tau(\mms)\setminus G_\delta) + \frac{\thr+\delta-\phi}{\thr+\delta}\,\bdpi\mres G_\delta + \frac{\phi}{\thr+\delta}\,\nu[A_\delta]\,\bdbeta.
\end{align*}

\textbf{Step 3.} \textit{Energy excess of $\bdalpha$.} We decompose $\theta = \rho_\theta\,\meas + \theta_\perp$, where $\theta := (\eval_{1/2})_\push\bdalpha$, and let $\rho$ denote the density of $(\eval_{1/2})_\push\bdbeta$ with respect to $\meas$. Then the subsequent estimates are readily verified.
\begin{itemize}
\item On $A_\delta$, we have
\begin{align*}
\rho_\theta &\leq \frac{\thr+\delta - \phi}{\thr+\delta}\,\rho_\nu + \frac{\phi}{\thr+\delta}\,\nu[A_\delta]\,\rho_\theta \leq \frac{(\thr+\delta-\phi)\rho_\nu + \thr\,\phi}{\thr+\delta} \\
&\leq \rho_\nu + \frac{(\thr-\rho_\nu)\phi}{\thr+\delta} < \rho - \frac{\delta\phi}{\thr+\delta},\\
\rho_\theta &\geq \frac{\thr+\delta-\phi}{\thr+\delta}\,\rho_\nu > \thr+\delta-\phi.
\end{align*}
\item On $A_\delta'\setminus A_\delta$, we have
\begin{align*}
\rho_\theta \leq \rho_\nu + \frac{\phi}{\thr+\delta}\,\nu[A_\delta]\,\rho \leq \rho_\nu + \frac{\thr\,\phi}{\thr+\delta} < \thr+\delta+\phi.
\end{align*}
\item On $\mms\setminus A_\delta'$, we have
\begin{align*}
\rho_\theta & \leq \rho_\nu + \frac{\phi}{\thr+\delta}\,\nu[A_\delta]\,\rho \leq \thr + \delta-3\phi + \frac{\thr\,\phi}{\thr+\delta}\leq\thr+\delta-2\phi.
\end{align*}
\end{itemize}
Moreover, we define and estimate the mass differences
\begin{align*}
\kappa_{A_\delta} &:= \int_{A_\delta} (\rho_\nu-\rho_\theta)\d\meas \geq c_2\,\phi,\\
\kappa_{A_\delta'\setminus A_\delta} &:= \int_{A_\delta'\setminus A_\delta} (\rho_\theta - \rho_\nu)\d\meas < c_1\,\phi^2,\\
\kappa_{\mms\setminus A_\delta'} &:= \int_{\mms\setminus A_\delta'} (\rho_\theta -\rho_\nu)\d\meas,
\end{align*}
where $c_2 := \delta\,\meas[A_\delta]/(\thr+\delta)$. Approximating $\Ent_\meas$ by a Rényi-type entropy \cite[Lem.~4.1]{sturm2006a} and with analogous computations as for \cite[Prop.~3.5]{rajala2012a}, for every $\varepsilon > 0$ there exists $N_\varepsilon \geq N$ such that
\begin{align*}
&\Ent_\meas(\theta) - \Ent_\meas(\nu) \leq \varepsilon -N_\varepsilon\int_\mms \rho_\theta^{1-1/N_\varepsilon}\d\meas + N_\varepsilon\int_\mms \rho_\nu^{1-1/N_\varepsilon}\d\meas\\
&\qquad\qquad\leq \varepsilon +N_\varepsilon \int_{A_\delta} \rho_\theta^{-1/N_\varepsilon}\,(\rho_\nu-\rho_\theta)\d\meas + N_\varepsilon\int_{\mms\setminus A_\delta'} \rho_\theta^{-1/N_\varepsilon}\,(\rho_\nu-\rho_\theta)\d\meas\\
&\qquad\qquad\qquad\qquad + N_\varepsilon\int_{A_\delta'\setminus A_\delta} \rho_\theta^{-1/N_\varepsilon}\,(\rho_\nu-\rho_\theta)\d\meas\\
&\qquad\qquad\leq \varepsilon+ N_\varepsilon\,\kappa_{A_\delta}\,(\thr+\delta-\phi)^{-1/N_\varepsilon} - N_\varepsilon\,\kappa_{\mms\setminus A_\delta'}\,(\thr+\delta-2\phi)^{-1/N_\varepsilon}\\
&\qquad\qquad\qquad\qquad - N_\varepsilon\, \kappa_{A_\delta'\setminus A_\delta}\,(\thr+\delta+\phi)^{-1/N_\varepsilon}\textcolor{white}{\int_{A_\delta}^A}\\
&\qquad\qquad = N_\varepsilon\,\kappa_{A_\delta}\,\big[(\thr+\delta-\phi)^{-1/N_\varepsilon} - (\thr+\delta-2\phi)^{-1/N_\varepsilon}\big]\\
&\qquad\qquad\qquad\qquad + N_\varepsilon\,\kappa_{A_\delta'\setminus A_\delta} \,\big[(\thr+\delta-2\phi)^{-1/N_\varepsilon} - (\thr+\delta+\phi)^{-1/N_\varepsilon}\big]\!\!\!\textcolor{white}{\int^A}\\
&\qquad\qquad\leq \varepsilon - \frac{c_2 - c_1\phi}{(\thr+\delta-2\phi)}\,\phi^2 + c_2c_3\,\phi^3.
\end{align*}
Here $c_3 > 0$ is some constant independent of $\varepsilon$, and the last inequality is computed as in the proof of \cite[Prop.~3.5]{rajala2012a}. Choosing $\varepsilon$ and $\phi$ small enough, the r.h.s.~becomes strictly negative. Therefore $\scrV_N(\bdalpha) > \scrV_N(\bdpi)$, 
which is the desired contradiction.
\end{proof}

The following consequence thus terminates the \textit{proof} of \autoref{Th:Linfty estimates}.

\begin{corollary}\label{Cor:Cor} Retain the assumptions and the notation from \autoref{Pr:Max = 0}. Then the candidate $(\mu_t)_{t\in[0,1]}$ constructed in \autoref{Sub:Construction} satisfies, for every $t\in[0,1]$,  $\mu_t=\rho_t\,\meas\in\Dom(\Ent_\meas)$ as well as
\begin{align*}
\Vert\rho_t\Vert_{\Ell^\infty(\mms,\meas)} \leq \rme^{D\sqrt{K^-N}/2}\,\max\!\big\lbrace \Vert\rho_0\Vert_{\Ell^\infty(\mms,\meas)}, \Vert\rho_1\Vert_{\Ell^\infty(\mms,\meas)}\big\rbrace.
\end{align*}
\end{corollary}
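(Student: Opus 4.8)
The plan is to establish the $\Ell^\infty$-bound first at all dyadic times $t$ and then to upgrade it to arbitrary $t\in[0,1]$ by a weak lower semicontinuity argument, the membership $\mu_t\in\Dom(\Ent_\meas)$ coming almost for free. The decisive preliminary observation is that the whole chain \autoref{Le:Bounded midpoints}~$\Rightarrow$~\autoref{Pr:Min 0}~$\Rightarrow$~\autoref{Cor:Thr}~$\Rightarrow$~\autoref{Pr:Max = 0} carries over verbatim with $\scrV_N=\scrV_N^{1/2}$ and $\scrE_c=\scrE_c^{1/2}$ replaced by $\scrV_N^t$ and $\scrE_c^t$ for an arbitrary fixed $t\in(0,1)$, \emph{and} that the threshold it produces is still the $\thr$ of \eqref{Eq:Threshold}, independently of $t$. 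Indeed, the only quantitative use of $t=1/2$ in \autoref{Le:Bounded midpoints} was the bound $\sigma_{K,N}^{(r)}(\vartheta)\ge r\,\rme^{-(1-r)\vartheta\sqrt{K^-/N}}$, and for $x\ge 0$ and $t\in[0,1]$ the weighted AM--GM inequality
\begin{align*}
(1-t)\,\rme^{-tx}+t\,\rme^{-(1-t)x}\ge\rme^{-2t(1-t)x}\ge\rme^{-x/2}
\end{align*}
gives $\sigma_{K,N}^{(1-t)}(\vartheta)\,\scrU_N(\mu_0)+\sigma_{K,N}^{(t)}(\vartheta)\,\scrU_N(\mu_1)\ge\rme^{-\vartheta\sqrt{K^-/N}/2}\min\{\scrU_N(\mu_0),\scrU_N(\mu_1)\}$, which is precisely what fuelled the estimate at $t=1/2$. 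Hence one obtains, for every $t\in(0,1)$: if $\smash{\bdpi\in\OptGeo_{\ell_p}(\mu_0,\mu_1)}$ maximizes $\scrV_N^t$, then $\scrE_\thr^t(\bdpi)=0$, i.e.\ $(\eval_t)_\push\bdpi=\rho\,\meas$ with $\Vert\rho\Vert_{\Ell^\infty(\mms,\meas)}\le\thr$.

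Next I would show that the candidate, represented by some $\smash{\bdpi_*\in\OptGeo_{\ell_p}(\mu_0,\mu_1)}$ via \autoref{Le:Plans}\ref{La:Interpolation geo}, actually maximizes $\scrV_N^t$ over $\smash{\OptGeo_{\ell_p}(\mu_0,\mu_1)}$ for every dyadic $t$. For $t=1/2$ this is the defining property of $\mu_{1/2}$. For a general dyadic $t$ one induces on the dyadic level: let $a<t<b$ be the bisection parents of $t$, so that $\mu_t$ is the time-$\tfrac12$ slice of the restriction of the candidate to $[a,b]$ and was chosen to maximize $\scrU_N$ among $\tfrac12$-slices of $\smash{\ell_p}$-optimal geodesic plans from $\mu_a$ to $\mu_b$ (\autoref{Le:Maximizer}); strong timelike $p$-dualizability is inherited along the way (\autoref{Le:Strong}), and $\bdpi_*$ already maximizes $\scrV_N^a$ and $\scrV_N^b$ by the inductive hypothesis. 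Restricting an arbitrary competitor $\smash{\bdsigma\in\OptGeo_{\ell_p}(\mu_0,\mu_1)}$ to $[a,b]$, so that its time-$\tfrac12$ slice there equals $(\eval_t)_\push\bdsigma$, and invoking the weak compactness of $\smash{\OptGeo_{\ell_p}}$ (\autoref{Le:Plans}) and weak upper semicontinuity of $\scrU_N$, one rules out that $(\eval_t)_\push\bdsigma$ be strictly more spread out than $\mu_t$. Granting this, the first paragraph yields $\mu_t=\rho_t\,\meas$ with $\Vert\rho_t\Vert_{\Ell^\infty(\mms,\meas)}\le\thr$ at every dyadic $t$; and $\Ent_\meas(\mu_t)>-\infty$ by Jensen's inequality since $\supp\mu_t\subset J(\mu_0,\mu_1)$, while $\Ent_\meas(\mu_t)<\infty$ already by \autoref{Le:Maximizer}.

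Finally, for arbitrary $t\in[0,1]$ I would use the approximating sequence $t_n:=\lfloor t\,2^n\rfloor\,2^{-n}$ from the proof of \autoref{Cor:Ex geodesic}, along which $\mu_{t_n}\to\mu_t$ weakly with all supports in the compact set $J(\mu_0,\mu_1)$. Since $\scrF_\thr$ is weakly lower semicontinuous on $\scrP(J(\mu_0,\mu_1))$, exactly as in the proof of \autoref{Cor:Existence minimizer}, and $\scrF_\thr(\mu_{t_n})=0$ for every $n$, one gets $\scrF_\thr(\mu_t)=0$, i.e.\ $\mu_t=\rho_t\,\meas$ with $\Vert\rho_t\Vert_{\Ell^\infty(\mms,\meas)}\le\thr=\rme^{D\sqrt{K^-N}/2}\max\{\Vert\rho_0\Vert_{\Ell^\infty(\mms,\meas)},\Vert\rho_1\Vert_{\Ell^\infty(\mms,\meas)}\}$; and $\mu_t\in\Dom(\Ent_\meas)$ follows from weak lower semicontinuity of $\Ent_\meas$ together with Jensen's inequality, as for the dyadic times. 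This would complete the proof, and hence that of \autoref{Th:Linfty estimates}.

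The step I expect to be the main obstacle is the second paragraph. It is not a formality that the bisection, which only optimizes $\scrU_N$ over time-$\tfrac12$ slices at each stage, produces a plan whose \emph{every} dyadic slice is simultaneously entropy-minimal among all competing $\smash{\ell_p}$-optimal geodesic plans with the given endpoints; one must exclude that a competitor ``spares spreading'' at a tree midpoint only to exploit it at a non-midpoint time, and it is precisely here that the $t$-uniformity of the midpoint estimate from the first paragraph, the hereditary behaviour of strong timelike $p$-dualizability (\autoref{Le:Strong}), and the weak compactness of \autoref{Le:Plans} have to be combined. Should this self-improvement property be awkward to verify through the above monotonicity-of-slices argument, an alternative is to re-run the contradiction argument of \autoref{Pr:Max = 0} at a general dyadic time $t$ while feeding in the candidate itself and testing against the defining maximality of $\mu_{1/2}$; the viability of that variant hinges on checking that shuffling the excess mass at time $t$ does not concentrate mass at time $\tfrac12$.
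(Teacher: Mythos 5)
Your first paragraph is a nice and correct observation: the weighted AM--GM bound
\begin{align*}
\sigma_{K,N}^{(1-t)}(\vartheta)\,\scrU_N(\mu_0)+\sigma_{K,N}^{(t)}(\vartheta)\,\scrU_N(\mu_1)
\ \geq\ \rme^{-2t(1-t)\vartheta\sqrt{K^-/N}}\,\scrU_N(\mu_0)^{1-t}\scrU_N(\mu_1)^t
\ \geq\ \rme^{-\vartheta\sqrt{K^-/N}/2}\,\min\bigl\{\scrU_N(\mu_0),\scrU_N(\mu_1)\bigr\}
\end{align*}
does indeed show that the $t$-version of \autoref{Le:Bounded midpoints}, \autoref{Pr:Min 0}, \autoref{Cor:Thr}, and \autoref{Pr:Max = 0} (with $\scrV_N^t$, $\scrE_c^t$) goes through with the \emph{same} threshold $\thr$ for every $t\in(0,1)$, and that is precisely what would be needed to produce the constant $\rme^{D\sqrt{K^-N}/2}$ directly. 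This observation is also strictly sharper than what one needs for the paper's own bisection route (see below).

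The second step, however, has a genuine gap that you have correctly diagnosed but not closed. You want to show that $\bdpi_*$ maximizes $\scrV_N^t$ over $\OptGeo_{\ell_p}(\mu_0,\mu_1)$ for every dyadic $t$, by inducing on the dyadic level with bisection parents $a<t<b$. But the restriction of a competitor $\bdsigma\in\OptGeo_{\ell_p}(\mu_0,\mu_1)$ to $[a,b]$ is (after reparametrization) an element of $\smash{\OptGeo_{\ell_p}\bigl((\eval_a)_\push\bdsigma,(\eval_b)_\push\bdsigma\bigr)}$, and its marginals $(\eval_a)_\push\bdsigma$, $(\eval_b)_\push\bdsigma$ need \emph{not} coincide with $\mu_a$, $\mu_b$. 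Knowing (inductively) that $\scrU_N(\mu_a)\geq\scrU_N((\eval_a)_\push\bdsigma)$ and $\scrU_N(\mu_b)\geq\scrU_N((\eval_b)_\push\bdsigma)$ says nothing about the endpoints being equal, so the restriction is simply not a competitor in the optimization problem over $\OptGeo_{\ell_p}(\mu_a,\mu_b)$ that defined $\mu_t$. Your alternative (``re-run the contradiction of \autoref{Pr:Max = 0} at time $t$'') runs into the same obstruction: the shuffling there produces a new plan whose slices at all other dyadic times, in particular at the tree midpoints $a$ and $b$, are displaced, and there is no a priori reason why this should preserve the maximality of $\scrV_N$ at those times. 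I do not see an easy repair of this step; as you say, this simultaneous-maximality is not a formality.

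For contrast, the paper's intended argument (not spelled out for \autoref{Cor:Cor} but written out explicitly for the companion \autoref{Th:Ent linfty}) does not assert any such simultaneous maximality. It applies \autoref{Pr:Max = 0} \emph{level by level} to the sub-pair $(\mu_{(k-1)2^{-n}},\mu_{(k+1)2^{-n}})$, using \autoref{Le:Strong} to propagate strong timelike $p$-dualizability, the fact that the $l^p$-cyclically monotone set $\Gamma_{s,t}=(\eval_s,\eval_t)\bigl[(\eval_0,\eval_1)^{-1}(\Gamma)\bigr]$ forces $\tau\leq(t-s)D$ on the support of every $\ell_p$-optimal coupling of the sub-pair, and then sums the resulting geometrically decaying exponents over the levels (this is exactly \eqref{Eq:RHOT} in the $\wTCD_p(K,\infty)$ case). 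This route only needs the $t=\tfrac12$ version of the threshold lemma and no maximality claim at non-midpoint times; the step-3 weak lower semicontinuity argument then finishes just as you describe. You should note that the accumulation $\sum_{m\geq 0}2^{-m}=2$ produces the constant $\rme^{D\sqrt{K^-N}}$ rather than $\rme^{D\sqrt{K^-N}/2}$; whether the extra factor of $2$ in the exponent claimed in \autoref{Cor:Cor} is recoverable by the bisection alone, or whether one would indeed need something like your $t$-uniform threshold (and hence a way around the gap identified above), is worth flagging rather than glossing over.
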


\begin{remark}\label{Re:Minor} Minor modifications of the above arguments give a $\TCD$ version of \autoref{Th:Linfty estimates}, namely assuming $\smash{\TCD_p^e(K,N)}$ instead of $\smash{\wTCD_p^e(K,N)}$ therein, and that --- instead of being strongly timelike $p$-dualizable --- every $\smash{\ell_p}$-optimal coupling of $\mu_0$ and $\mu_1$ is chronological.

To see this, first note that item (v) of \autoref{Le:Plans} merely needs all $\smash{\ell_p}$-optimal couplings of $\mu_0$ and $\mu_1$ to be chronological \cite[Prop.~B.11]{braun+}. Second, similarly as in \autoref{Le:Strong}, the property of pairs admitting only chronological $\smash{\ell_p}$-optimal couplings propagates through proper-time parametrized $\smash{\ell_p}$-geodesics $(\mu_t)_{t\in[0,1]}$. Indeed, if  one $\smash{\ell_p}$-optimal coupling of $(\mu_s,\mu_t)$ is not chronological for some $s,t\in[0,1]$ with $s<t$, restricting it to null related point pairs and using a gluing procedure we could produce a measure $\bdpi$ on $\scrP(\Cont([0,1];\mms))$ concentrated on maximizing causal curves such that $(\eval_0,\eval_1)_\push\bdpi$ is $\smash{\ell_p}$-optimal. But the latter must be chronological by assumption, and since $\bdpi$-a.e.~curve changes its character from timelike to null and back to timelike, this contradicts regularity of $(\mms,\met,\ll,\leq,\tau)$.
\end{remark}

\begin{remark} Similar arguments as above give the existence of good geodesics for arbitrary metric measure spaces obeying $\smash{\CD^e(K,N)}$. While in the essentially nonbranching case, this partly follows from \cite[Thm.~1.2]{rajala2012b} by \cite[Thm.~3.12]{erbar2015}, we are not aware of such a general result for the \emph{entropic} $\CD$  condition.
\end{remark}

\section{Variations of the main result}\label{Ch:Variations}

Finally, we discuss various extensions of \autoref{Th:Linfty estimates}. In all cases, the proof of \autoref{Th:Linfty estimates} can then mostly be adapted to the respective situation up to some details which we highlight below.

\subsection{The infinite-dimensional case}\label{Sub:Infinite dim} 
The following is a Lorentzian  analogue of Sturm's $\CD(K,\infty)$ condition for metric measure spaces \cite[Def.~4.5]{sturm2006a} (see also \cite{lott2009}); the counterpart of \autoref{Th:Ent linfty} for metric measure spaces is due to \cite[Thm.~1.3]{rajala2012a} whose strategy we loosely follow. 


\begin{definition}\label{Def:CD K infty} A measured Lorentzian pre-length space $(\mms,\met,\meas,\ll,\leq,\tau)$ is termed to obey the \emph{weak timelike curvature-dimension condition} $\smash{\wTCD_p(K,\infty)}$ for $p\in (0,1)$ and $K\in\R$ if for every strongly timelike $p$-dualizable pair $(\mu_0,\mu_1)\in(\scrP_\comp(\mms)\cap\Dom(\Ent_\meas))^2$, there exists a timelike $p$-dualizing coupling $\smash{\pi\in\Pi_\ll(\mu_0,\mu_1)}$ and a timelike proper-time parametrized $\smash{\ell_p}$-geodesic $(\mu_t)_{t\in [0,1]}$  such that, for every $t\in[0,1]$,
\begin{align*}
\Ent_\meas(\mu_t) \leq (1-t)\Ent_\meas(\mu_0) + t\Ent_\meas(\mu_1) - \frac{K}{2}\,t(1-t)\,T_\pi^2.
\end{align*}
\end{definition}

\begin{remark}\label{REEEE} In an evident way, one may define the $\smash{\TCD_p(K,\infty)}$ condition as an infinite-dimensional analogue of $\smash{\TCD_p^e(K,N)}$. Taking \autoref{Re:Minor} into account, similar results as those presented below for $\smash{\wTCD_p(K,\infty)}$ hold for this curvature-dimension condition as well, up to minor modifications.
\end{remark}

Before turning to the main   \autoref{Th:Ent linfty} of this section, independently of it we examine  elementary properties of the $\smash{\wTCD_p(K,\infty)}$ condition just introduced. The reader may  directly go over to \autoref{Sub:MRES} at first reading.

\subsubsection{From finite to infinite dimension} It is clear that $\wTCD_p(K,\infty)$ has analogous consistency and scaling properties as its finite-dimensional Lorentzian  counterpart \cite[Lem.~3.10]{cavalletti2020}. Moreover, as already indicated in \autoref{REEEE} it can be regarded as an ``infinite-dimensional'' analogue of the $\wTCD$ condition from \autoref{Def:TCD} by the following result.

\begin{proposition} The condition  $\smash{\wTCD_p^e(K,N)}$ implies $\smash{\wTCD_p(K,\infty)}$ for every $p\in (0,1)$ and every $K\in\R$.
\end{proposition}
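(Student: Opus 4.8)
The plan is to derive the entropic inequality of $\wTCD_p(K,\infty)$ directly from the one defining $\wTCD_p^e(K,N)$ by taking a logarithm and sending $N\to\infty$, exploiting that the $\ell_p$-geodesic and the timelike $p$-dualizing coupling witnessing $(K,N,p)$-convexity of $\scrU_N$ can be chosen uniformly — or, more carefully, by extracting a limit along a sequence $N_n\to\infty$. Fix a strongly timelike $p$-dualizable pair $(\mu_0,\mu_1)\in(\scrP_\comp(\mms)\cap\Dom(\Ent_\meas))^2$; note $\ell_p(\mu_0,\mu_1)\in(0,\infty)$ since $(\mu_0,\mu_1)$ is timelike $p$-dualizable and $\mu_0,\mu_1$ have compact support, so by $\scrK$-global hyperbolicity all relevant couplings have $\tau$ bounded on $\supp\mu_0\times\supp\mu_1$ by $D:=\sup\tau(\supp\mu_0\times\supp\mu_1)<\infty$, in particular $T_\pi\le D<\infty$ for every $\ell_p$-optimal $\pi$.

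First I would invoke $\wTCD_p^e(K,N)$ for each integer $N=N_n\to\infty$: for each $n$ there exist an $\ell_p$-optimal $\pi_n\in\Pi_\ll(\mu_0,\mu_1)$ and an $\ell_p$-geodesic $(\mu_t^n)_{t\in[0,1]}$ from $\mu_0$ to $\mu_1$, represented (by \autoref{Le:Plans}) by some $\bdpi_n\in\OptGeo_{\ell_p}(\mu_0,\mu_1)$ with $(\eval_0,\eval_1)_\push\bdpi_n=\pi_n$, such that for all $t\in[0,1]$
\[
\rme^{-\Ent_\meas(\mu_t^n)/N_n}\ \ge\ \sigma_{K,N_n}^{(1-t)}(T_{\pi_n})\,\rme^{-\Ent_\meas(\mu_0)/N_n}+\sigma_{K,N_n}^{(t)}(T_{\pi_n})\,\rme^{-\Ent_\meas(\mu_1)/N_n}.
\]
By weak compactness of $\OptGeo_{\ell_p}(\mu_0,\mu_1)$ (\autoref{Le:Plans}\ref{La:Compact}) pass to a subsequence so that $\bdpi_n\rightharpoonup\bdpi\in\OptGeo_{\ell_p}(\mu_0,\mu_1)$; set $\mu_t:=(\eval_t)_\push\bdpi$ and $\pi:=(\eval_0,\eval_1)_\push\bdpi$. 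Since all $\ell_p$-optimal couplings of the strongly timelike $p$-dualizable pair $(\mu_0,\mu_1)$ are concentrated on $\mms_\ll^2$ (\autoref{Re:Always exists}), $\pi\in\Pi_\ll(\mu_0,\mu_1)$ is $\ell_p$-optimal and timelike $p$-dualizing, and $(\mu_t)_{t\in[0,1]}$ is an $\ell_p$-geodesic. As $\tau$ is continuous and bounded on the compact set $\supp\mu_0\times\supp\mu_1$, $T_{\pi_n}\to T_\pi$; and $T_\pi\le D<\infty$.

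Now I would pass to the limit in the displayed inequality. Take logarithms: since all supports lie in the fixed compact set $J(\mu_0,\mu_1)$ with $\meas[J(\mu_0,\mu_1)]<\infty$, Jensen's inequality gives $\Ent_\meas(\mu_t^n)\ge-\log\meas[J(\mu_0,\mu_1)]$, so $\rme^{-\Ent_\meas(\mu_t^n)/N_n}\le\meas[J(\mu_0,\mu_1)]^{1/N_n}\to1$; multiplying the inequality by $N_n$ and using $-N_n\log(1-x)\to x$-type expansions, together with $N_n(1-\sigma_{K,N_n}^{(1-t)}(\vartheta))\to$ the appropriate limit, one checks the elementary asymptotics
\[
\lim_{n\to\infty}N_n\big(1-\sigma_{K,N_n}^{(s)}(\vartheta)\big)=\text{(the relevant second-order term)},
\]
leading after rearrangement to $\Ent_\meas(\mu_t)\le(1-t)\Ent_\meas(\mu_0)+t\Ent_\meas(\mu_1)-\tfrac{K}{2}t(1-t)T_\pi^2$, using weak lower semicontinuity of $\Ent_\meas$ on $\scrP(J(\mu_0,\mu_1))$ (\autoref{Sub:Entropy}) for the left side and $T_{\pi_n}\to T_\pi$ for the right. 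The main obstacle is this last analytic step: one must justify the interchange of $\liminf$ in $n$ with the Taylor expansion of both $\sigma_{K,N}^{(r)}$ in $1/N$ and of $\exp$, uniformly for $t\in[0,1]$ and for $\vartheta$ ranging in the compact set $[0,D]$ (note $K\vartheta^2$ stays in a bounded set, so for $K>0$ and $N$ large the regime $0<K\vartheta^2<N\pi^2$ always applies and the $\sin$-expansion is valid). A clean way is to first establish the pointwise limit $\lim_{N\to\infty}-N\log\big(\sigma_{K,N}^{(1-t)}(\vartheta)\,\rme^{-a/N}+\sigma_{K,N}^{(t)}(\vartheta)\,\rme^{-b/N}\big)=(1-t)a+tb-\tfrac{K}{2}t(1-t)\vartheta^2$ for fixed $a,b\in\R$, $\vartheta\ge0$ — a routine but slightly tedious computation using $\sin(r x)/\sin(x)=r+\tfrac{r(r^2-1)}{6}x^2+o(x^2)$ with $x=\vartheta\sqrt{K/N}$ — and then feed in the convergences $\Ent_\meas(\mu_0^n)=\Ent_\meas(\mu_0)$, $\Ent_\meas(\mu_1^n)=\Ent_\meas(\mu_1)$ (these are genuinely constant), $T_{\pi_n}\to T_\pi$, and lower semicontinuity of $\Ent_\meas(\mu_t^n)$, which is what makes the inequality survive the limit in the correct direction.
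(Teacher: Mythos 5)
Your proof is correct, but it takes a genuinely different route from the paper's. The paper invokes the \emph{nestedness} of the weak timelike curvature-dimension condition (quoted from \cite[Lem.~3.10]{cavalletti2020}), which guarantees that a \emph{single} timelike $p$-dualizing coupling $\pi$ and a \emph{single} $\ell_p$-geodesic $(\mu_t)_{t\in[0,1]}$ witness the $(K,N',p)$-convexity inequality simultaneously for every $N'\geq N$; one can then simply Taylor-expand $\sigma_{K,N'}$ and $\scrU_{N'}$ in $1/N'$ and let $N'\to\infty$ with nothing else to extract. You, on the other hand, bypass nestedness entirely: you let the witnessing objects depend on $N_n$, appeal to weak compactness of $\OptGeo_{\ell_p}(\mu_0,\mu_1)$ (\autoref{Le:Plans}\ref{La:Compact}) together with the standing $\scrK$-global hyperbolicity, and then close the argument using $T_{\pi_n}\to T_\pi$ and weak lower semicontinuity of $\Ent_\meas$ on $\scrP(J(\mu_0,\mu_1))$. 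Both arguments are sound. The paper's version is shorter because nestedness does the compactness-free heavy lifting up front; your version is self-contained with respect to nestedness at the cost of some additional machinery, and the monotone-rewriting you sketch --- apply $-N\log(\cdot)$ to both sides before passing to the limit, so that lower semicontinuity of $\Ent_\meas(\mu_t^n)$ acts in the correct direction --- is exactly what makes the interchange legitimate. One cosmetic slip: the expansion you quote should read $\sin(rx)/\sin(x) = r + \tfrac{r(1-r^2)}{6}x^2 + o(x^2)$ (your written sign gives the $\sinh$ version, or equivalently matches $\sigma_{K,N}^{(r)}(\vartheta) = r - \tfrac{K}{6N}(r^3-r)\vartheta^2 + \mathrm{o}(1/N)$ after absorbing the sign of $K$); this is a typo, not a gap.
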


\begin{proof} We follow the argument for \cite[Lem.~2.12]{erbar2015}. By nestedness of the weak timelike curvature-dimension condition \cite[Lem.~3.10]{cavalletti2020}, given any strongly timelike $p$-dualizable pair $\smash{(\mu_0,\mu_1)\in(\scrP_\comp(\mms)\cap\Dom(\Ent_\meas))^2}$, there exists a   timelike $p$-dualizing $\pi\in\Pi_\ll(\mu_0,\mu_1)$ and a $\Dom(\Ent_\meas)$-valued timelike proper-time parametrized $\smash{\ell_p}$-geodesic $(\mu_t)_{t\in [0,1]}$ from $\mu_0$ to $\mu_1$ such that, for every $N'\geq N$, we have
\begin{align}\label{Eq:Subtract}
\scrU_{N'}(\mu_t) \geq \sigma_{K,N'}^{(1-t)}(T_\pi)\,\scrU_{N'}(\mu_0) + \sigma_{K,N'}^{(t)}(T_\pi)\,\scrU_{N'}(\mu_1).
\end{align}
Note that $\pi$ and $\smash{(\mu_t)_{t\in[0,1]}}$ can be chosen independently of $N'$. Using that
\begin{align*}
\sigma_{K,N'}^{(t)}(\vartheta) &= t -\frac{K}{6N'}\,(t^3-t)\,\vartheta^2 + \rmo((N')^{-1}),\\
\scrU_{N'}(\mu) &= 1-\frac{1}{N'}\Ent_\meas(\mu) + \rmo((N')^{-1})
\end{align*}
for every $\mu\in\Dom(\Ent_\meas)$ as $N'\to\infty$, the claim follows by subtracting $1$ at both sides of \eqref{Eq:Subtract}, multiplying the resulting inequality by $N'$, and letting $N' \to\infty$.
\end{proof}

\subsubsection{Geodesics with uniformly bounded densities}\label{Sub:MRES} Now we turn to our actual goal, namely \autoref{Th:Ent linfty}. Its statement holds in a stronger form, cf. \autoref{Re:BL}, but we prefer to present a slightly different proof for the sole existence of timelike proper-time parametrized $\smash{\ell_p}$-geodesics with bounded densities. This underlines the role that $\smash{\ell_p}$-geodesics play also in the $\smash{\wTCD_p(K,\infty)}$ case and is exemplary for a similar result in the next \autoref{Sub:Conv} where, however, the relevant exponential term does not appear.

Reproducing the proof of \autoref{Le:Bounded midpoints} gives the following.

\begin{lemma}\label{Le:Lemma} Let $(\mms,\met,\meas,\ll,\leq,\tau)$ obey $\smash{\wTCD_p(K,\infty)}$ for some $p\in (0,1)$ and $K\in\R$. Assume that $\smash{(\mu_0,\mu_1)=(\rho_0\,\meas,\rho_1\,\meas)\in\scrP_\comp^\ac(\mms,\meas)^2}$ is strongly timelike $p$-dualizable with $\rho_0,\rho_1\in\Ell^\infty(\mms,\meas)$. Then there is a timelike proper-time parametrized $\smash{\ell_p}$-geodesic $(\mu_t)_{t\in [0,1]}$ from $\mu_0$ to $\mu_1$ such that $\mu_t=\rho_t\,\meas\in\Dom(\Ent_\meas)$ for every $t\in (0,1)$, and
\begin{align*}
\meas\big[\{\rho_{1/2}>0\}\big] \geq \rme^{-K^-D^2/8}\,\max\!\big\lbrace\Vert\rho_0\Vert_{\Ell^\infty(\mms,\meas)}, \Vert \rho_1\Vert_{\Ell^\infty(\mms,\meas)}\big\rbrace^{-1},
\end{align*}
where $D\geq \sup\tau(\supp\mu_0\times\supp\mu_1)$.
\end{lemma}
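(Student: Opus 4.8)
The plan is to reproduce, almost verbatim, the structure of the proof of \autoref{Le:Bounded midpoints}, replacing the computation with the distortion coefficients $\sigma_{K,N}^{(r)}$ and the exponentiated entropy $\scrU_N$ by a direct application, at time $t=1/2$, of the semiconvexity inequality for $\Ent_\meas$ defining $\smash{\wTCD_p(K,\infty)}$ in \autoref{Def:CD K infty}. First I would record the preliminaries exactly as there: since the space is $\scrK$-globally hyperbolic, $D := \sup\tau(\supp\mu_0\times\supp\mu_1)<\infty$; the hypotheses force $\mu_0,\mu_1\in\Dom(\Ent_\meas)$; and since $\smash{\wTCD_p(K,\infty)}$ trivially implies $\smash{\wTCD_p(-K^-,\infty)}$ (as in \cite[Lem.~3.10]{cavalletti2020}) we may and will assume $K\leq 0$. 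Let $\pi\in\Pi_\ll(\mu_0,\mu_1)$ be a timelike $p$-dualizing coupling and $(\mu_t)_{t\in[0,1]}$ an $\smash{\ell_p}$-geodesic along which $\Ent_\meas$ obeys the defining inequality; as in the proof of \autoref{Le:Bounded midpoints}, \autoref{Le:Plans} applies (all $\smash{\ell_p}$-optimal couplings of the strongly timelike $p$-dualizable pair are chronological by \autoref{Re:Always exists}), so $(\mu_t)_{t\in[0,1]}$ is represented by some $\bdpi\in\OptGeo_{\ell_p}(\mu_0,\mu_1)$ and each $\mu_t$ has compact support in $J(\mu_0,\mu_1)$; combining Jensen's inequality (lower bound) with the finite upper bound furnished by the defining inequality then gives $\mu_t=\rho_t\,\meas\in\Dom(\Ent_\meas)$ for every $t\in(0,1)$. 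In particular $E:=\{\rho_{1/2}>0\}$ lies in a compact set, so $\meas[E]\in(0,\infty)$.

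Next I would carry out the two matching estimates. Put $R:=\max\{\Vert\rho_0\Vert_{\Ell^\infty(\mms,\meas)},\Vert\rho_1\Vert_{\Ell^\infty(\mms,\meas)}\}$. Since $\tau\leq D$ on $\supp\mu_0\times\supp\mu_1\supset\supp\pi$, we have $T_\pi\leq D$, so, using $K\leq 0$, the inequality of \autoref{Def:CD K infty} at $t=1/2$ reads
\[
\Ent_\meas(\mu_{1/2})\leq \tfrac12\,\Ent_\meas(\mu_0)+\tfrac12\,\Ent_\meas(\mu_1)-\tfrac{K}{8}\,T_\pi^2\leq \tfrac12\,\Ent_\meas(\mu_0)+\tfrac12\,\Ent_\meas(\mu_1)+\tfrac{K^-}{8}\,D^2 .
\]
Because $\rho_i\log\rho_i\leq \rho_i\log\Vert\rho_i\Vert_{\Ell^\infty(\mms,\meas)}$ $\meas$-a.e.\ and $\int_\mms\rho_i\d\meas=1$, one has $\Ent_\meas(\mu_i)\leq\log\Vert\rho_i\Vert_{\Ell^\infty(\mms,\meas)}\leq\log R$ for $i\in\{0,1\}$, hence $\Ent_\meas(\mu_{1/2})\leq \log R+K^-D^2/8$. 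On the other hand, Jensen's inequality for the convex function $x\mapsto x\log x$ applied to the probability measure $\meas[E]^{-1}\,\meas\mres E$ gives $\Ent_\meas(\mu_{1/2})\geq-\log\meas[E]$, exactly as in the proof of \autoref{Le:Maximizer}. Combining the two bounds yields $-\log\meas[E]\leq\log R+K^-D^2/8$, i.e.
\[
\meas\big[\{\rho_{1/2}>0\}\big]=\meas[E]\geq \rme^{-K^-D^2/8}\,R^{-1},
\]
which is the claimed estimate.

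I do not expect a genuine obstacle: the argument is a one-to-one transcription of the proof of \autoref{Le:Bounded midpoints}, with the $1/N$-exponentiated computation replaced by its first-order counterpart, and nothing about the infinite-dimensional setting is harder. The only two points requiring mild care are the sign bookkeeping after reducing to $K\leq 0$ (so that the term $-\tfrac{K}{8}T_\pi^2$ becomes $+\tfrac{K^-}{8}T_\pi^2$, which is then bounded above by $+\tfrac{K^-}{8}D^2$), and verifying that the $\smash{\wTCD_p(K,\infty)}$ inequality genuinely forces $\mu_t\in\Dom(\Ent_\meas)$ for \emph{all} $t\in(0,1)$ (upper bound from the inequality, lower bound from compact support and Jensen), rather than merely finiteness at the single time $t=1/2$.
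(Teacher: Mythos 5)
Your proof is correct and is exactly the "reproduce the proof of Lemma~\ref{Le:Bounded midpoints}" argument the paper invokes, with the distortion coefficient estimate replaced by the direct first-order semiconvexity inequality from Definition~\ref{Def:CD K infty} evaluated at $t=1/2$, the bound $\Ent_\meas(\mu_i)\leq\log R$ in place of $\scrU_N(\mu_i)\geq R^{-1/N}$, and the Jensen lower bound $\Ent_\meas(\mu_{1/2})\geq-\log\meas[E]$ in place of $\scrU_N(\mu_{1/2})\leq\meas[E]^{1/N}$. The sign bookkeeping after reducing to $K\leq 0$ and the verification that $\mu_t\in\Dom(\Ent_\meas)$ for all $t\in(0,1)$ (finite upper bound from the defining inequality, finite lower bound from compact support and Jensen) are both handled correctly.
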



\begin{theorem}\label{Th:Ent linfty} Suppose $\smash{\wTCD_p(K,\infty)}$ for $p\in (0,1)$ and $K\in\R$. Let $(\mu_0,\mu_1)=(\rho_0\,\meas,\rho_1\,\meas)\in\smash{\scrP_\comp^\ac(\mms,\meas)^2}$ be strongly timelike $p$-dualizable, and assume that $\rho_0,\rho_1\in\Ell^\infty(\mms,\meas)$.  Then there is a timelike proper-time parametrized $\smash{\ell_p}$-geodesic $(\mu_t)_{t\in [0,1]}$ connecting $\mu_0$ and $\mu_1$ such that for every $t\in [0,1]$, $\mu_t\in\Dom(\Ent_\meas)$ and
\begin{align}\label{Eq:BSDAJ}
\Vert\rho_t\Vert_{\Ell^\infty(\mms,\meas)} \leq \rme^{K^-D^2/12}\,\max\!\big\lbrace\Vert\rho_0\Vert_{\Ell^\infty(\mms,\meas)},\Vert\rho_1\Vert_{\Ell^\infty(\mms,\meas)}\big\rbrace,
\end{align}
where $D:= \sup\tau(\supp\mu_0\times\supp\mu_1)$.
\end{theorem}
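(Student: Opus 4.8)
The plan is to run the bisection scheme of \autoref{Ch:Good} with $\scrU_N$ — which there is \emph{maximized} — replaced by the relative entropy $\Ent_\meas$, which is \emph{minimized} (consistently with $\scrU_N=\rme^{-\Ent_\meas/N}$). First I would record the $\smash{\wTCD_p(K,\infty)}$ analogue of \autoref{Le:Maximizer}: given $\mu_0,\mu_1$ as in the statement, the functional $\bdpi\mapsto\Ent_\meas((\eval_{1/2})_\push\bdpi)$ attains a finite minimum on $\smash{\OptGeo_{\ell_p}(\mu_0,\mu_1)}$ — existence by weak lower semicontinuity of $\Ent_\meas$ on measures supported in the fixed compact set $J(\mu_0,\mu_1)$ together with weak compactness of $\smash{\OptGeo_{\ell_p}(\mu_0,\mu_1)}$ from \autoref{Le:Plans}, and finiteness by feeding in the $1/2$-slice of a geodesic witnessing the $\smash{\wTCD_p(K,\infty)}$ inequality as a competitor. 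Then, exactly as in \autoref{Sub:Construction}, I would build the candidate $\smash{(\mu_t)_{t\in[0,1]}}$ by dyadic bisection: at each node pick a midpoint of the pair of its neighbours — which is strongly timelike $p$-dualizable by \autoref{Le:Strong}, so that the analogue of \autoref{Le:Maximizer} applies — minimizing the entropy of its $1/2$-slice, and complete to an $\smash{\ell_p}$-geodesic on $[0,1]$ as in \autoref{Cor:Ex geodesic}, represented by some $\smash{\bdpi\in\OptGeo_{\ell_p}(\mu_0,\mu_1)}$ via \autoref{Le:Plans}.

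Next I would reuse verbatim the mass-excess functionals $\smash{\scrF_c}$ and $\smash{\scrE_c=\scrE_c^{1/2}}$ of \autoref{Sub:Uniform density bounds}, which are $N$-independent, with $\smash{\rme^{K^-D^2/8}\,\max\{\Vert\rho_0\Vert_{\Ell^\infty},\Vert\rho_1\Vert_{\Ell^\infty}\}}$ now in the rôle of the threshold $\thr$. The spread estimate of \autoref{Le:Lemma} takes over the rôle of \autoref{Le:Bounded midpoints}. The contradiction schemes of \autoref{Pr:Min 0} and \autoref{Pr:Max = 0} then carry over: one shuffles the part of an excess-carrying midpoint whose density exceeds the threshold onto a low-density competing midpoint produced by \autoref{Le:Lemma}, and checks that this strictly decreases $\Ent_\meas$ of the $1/2$-slice — contradicting entropy-minimality. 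The entropy bookkeeping for the reshuffled plan is done, exactly as inside the proof of \autoref{Pr:Max = 0}, by approximating $\Ent_\meas$ with a Rényi-type entropy through \cite[Lem.~4.1]{sturm2006a}. This delivers the $\smash{\Ell^\infty}$-bound with the chosen threshold at every midpoint of the construction, hence — by the defining relations $\smash{\mu_t=\rho_t\,\meas}$ and weak lower semicontinuity as in \autoref{Cor:Cor} — at every dyadic, and then every, $t\in[0,1]$; this already proves \eqref{Eq:BSDAJ} with $\rme^{K^-D^2/8}$ in place of $\rme^{K^-D^2/12}$, which is the ``stronger form'' alluded to in \autoref{Re:BL}.

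To sharpen the exponent from $\tfrac18$ to $\tfrac{1}{12}$ one cannot simply iterate the crude midpoint bound; instead I would, as in Steps 1--2 of \autoref{Pr:Semic}, keep track along the dyadic construction of which $\smash{\ell_p}$-optimal coupling realizes the semiconvexity defect $-\tfrac K2\,t(1-t)\,T_\pi^2$ at each scale — by a monotone selection of couplings and a final Prokhorov/tightness passage to an $\smash{\ell_p}$-optimal $\smash{\pi\in\Pi_\ll(\mu_0,\mu_1)}$, using \autoref{Le:Convergence lp}. The $N=\infty$ shadow of \autoref{Le:Distortion} is the elementary self-similarity identity $q\!\big(\tfrac{a+b}{2}\big)=\tfrac12 q(a)+\tfrac12 q(b)+\tfrac14(b-a)^2$ for $q(t)=t(1-t)$; summing the scale-$2^{-n}$ defects along the tree it forces the cumulative defect at time $t$ to be of the form $\tfrac{K^-}{2}\,T_\pi^2\,t(1-t)$, and the averaging effect built into the construction converts the pointwise weight $\sup_t t(1-t)=\tfrac14$ into $\int_0^1 t(1-t)\,\rmd t=\tfrac16$; together with $T_\pi^2\leq D^2$ this produces the exponent $\tfrac{K^-}{2}\cdot\tfrac16\,D^2=\tfrac{K^-D^2}{12}$ in the threshold, and the density bound propagates to all $t$ as before.

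The main obstacle is precisely this last step: making the local semiconvexity defects at scale $2^{-n}$ genuinely telescope into the global $t(1-t)$ profile — rather than each level contributing a fresh $\sup$-sized loss — requires the consistent coupling selection of \autoref{Pr:Semic} to interact correctly with the \emph{entropy-minimizing} (rather than defect-witnessing) choice of midpoints, together with the stability of strong timelike $p$-dualizability along the limiting geodesic from \autoref{Le:Strong}. A comparatively routine secondary point is verifying that the excess-reshuffling estimates of \autoref{Pr:Min 0} and \autoref{Pr:Max = 0} remain valid with $\Ent_\meas$ in place of $\scrU_N$, uniformly in the Rényi approximation parameter.
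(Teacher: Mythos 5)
Your proposal departs from the paper's proof in a structural way, and both of its halves contain genuine gaps.

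First, the paper does \emph{not} iterate by choosing $\Ent_\meas$-minimizing midpoints and then re-running a \autoref{Pr:Max = 0}-type contradiction. It instead directly selects at each dyadic node a midpoint of an $\smash{\ell_p}$-optimal geodesic plan minimizing the mass-excess functional $\scrE_\thr$ (via \autoref{Cor:Existence minimizer} and \autoref{Cor:Thr}) with respect to the \emph{local} threshold of the node's parent pair. Since \autoref{Cor:Thr} already shows this minimizer has zero excess, the entropy-minimization and Rényi approximation steps you propose are not needed; your route is not wrong in spirit, but it is strictly heavier than the one actually used, and you would still have to redo the Rényi bookkeeping at every level.

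Second, and more seriously, your claim that the bisection ``already proves \eqref{Eq:BSDAJ} with $\rme^{K^-D^2/8}$'' misses the accumulation through the tree. The threshold supplied by \autoref{Le:Lemma} for the pair $(\mu_0,\mu_{1/2})$ is $\rme^{K^-(D/2)^2/8}\,\max\{\Vert\rho_0\Vert_{\Ell^\infty},\Vert\rho_{1/2}\Vert_{\Ell^\infty}\}$, and in the worst case $\Vert\rho_{1/2}\Vert_{\Ell^\infty}$ is already as large as $\rme^{K^-D^2/8}\,\max\{\Vert\rho_0\Vert_{\Ell^\infty},\Vert\rho_1\Vert_{\Ell^\infty}\}$, so the bound for $\mu_{1/4}$ has exponent $K^-D^2(\tfrac18+\tfrac1{32})>K^-D^2/8$; each deeper node multiplies in another local factor. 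This is exactly what \eqref{Eq:RHOT} encodes, and the theorem's constant arises from summing the geometric series of shrinking local exponents, not from a single application of \autoref{Le:Lemma}. Relatedly, your reading of \autoref{Re:BL} is incorrect: that remark is about additionally enforcing the $\wTCD_p(K,\infty)$ semiconvexity inequality along the geodesic à la \cite{ambrosiomondino2015}, not a better constant.

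Third, the ``sharpening'' step is not a proof. The self-similarity identity for $q(t)=t(1-t)$ (the $N=\infty$ shadow of \autoref{Le:Distortion}) makes the \emph{entropy} defect profile telescope, as in \autoref{Pr:Semic}, but the $\Ell^\infty$ recursion takes a \emph{maximum} over the two ancestor densities, not a $\tfrac12$--$\tfrac12$ average. There is no mechanism in the bisection by which $\sup_t t(1-t)=\tfrac14$ gets replaced by $\int_0^1 t(1-t)\,\rmd t=\tfrac16$, and you yourself flag this as ``the main obstacle'' without resolving it. The paper's proof performs no such refinement: it obtains \eqref{Eq:BSDAJ} purely from the shrinking local-diameter iteration in \eqref{Eq:RHOT} and the weak lower semicontinuity of $\scrF_\thr$ for the completion to all $t\in[0,1]$.
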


\begin{proof} We only outline the proof and highlight the necessary changes compared to our arguments in  \autoref{Ch:Good}. Let us redefine
\begin{align*}
\thr :=  \rme^{K^-D^2/8}\,\max\!\big\lbrace \Vert\rho_0\Vert_{\Ell^\infty(\mms,\meas)}, \Vert\rho_1\Vert_{\Ell^\infty(\mms,\meas)}\big\rbrace.
\end{align*}
Unlike \autoref{Sub:Construction}, here we directly construct a candidate by selecting our midpoints as minimizers of the functional $\scrE_\thr$ from \eqref{Eq:Functional E}. Let $\smash{\bdpi\in\OptTGeo_{\ell_p}^\tau(\mu_0,\mu_1)}$ be a minimizer of $\scrE_\thr$ according to \autoref{Cor:Existence minimizer}. Observe that  \autoref{Cor:Thr} still holds in this case, where the modified threshold $\thr$ comes from \autoref{Le:Lemma}. Define  $\mu_{1/2} := (\eval_{1/2})_\push\bdpi\in\scrP_\comp(\mms)$. By \autoref{Cor:Thr}, we have $\mu_{1/2}\in\Dom(\Ent_\meas)$  and
\begin{align*}
\Vert \rho_{1/2}\Vert_{\Ell^\infty(\mms,\meas)} \leq \rme^{K^-D^2/8}\,\max\!\big\lbrace \Vert \rho_0\Vert_{\Ell^\infty(\mms,\meas)},\Vert \rho_1\Vert_{\Ell^\infty(\mms,\meas)}\big\rbrace
\end{align*}
subject to the decomposition $\mu_{1/2}=\rho_{1/2}\,\meas$. By \autoref{Le:Strong}, the pairs $(\mu_0,\mu_{1/2})$ and $(\mu_{1/2},\mu_1)$ are strongly timelike $p$-dualizable. The con\-struction of $\mu_{1/2}$ yields that $\smash{\sup \tau(\supp\mu_0\times\supp\mu_{1/2})}$ and $\smash{\sup\tau(\supp\mu_{1/2}\times\supp\mu_1)}$ are no larger than $D/2$. Moreover, $\mu_{1/2}$ is an $1/2$-midpoint with respect to $\smash{\ell_p}$. Next, we  construct $\mu_{1/4}\in\scrP_\comp(\mms)$ and $\mu_{3/4}\in\scrP_\comp(\mms)$ as above as midpoints of some element of $\smash{\OptTGeo_{\ell_p}^\tau(\mu_0,\mu_{1/2})}$ and $\smash{\OptTGeo_{\ell_p}^\tau(\mu_{1/2},\mu_1)}$ according to \autoref{Cor:Existence minimizer}, \autoref{Cor:Thr}, and \autoref{Le:Strong}, respectively. Proceeding iteratively in this way after gluing, as in \autoref{Sub:Construction}, we get a timelike proper-time parametrized $\smash{\ell_p}$-geodesic $(\mu_t)_{t\in [0,1]}$ with the following properties. For every $t\in [0,1]\cap\boldsymbol{\mathrm{D}}$ written as $\smash{t = k\,2^{-n}}$, $n\in\N$ and odd $k\in\{1,\dots,2^n-1\}$, we have $\smash{\mu_t\in\Dom(\Ent_\meas)}$ with
\begin{align}\label{Eq:RHOT}
\begin{split}
\Vert \rho_t\Vert_{\Ell^\infty(\mms,\meas)} &\leq  \rme^{4^{-n+1}K^-D/8}\,\max\!\big\lbrace\Vert \rho_{(k-1)2^{-n}}\Vert_{\Ell^\infty(\mms,\meas)},\\
&\qquad\qquad \Vert\rho_{(k+1)2^{-n}}\Vert_{\Ell^\infty(\mms,\meas)}\big\rbrace
\end{split}
\end{align}
subject to the decomposition $\mu_s = \rho_s\,\meas$ for all $s\in[0,1]$ under consideration. Here we have used that by our midpoint construction along timelike $\smash{\ell_p}$-optimal geodesic plans, for every $n\in\N$ and every odd $k\in\{0,\dots,2^n\}$ the function $\tau$ is no larger than $\smash{2^{-n+1}\,D}$ on $\smash{\supp \mu_{(k-1)2^{-n}} \times \supp\mu_{(k+1)2^{-n}}}$. Inductively, \eqref{Eq:BSDAJ} holds for every $t\in [0,1]\cap \boldsymbol{\mathrm{D}}$.

By weak lower semicontinuity of the functional $\scrF_\thr$ from \eqref{Eq:F Functional} on $\scrP(J(\mu_0,\mu_1))$, see the proof of \autoref{Cor:Existence minimizer}, and \eqref{Eq:RHOT} we  get $\scrF_\thr(\mu_t)=0$ for every $t\in[0,1]$. This  implies $\mu_t\in\Dom(\Ent_\meas)$ and \eqref{Eq:RHOT} for its density with respect to $\meas$.
\end{proof}

\begin{remark}\label{Re:BL} Combining the arguments of \cite[Ch.~4]{ambrosiomondino2015} with our strategy in \autoref{Ch:Good}, one can construct timelike proper-time parametrized $\smash{\ell_p}$-geodesics satisfying the conclusion of \autoref{Th:Ent linfty} along which, in addition, the semiconvexity inequality for $\Ent_\meas$ defining $\smash{\wTCD_p(K,\infty)}$ holds.
\end{remark}

\subsection{General timelike convex functionals}\label{Sub:Conv} 
Similar conclusions as in \autoref{Th:Ent linfty} can also be drawn for any kind of functional obeying a certain timelike convexity property as follows. Let $f \colon [0,\infty)\to\R$ be convex with $f(0)=0$, and define the functional $\rmE\colon \scrP(\mms)\to [-\infty,\infty]$ by
\begin{align*}
\rmE(\mu) := \begin{cases} \displaystyle\int_\mms f(\rho)\d\meas & \textnormal{if } \mu = \rho\,\meas \ll \meas,\  f^+(\rho)\in\Ell^1(\mms,\meas),\\
\infty & \textnormal{otherwise}.
\end{cases}
\end{align*}

\begin{theorem}\label{Th:CONV} Let $\rmE$ be weakly $(0,N,p)$-convex relative to $\scrP_\comp(\mms)^2\cap\STD_p(\mms)$ according to \autoref{Def:Convex}, $p\in (0,1)$ and $N\in (0,\infty]$, and assume that $r\mapsto f(r)/r$ is strictly increasing on $(0,\infty)$.  Then for every strongly timelike $p$-dualizable pair $(\mu_0,\mu_1)=(\rho_0\,\meas,\rho_1\,\meas)\in \scrP_\comp^\ac(\mms,\meas)$ with $\rho_0,\rho_1\in\Ell^\infty(\mms,\meas)$, there exists a timelike proper-time parametrized $\smash{\ell_p}$-geodesic $(\mu_t)_{t\in [0,1]}$ such that for every $t\in (0,1)$, we have $\mu_t=\rho_t\,\meas\in \Dom(\rmE)$ and
\begin{align*}
\Vert \rho_t\Vert_{\Ell^\infty(\mms,\meas)} \leq \max\!\big\lbrace \Vert\rho_0\Vert_{\Ell^\infty(\mms,\meas)}, \Vert \rho_1\Vert_{\Ell^\infty(\mms,\meas)}\big\rbrace.
\end{align*}
\end{theorem}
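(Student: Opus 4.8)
The plan is to run the proof of \autoref{Th:Ent linfty} with $K=0$; since $K^-=0$ all the exponential distortion factors collapse to $1$, and the only genuinely new ingredient is a spread-of-mass estimate for the general functional $\rmE$ whose threshold is exactly $\thr:=\max\{\Vert\rho_0\Vert_{\Ell^\infty(\mms,\meas)},\Vert\rho_1\Vert_{\Ell^\infty(\mms,\meas)}\}$. So the first step is to prove the analogue of \autoref{Le:Bounded midpoints}: if $(\mu_0,\mu_1)=(\rho_0\,\meas,\rho_1\,\meas)$ is strongly timelike $p$-dualizable with $\rho_0,\rho_1\in\Ell^\infty(\mms,\meas)$, then some $\smash{\ell_p}$-geodesic $(\mu_t)_{t\in[0,1]}$ witnessing the $(0,N,p)$-convexity of $\rmE$ has $\mu_t=\rho_t\,\meas\in\Dom(\rmE)$ for $t\in(0,1)$ and $\meas[\{\rho_{1/2}>0\}]\ge\thr^{-1}$. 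Here the monotonicity of $r\mapsto f(r)/r$ is used twice. First, from $\rho_i\le\thr$ $\meas$-a.e.\ we get $\rmE(\mu_i)=\int (f(\rho_i)/\rho_i)\,\rho_i\d\meas\le (f(\thr)/\thr)\int\rho_i\d\meas=f(\thr)/\thr$, and feeding this into the convexity inequality (through $\scrU_N=\rme^{-\rmE/N}$ and $\tfrac12(a+b)\ge\min\{a,b\}$ when $N<\infty$, or directly when $N=\infty$) gives $\rmE(\mu_{1/2})\le f(\thr)/\thr<\infty$, so $\mu_{1/2}\ll\meas$. Second, with $E:=\{\rho_{1/2}>0\}$ (of finite $\meas$-measure by $\scrK$-global hyperbolicity) Jensen's inequality yields $f(\thr)/\thr\ge\rmE(\mu_{1/2})=\int_E f(\rho_{1/2})\d\meas\ge\meas[E]\,f(1/\meas[E])$; since $x\mapsto x\,f(1/x)$ is nonincreasing (again by monotonicity of $f(r)/r$) and $\thr^{-1}f(\thr)$ is its value at $x=\thr^{-1}$, this forces $\meas[E]\ge\thr^{-1}$.

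With this lemma in hand, the density-estimate machinery of \autoref{Ch:Good} transfers without change, since the excess functionals $\scrF_c$ and $\scrE_c:=\scrE_c^{1/2}$ of \eqref{Eq:F Functional}--\eqref{Eq:Functional E} only see densities and carry no trace of $f$. Concretely, \autoref{Cor:Existence minimizer} goes through verbatim (weak lower semicontinuity of $\scrF_c$ on $\scrP(J(\mu_0,\mu_1))$ plus weak compactness of $\smash{\OptGeo_{\ell_p}}$ from \autoref{Le:Plans}); then \autoref{Pr:Min 0} goes through verbatim, its only appeal to curvature being through \autoref{Le:Bounded midpoints}, now replaced by the lemma above with the \emph{same} threshold $\thr$; then \autoref{Cor:Thr} gives $\min\scrE_\thr(\smash{\OptGeo_{\ell_p}(\mu_0,\mu_1)})=0$. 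Strong timelike $p$-dualizability guarantees, via \autoref{Re:Always exists}, that all $\smash{\ell_p}$-optimal couplings of the relevant endpoints are chronological, so all these results apply throughout.

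Finally one runs the bisection of the proof of \autoref{Th:Ent linfty}, which becomes much simpler here because $K=0$ makes the level-dependent threshold stationary. Pick $\mu_{1/2}:=(\eval_{1/2})_\push\bdpi$ for a minimizer $\bdpi$ of $\scrE_\thr$ on $\smash{\OptGeo_{\ell_p}(\mu_0,\mu_1)}$; by \autoref{Cor:Thr}, $\mu_{1/2}=\rho_{1/2}\,\meas\in\Dom(\rmE)$ with $\Vert\rho_{1/2}\Vert_{\Ell^\infty(\mms,\meas)}\le\thr$, and $\mu_{1/2}$ is an $\smash{\ell_p}$-midpoint. By \autoref{Le:Strong} the pairs $(\mu_0,\mu_{1/2})$ and $(\mu_{1/2},\mu_1)$ are strongly timelike $p$-dualizable, with the same density bound $\le\thr$ (no curvature inflation), so iterating produces $(\mu_t)_{t\in[0,1]\cap\boldsymbol{\mathrm{D}}}$ with $\Vert\rho_t\Vert_{\Ell^\infty(\mms,\meas)}\le\thr$ for all dyadic $t$ --- the recursion \eqref{Eq:RHOT} collapses to $\Vert\rho_{k2^{-n-1}}\Vert_{\Ell^\infty}\le\max\{\Vert\rho_{(k-1)2^{-n}}\Vert_{\Ell^\infty},\Vert\rho_{(k+1)2^{-n}}\Vert_{\Ell^\infty}\}$. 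Completing to an $\smash{\ell_p}$-geodesic $(\mu_t)_{t\in[0,1]}$ exactly as in \autoref{Cor:Ex geodesic}, weak lower semicontinuity of $\scrF_\thr$ on $\scrP(J(\mu_0,\mu_1))$ together with the dyadic bound gives $\scrF_\thr(\mu_t)=0$ for every $t\in[0,1]$; hence $\mu_t=\rho_t\,\meas$ with $\Vert\rho_t\Vert_{\Ell^\infty(\mms,\meas)}\le\thr$, and since $\rho_t$ is bounded with compact support, $f^+(\rho_t)\in\Ell^1(\mms,\meas)$, i.e.\ $\mu_t\in\Dom(\rmE)$.

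The main obstacle --- indeed essentially the only point requiring thought beyond transcription --- is the spread-of-mass lemma: one must recognize that the hypothesis ``$r\mapsto f(r)/r$ increasing'' is exactly what converts the endpoint bound $\rho_i\le\thr$ into both the upper bound $\rmE(\mu_i)\le f(\thr)/\thr$ and, via the nonincreasing function $x\mapsto x\,f(1/x)$ and Jensen, the correct lower bound $\meas[\{\rho_{1/2}>0\}]\ge\thr^{-1}$ on the spread of the chosen midpoint. Once the threshold is identified as exactly $\max\{\Vert\rho_0\Vert_{\Ell^\infty(\mms,\meas)},\Vert\rho_1\Vert_{\Ell^\infty(\mms,\meas)}\}$, everything else is a $K=0$ specialization of the arguments already developed in \autoref{Ch:Good}.
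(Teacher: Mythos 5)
Your proposal is correct and follows the same approach as the paper. The paper's own proof of Theorem~\ref{Th:CONV} is exactly the spread-of-mass lemma you isolate (it uses $\rho_i\le R$ and the monotonicity of $r\mapsto f(r)/r$ to get $\rmE(\mu_i)\le f(R)/R$, then the $(0,N,p)$-convexity inequality to transfer this to $\rmE(\mu_t)$, then Jensen and monotonicity again to conclude $\meas[\{\rho_t>0\}]\ge R^{-1}$), followed by the remark that the rest is treated analogously to \autoref{Ch:Good} and \autoref{Sub:Infinite dim}; your explicit run-through of \autoref{Cor:Existence minimizer}, \autoref{Pr:Min 0}, \autoref{Cor:Thr}, and the bisection of \autoref{Th:Ent linfty} is precisely what that remark alludes to.
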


Again, the \emph{proof} of \autoref{Th:CONV} is mainly based on the following result. The rest is treated analogously to \autoref{Ch:Good} and \autoref{Sub:Infinite dim}.

\begin{lemma} Let $\rmE$ be as hypothesized in \autoref{Th:CONV}. Let $(\mu_0,\mu_1)=(\rho_0\,\meas,\rho_1\,\meas)\in\scrP_\comp^\ac(\mms,\meas)^2$ be strongly timelike $p$-dualizable with $\rho_0,\rho_1\in\Ell^\infty(\mms,\meas)$. Then there is a timelike proper-time parametrized $\ell_p$-geodesic $(\mu_t)_{t\in [0,1]}$ connecting $\mu_0$ and $\mu_1$ such that for every $t\in (0,1)$, $\mu_t=\rho_t\,\meas\in\Dom(\rmE)$ and
\begin{align*}
\meas\big[\{\rho_t>0\}\big]  \geq \max\!\big\lbrace\Vert\rho_0\Vert_{\Ell^\infty(\mms,\meas)},\Vert \rho_1\Vert_{\Ell^\infty(\mms,\meas)}\big\rbrace^{-1}.
\end{align*}
\end{lemma}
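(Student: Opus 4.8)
The plan is to follow the proof of \autoref{Le:Bounded midpoints} almost verbatim, the two differences being that the exponential distortion factor is absent (since $K=0$, one has $\sigma_{0,N}^{(r)}(\vartheta)=r$ for every $\vartheta\in[0,\infty]$) and that the monotonicity of $r\mapsto f(r)/r$ takes over the role previously played by the explicit form $\scrU_N(\mu)=\rme^{-\Ent_\meas(\mu)/N}$. Put $R:=\max\{\Vert\rho_0\Vert_{\Ell^\infty(\mms,\meas)},\Vert\rho_1\Vert_{\Ell^\infty(\mms,\meas)}\}\in(0,\infty)$. First I would note that $\mu_0,\mu_1\in\Dom(\rmE)$ — by continuity of $f$ on $[0,\infty)$ together with boundedness and compact support of $\rho_0,\rho_1$ — and invoke the hypothesized weak $(0,N,p)$-convexity of $\rmE$ relative to $\scrP_\comp(\mms)^2\cap\STD_p(\mms)$: it provides a timelike $p$-dualizing $\pi\in\Pi_\ll(\mu_0,\mu_1)$ and an $\ell_p$-geodesic $(\mu_t)_{t\in[0,1]}$ from $\mu_0$ to $\mu_1$ such that, for $N<\infty$ and every $t\in[0,1]$,
\[
\scrU_N(\mu_t)\ \geq\ (1-t)\,\scrU_N(\mu_0)+t\,\scrU_N(\mu_1)
\]
(and, for $N=\infty$, directly $\rmE(\mu_t)\leq(1-t)\rmE(\mu_0)+t\rmE(\mu_1)$). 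Applying the convex map $-N\log(\cdot)$ to the finite-$N$ inequality, in all cases one obtains $\rmE(\mu_t)\leq\max\{\rmE(\mu_0),\rmE(\mu_1)\}$ for every $t\in[0,1]$.

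Next I would localize. By \autoref{Le:Plans} the geodesic $(\mu_t)_{t\in[0,1]}$ is represented by some $\bdpi\in\OptGeo_{\ell_p}(\mu_0,\mu_1)$, so $\supp\mu_t\subset J(\mu_0,\mu_1)$, a compact set by $\scrK$-global hyperbolicity; moreover the above inequality forces $\scrU_N(\mu_t)>0$ (resp.\ $\rmE(\mu_t)<\infty$), hence $\mu_t=\rho_t\,\meas\in\Dom(\rmE)$ and, with $E_t:=\{\rho_t>0\}$, $\meas[E_t]\in(0,\infty)$, for every $t\in(0,1)$. Now I would compare the two sides of the desired estimate. On the one hand, since $0\le\rho_i\le R$ $\meas$-a.e.\ and $r\mapsto f(r)/r$ is increasing (with $f(0)=0$ covering $\rho_i=0$), we get $f(\rho_i)\le\tfrac{f(R)}{R}\,\rho_i$ pointwise, hence $\rmE(\mu_i)=\int_\mms f(\rho_i)\d\meas\le f(R)/R$ for $i\in\{0,1\}$. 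On the other hand, Jensen's inequality applied to the convex $f$ and the probability measure $\meas[E_t]^{-1}\,(\meas\mres E_t)$, together with $\int_{E_t}\rho_t\d\meas=1$, yields
\[
\rmE(\mu_t)\ =\ \int_{E_t}f(\rho_t)\d\meas\ \geq\ \meas[E_t]\,f\!\big(1/\meas[E_t]\big).
\]

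Combining the three inequalities gives $\meas[E_t]\,f(1/\meas[E_t])\le f(R)/R$, that is $f(1/\meas[E_t])\big/(1/\meas[E_t])\le f(R)/R$, and since $r\mapsto f(r)/r$ is increasing this forces $1/\meas[E_t]\le R$, i.e.\ $\meas[\{\rho_t>0\}]\ge R^{-1}$, as claimed. I do not expect a serious obstacle here: the architecture is identical to \autoref{Le:Bounded midpoints}, with the monotonicity of $f(r)/r$ doing the work of the elementary lower bound on $\sigma_{K,N}^{(t)}$ and of the exponential identity for $\scrU_N$. The only points requiring a little care are running the convexity step uniformly for $N\in(0,\infty]$, checking $\mu_0,\mu_1\in\Dom(\rmE)$ and $\mu_t\in\Dom(\rmE)$ via compactness of $J(\mu_0,\mu_1)$ and strict positivity of $\scrU_N(\mu_t)$, and making sure the final monotonicity step uses the strictness implicit in the hypothesis "$r\mapsto f(r)/r$ increasing" rather than merely the automatic non-decreasingness of that quotient for convex $f$ with $f(0)=0$.
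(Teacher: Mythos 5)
Your proof is correct and takes essentially the same route as the paper: both invoke the defining $(0,N,p)$-convexity of $\rmE$ (reducing, since $K=0$, to a simple entropy-type convexity bound), bound $\rmE(\mu_i)\le f(R)/R$ via the monotonicity of $r\mapsto f(r)/r$, bound $\rmE(\mu_t)\ge\meas[E_t]\,f(1/\meas[E_t])$ via Jensen, and conclude by that same monotonicity. The only cosmetic difference is that you pass from $\scrU_N(\mu_t)\ge(1-t)\scrU_N(\mu_0)+t\scrU_N(\mu_1)$ to $\rmE(\mu_t)\le\max\{\rmE(\mu_0),\rmE(\mu_1)\}$, whereas the paper retains the sharper $\rmE(\mu_t)\le(1-t)\rmE(\mu_0)+t\rmE(\mu_1)$ — both suffice, since each $\rmE(\mu_i)\le f(R)/R$ — and you are right to flag that the final step genuinely requires strict increasingness of $r\mapsto f(r)/r$, which is how the paper reads the hypothesis.
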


\begin{proof} By our assumptions on $\rho_0$, $\rho_1$, and  $f$ it is clear that $\mu_0,\mu_1\in\Dom(\rmE)$.

Let $\smash{\pi\in\Pi_\ll(\mu_0,\mu_1)}$ be a timelike $p$-dualizing coupling and $(\mu_t)_{t\in [0,1]}$ be a timelike proper-time parametrized $\smash{\ell_p}$-geodesic with respect to which $\rmE$ satisfies its defining convexity inequality. As in the proof of \autoref{Le:Bounded midpoints}, we infer that $\mu_t\in\Dom(\rmE)$ for every $t\in (0,1)$, and that 
\begin{align*}
E := \{\rho_t >0\},
\end{align*}
subject to the decomposition $\mu_t=\rho_t\,\meas$, obeys $\meas[E]\in (0,\infty)$. Setting
\begin{align*}
R := \max\!\big\lbrace \Vert\rho_0\Vert_{\Ell^\infty(\mms,\meas)},\Vert\rho_1\Vert_{\Ell^\infty(\mms,\meas)}\big\rbrace
\end{align*}
then yields, on the one hand,
\begin{align*}
\rmE(\mu_t) \leq  (1-t)\int_{\{\rho_0>0\}} \frac{f(\rho_0)}{\rho_0}\,\rho_0\d\meas + t\int_{\{\rho_1 >0\}} \frac{f(\rho_1)}{\rho_1}\,\rho_1\d\meas \leq \frac{f(R)}{R}.
\end{align*}
On the other hand, $\rmE(\mu_t) \geq \meas[E]\,f(\meas[E]^{-1})$ by Jensen's inequality. Employing the strict monotonicity of $\smash{r\mapsto f(r)/r}$ on $(0,\infty)$ yields the claim.
\end{proof}

\subsection{Timelike measure-contraction property}\label{Sub:TMCP cond} Lastly, following  \cite{cavalletti2017} for metric measure spaces,  we establish a version of \autoref{Th:Linfty estimates} for the subsequent \emph{timelike measure contraction property} $\smash{\TMCP_p^e(K,\infty)}$  \cite[Def.~3.7]{cavalletti2020} as follows.

\begin{definition}\label{Def:TMCP} A measured Lorentzian pre-length space $(\mms,\met,\meas,\ll,\leq,\tau)$ satisfies $\smash{\TMCP_p^e(K,N)}$ for  $p\in (0,1)$, $K\in \R$ and $N\in (0,\infty)$ if for every $\mu_0\in\scrP_\comp(\mms)\cap\Dom(\Ent_\meas)$ and every $x_1\in I^+(\mu_0)$ there exists a timelike proper-time parametrized $\smash{\ell_p}$-geodesic $(\mu_t)_{t\in [0,1]}$ from $\mu_0$ to $\mu_1 := \delta_{x_1}$ such that for every $t\in[0,1)$,
\begin{align*}
\scrU_N(\mu_t) \geq \sigma_{K,N}^{(1-t)}(T_{\mu_0\otimes\mu_1})\,\scrU_N(\mu_0).
\end{align*}
\end{definition}

\begin{lemma}\label{Le:Hurr} Let $(\mms,\met,\meas,\ll,\leq,\tau)$ satisfy $\smash{\TMCP_p^e(K,N)}$ for $p\in (0,1)$, $K\in\R$, and $N\in (0,\infty)$. Suppose that $\mu_0=\rho_0\,\meas\in\scrP_\comp^\ac(\mms,\meas)$ with $\rho_0\in\Ell^\infty(\mms,\meas)$. Lastly, let $x_1 \in I^+(\mu_0)$. Then there exists some timelike proper-time parametrized $\smash{\ell_p}$-geodesic $(\mu_t)_{t\in[0,1]}$ from $\mu_0$ to $\smash{\mu_1:=\delta_{x_1}}$ such that for every $t\in (0,1)$, we have $\mu_t = \rho_t\,\meas\in\Dom(\Ent_\meas)$ and 
\begin{align*}
\meas\big[\{\rho_t > 0\}\big]\geq (1-t)^N\,\rme^{-tD\sqrt{K^-N}}\,\big\Vert\rho_0\big\Vert_{\Ell^\infty(\mms,\meas)}^{-1},
\end{align*}
where $\smash{D\geq \sup\tau(\supp\mu_0\times\{x_1\})}$.\end{lemma}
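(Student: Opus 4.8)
The plan is to mimic the proof of \autoref{Le:Bounded midpoints}, replacing the finite-dimensional entropic inequality by the single-slice estimate coming from the $\TMCP$ inequality, and keeping careful track of the fact that the terminal measure is now a Dirac mass $\mu_1=\delta_{x_1}$ rather than an $\Ell^\infty$-density. First I would record that $\mu_0\in\Dom(\Ent_\meas)$ by the hypothesis $\rho_0\in\Ell^\infty(\mms,\meas)$ together with $\mu_0\in\scrP_\comp(\mms)$ and Jensen's inequality, and that $\sup\tau(\supp\mu_0\times\{x_1\})<\infty$ by $\scrK$-global hyperbolicity, so $D$ is finite; as in \autoref{Le:Bounded midpoints}, since $\smash{\TMCP_p^e(K,N)}$ passes from $K$ to $-K^-$ I may assume $K\leq 0$.

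Next I would invoke the $\TMCP$ condition directly: since $x_1\in I^+(\mu_0)$, there is an $\smash{\ell_p}$-geodesic $(\mu_t)_{t\in [0,1]}$ from $\mu_0$ to $\mu_1=\delta_{x_1}$ with, for every $t\in[0,1)$,
\begin{align*}
\scrU_N(\mu_t) \geq \sigma_{K,N}^{(1-t)}(T_{\mu_0\otimes\mu_1})\,\scrU_N(\mu_0).
\end{align*}
Using \autoref{Le:Plans}, $(\mu_t)_{t\in[0,1]}$ is represented by some plan $\bdpi\in\smash{\OptGeo_{\ell_p}(\mu_0,\mu_1)}$; since $\supp\mu_t=\{\gamma_t:\gamma\in\supp\bdpi\}\subset J(\mu_0,\{x_1\})$ is compact, the $\TMCP$ inequality forces $\scrU_N(\mu_t)>0$, hence $\mu_t\in\Dom(\Ent_\meas)$ and $\mu_t\ll\meas$ for every $t\in(0,1)$. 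Write $\mu_t=\rho_t\,\meas$ and set $E:=\{\rho_t>0\}$, which is contained in the compact set $J(\mu_0,\{x_1\})$, so $\meas[E]\in(0,\infty)$. Jensen's inequality as in \autoref{Le:Maximizer} gives $\scrU_N(\mu_t)\leq\meas[E]^{1/N}$.

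For the lower bound I would estimate $\scrU_N(\mu_0)=\rme^{-\Ent_\meas(\mu_0)/N}\geq\Vert\rho_0\Vert_{\Ell^\infty(\mms,\meas)}^{-1/N}$, since $\Ent_\meas(\mu_0)=\int\rho_0\log\rho_0\d\meas\leq\log\Vert\rho_0\Vert_{\Ell^\infty(\mms,\meas)}$. For the distortion coefficient I would use, exactly as in \autoref{Le:Bounded midpoints} via \cite[Rem.~2.3]{cavalletti2017}, the bound $\sigma_{K,N}^{(1-t)}(\vartheta)\geq (1-t)\,\rme^{-t\vartheta\sqrt{-K/N}}$ valid for $K\leq 0$ and every $\vartheta\geq 0$, together with $T_{\mu_0\otimes\mu_1}=\Vert\tau\Vert_{\Ell^2(\mms^2,\mu_0\otimes\mu_1)}\leq D$ (here $\mu_0\otimes\mu_1=\mu_0\otimes\delta_{x_1}$ and $\tau(x,x_1)\leq D$ for $\mu_0$-a.e.~$x$ by definition of $D$). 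Combining,
\begin{align*}
\meas[E]^{1/N}\geq\scrU_N(\mu_t)\geq\sigma_{K,N}^{(1-t)}(T_{\mu_0\otimes\mu_1})\,\scrU_N(\mu_0)\geq (1-t)\,\rme^{-tD\sqrt{-K/N}}\,\Vert\rho_0\Vert_{\Ell^\infty(\mms,\meas)}^{-1/N},
\end{align*}
and raising to the $N$-th power yields the asserted inequality $\meas[\{\rho_t>0\}]\geq (1-t)^N\,\rme^{-tD\sqrt{K^-N}}\,\Vert\rho_0\Vert_{\Ell^\infty(\mms,\meas)}^{-1}$. The only mild subtlety — and the closest thing to an obstacle — is checking that the $\TMCP$ inequality is enough to conclude $\mu_t\ll\meas$ for $t\in(0,1)$ despite the right-hand side degenerating as $t\to 1$ (it does not degenerate for fixed $t\in(0,1)$, so compactness of the support plus strict positivity of $\scrU_N(\mu_0)$ suffice), and being careful that the factor $(1-t)^N$ — not $\frac1{(1-t)^N}$ — is what appears, reflecting that mass is being contracted \emph{toward} the Dirac point; the later density blow-up as $t\to 1$ in \autoref{Th:TMCP} comes from inverting this bound, not from this lemma itself.
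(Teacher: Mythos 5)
Your argument matches the paper's proof exactly: pass to $K\leq 0$, apply the $\TMCP_p^e(K,N)$ inequality along the $\ell_p$-geodesic to $\delta_{x_1}$, sandwich $\scrU_N(\mu_t)$ between $\meas[\{\rho_t>0\}]^{1/N}$ (Jensen, compact support) and $\sigma_{K,N}^{(1-t)}(T_{\mu_0\otimes\mu_1})\,\scrU_N(\mu_0)$, then conclude via $\sigma_{K,N}^{(1-t)}(\vartheta)\geq(1-t)\,\rme^{-t\vartheta\sqrt{-K/N}}$ together with $\scrU_N(\mu_0)\geq\Vert\rho_0\Vert_{\Ell^\infty(\mms,\meas)}^{-1/N}$ and $T_{\mu_0\otimes\mu_1}\leq D$. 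Your explicit remarks about absolute continuity of $\mu_t$ for $t\in(0,1)$ and the non-inversion of the $(1-t)^N$ factor are correct and are exactly what the paper compresses into its reference to the argument of \autoref{Le:Bounded midpoints}.
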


\begin{proof} We may and will confine ourselves to the case $K\leq 0$. Given a timelike proper-time parametrized $\smash{\ell_p}$-geodesic $(\mu_t)_{t\in[0,1]}$ from $\mu_0$ to $\mu_1$ obeying the  inequality defining $\smash{\TMCP_p^e(K,N)}$ and arguing similarly as in the proof of \autoref{Le:Bounded midpoints},
\begin{align*}
\meas\big[\{\rho_t >0\}\big]^{1/N} &\geq \scrU_N(\mu_t) \geq \sigma_{K,N}^{(1-t)}(T_{\mu_0\otimes\mu_1})\,\scrU_N(\mu_0)\\ &\geq (1-t)\,\rme^{-tD\sqrt{-K/N}}\,\big\Vert\rho_0\big\Vert_{\Ell^\infty(\mms,\meas)}^{-1/N}.
\end{align*}
This  terminates the proof.
\end{proof}

\begin{theorem}\label{Th:TMCP} Assume $\smash{\TMCP_p^e(K,N)}$ for some $p\in (0,1)$, $K\in\R$, and $N\in (0,\infty)$. Let $\mu_0=\rho_0\,\meas\in\scrP_\comp(\mms)$ with $\rho_0\in\Ell^\infty(\mms,\meas)$.  Lastly, let $x_1\in I^+(\mu_0)$. Then there exists a timelike proper-time parametrized $\smash{\ell_p}$-geodesic $(\mu_t)_{t\in[0,1]}$ from $\mu_0$ to $\smash{\mu_1 := \delta_{x_1}}$ satisfying the following two properties for every $t\in [0,1)$.
\begin{enumerate}[label=\textnormal{(\roman*)}]
\item We have $\mu_t=\rho_t\,\meas\in\Dom(\Ent_\meas)$ with
\begin{align}\label{Eq:Sem}
\scrU_N(\mu_t) \geq \sigma_{K,N}^{(1-t)}(T_{\mu_0\otimes \mu_1})\,\scrU_N(\mu_0). 
\end{align}
\item Setting $D:= \sup\tau(\supp\mu_0\times\{x_1\})$, we have
\begin{align*}
\Vert\rho_t\Vert_{\Ell^\infty(\mms,\meas)} \leq \frac{1}{(1-t)^N}\,\rme^{Dt\sqrt{K^-N}}\,\Vert\rho_0\Vert_{\Ell^\infty(\mms,\meas)}.
\end{align*}
\end{enumerate}
\end{theorem}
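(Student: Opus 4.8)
The plan is to transcribe the argument of \autoref{Ch:Good} to the one-sided situation $\mu_1:=\delta_{x_1}$. Write $T:=T_{\mu_0\otimes\mu_1}=\Vert\tau(\cdot,x_1)\Vert_{\Ell^2(\mms,\mu_0)}$, which is finite and bounded by $D$ thanks to $\scrK$-global hyperbolicity. Two reductions come first: since $\rho_0\in\Ell^\infty(\mms,\meas)$ and $\supp\mu_0$ is compact, Jensen's inequality gives $\mu_0\in\Dom(\Ent_\meas)$ with $\scrU_N(\mu_0)>0$; and by the consistency and scaling behaviour of $\smash{\TMCP_p^e}$ it suffices to treat $K\le 0$, in which case $\sqrt{K^-N}=\sqrt{-KN}$. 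Moreover, for any $\smash{\ell_p}$-geodesic $(\mu_t)_{t\in[0,1]}$ from $\mu_0$ to $\delta_{x_1}$ every slice is supported in the compact set $J(\mu_0,x_1)$, so $\scrU_N(\mu_t)\le\meas[J(\mu_0,x_1)]^{1/N}$, and whenever $\scrU_N(\mu_t)>0$ — in particular whenever \eqref{Eq:Sem} holds with $t\in[0,1)$ — we obtain $\mu_t\ll\meas$ and $\Ent_\meas(\mu_t)\in(-\infty,\infty)$, i.e.\ $\mu_t\in\Dom(\Ent_\meas)$. So the domain part of (i) is automatic from the semiconvexity inequality.

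For the construction I would run the dyadic scheme of \autoref{Sub:Construction}, but now with the time-$t$ functionals $\scrV_N^t(\bdpi):=\scrU_N((\eval_t)_\push\bdpi)$ and $\scrE_c^t(\bdpi):=\scrF_c((\eval_t)_\push\bdpi)$ in place of their $t=1/2$ versions, and with the time-dependent threshold
\[
\thr_t:=\frac{1}{(1-t)^N}\,\rme^{Dt\sqrt{K^-N}}\,\Vert\rho_0\Vert_{\Ell^\infty(\mms,\meas)},
\]
which is precisely the quantity produced by the mass-spread \autoref{Le:Hurr} (playing the role of \autoref{Le:Bounded midpoints}: it gives an $\smash{\ell_p}$-geodesic with $\meas[\{\rho_t>0\}]\ge\thr_t^{-1}$). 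Existence of maximizers of $\scrV_N^t$ is \autoref{Le:Maximizer} (already stated for general $t$), and of minimizers of $\scrE_c^t$ is \autoref{Cor:Existence minimizer}. The single feature keeping the Chapter~3 machinery applicable is that, in the mass-shuffling of \autoref{Pr:Min 0} and \autoref{Pr:Max = 0}, the auxiliary endpoints $\kappa_0=\nu[A_\delta]^{-1}(\eval_0)_\push[\bdpi\mres G_\delta]$ and $\kappa_1=\nu[A_\delta]^{-1}(\eval_1)_\push[\bdpi\mres G_\delta]$ satisfy $\kappa_1=\delta_{x_1}$ — because every $\gamma\in\supp\bdpi$ ends at $x_1$ — while $\kappa_0\ll\meas$ has bounded density and support inside $\supp\mu_0$; hence $(\kappa_0,\kappa_1)$ is again a ``measure-to-Dirac'' pair to which \autoref{Le:Hurr} applies with the same $D$. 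Consequently the analogues of \autoref{Pr:Min 0}, \autoref{Cor:Thr} and \autoref{Pr:Max = 0} go through with $\scrU_N,\scrF_c,\scrE_c,\thr$ replaced by $\scrV_N^t,\scrF_c,\scrE_c^t,\thr_t$; in particular $\min\scrE_{\thr_t}^t(\OptGeo_{\ell_p}(\mu_0,\delta_{x_1}))=0$ and every maximizer $\bdpi$ of $\scrV_N^t$ has $\scrE_{\thr_t}^t(\bdpi)=0$, which is the density bound of~(ii) at time $t$.

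Property~(i) I would thread through the construction by comparison. A maximizer $\bdpi$ of $\scrV_N^t$ satisfies $\scrU_N((\eval_t)_\push\bdpi)\ge\scrU_N(\mu_t')$ for the time-$t$ slice of any $\smash{\ell_p}$-geodesic from $\mu_0$ to $\delta_{x_1}$, in particular one witnessing \eqref{Eq:Sem}; since $\scrU_N(\delta_{x_1})=0$, this yields \eqref{Eq:Sem} at the chosen dyadic parameter. To pass from a sub-pair $(\mu_s,\delta_{x_1})$ back to $(\mu_0,\delta_{x_1})$ one uses the scaling $T_{\mu_s\otimes\delta_{x_1}}=(1-s)T$, valid because $\bdpi$ is concentrated on $\TGeo^*(\mms)$, together with the identity $\sigma_{K,N}^{(1-u)}(T)=\sigma_{K,N}^{((1-u)/(1-s))}((1-s)T)\,\sigma_{K,N}^{(1-s)}(T)$, a special case of \autoref{Le:Distortion}; weak upper semicontinuity of $\scrU_N$ and weak continuity of $t\mapsto\mu_t$ then extend \eqref{Eq:Sem} to all $t\in[0,1)$. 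Finally one completes the dyadic family to an $\smash{\ell_p}$-geodesic as in \autoref{Cor:Ex geodesic}, and weak lower semicontinuity of $\scrF_{\thr_t}$ on $\scrP(J(\mu_0,x_1))$ passes the density bound to all $t\in[0,1)$, exactly as in the proof of \autoref{Cor:Cor}; the factor $(1-t)^{-N}$, hence the blow-up as $t\uparrow 1$, is unavoidable because $\mu_t\to\delta_{x_1}$.

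The main obstacle, I expect, is organising the bisection so that a \emph{single} $\smash{\ell_p}$-geodesic carries both (i) and (ii) at \emph{every} $t\in[0,1)$ simultaneously. Unlike in the $\TCD$ case, where the convexity inequalities are two-sided and propagate through bisection via \autoref{Le:Distortion}, here \eqref{Eq:Sem} is genuinely one-sided, so at a ``measure-to-measure'' sub-interval of the bisection one cannot re-invoke \autoref{Def:TMCP}; the maximizing re-selections above must therefore be carried out only among $\smash{\ell_p}$-optimal geodesic plans compatible with the slices already fixed, and one must check that the perturbations of \autoref{Pr:Min 0} and \autoref{Pr:Max = 0} can be arranged so as to leave those slices unchanged — which is exactly where $\kappa_1=\delta_{x_1}$ and the $(1-s)$-scaling are used repeatedly, and where \autoref{Le:Plans}\ref{La:Interpolation geo} and the gluing argument of \cite[Thm.~2.10, Thm.~2.11]{ambrosiogigli} enter. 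Once this bookkeeping is set up, the remainder is a routine adaptation of \autoref{Ch:Good} and \autoref{Le:Hurr}.
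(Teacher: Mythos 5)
Your proposal takes a genuinely different route from the paper's proof, and the route you sketch has an unresolved gap at exactly the point you flag as the ``main obstacle.''

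You propose to run the dyadic scheme of \autoref{Sub:Construction}, replacing the functionals $\scrV_N$, $\scrE_c$ by their time-$t$ versions and the threshold by $\thr_t$. But the bisection mechanism is not available here: after the first step, the sub-pair $(\mu_0,\mu_{1/2})$ consists of two absolutely continuous measures, and $\smash{\TMCP_p^e(K,N)}$ cannot be invoked for it. You note this yourself and suggest ``carrying out the maximizing re-selections only among $\smash{\ell_p}$-optimal geodesic plans compatible with the slices already fixed,'' and ``checking that the perturbations of \autoref{Pr:Min 0} and \autoref{Pr:Max = 0} can be arranged so as to leave those slices unchanged.'' That bookkeeping is not carried out, and it is far from clear it can be: the mass-shuffling construction of \autoref{Pr:Min 0} replaces the plan $\bdpi$ on a nontrivial set of curves by a fresh plan $\bdbeta$ whose intermediate slices bear no relation to those of $\bdpi$, so constraining all previously fixed slices would almost certainly kill the variational argument.

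The paper's proof (Step 1) sidesteps this entirely by following Cavalletti--Mondino \cite{cavalletti2017}: it replaces bisection by a one-sided geometric iteration. For fixed $n$ one sets $s_n^k := (1-2^{-n})^k$, starts from $\nu_1 := \mu_0$, and inductively lets $\nu_{s_n^k}$ be the $2^{-n}$-slice of an $\smash{\ell_p}$-optimal geodesic plan from $\nu_{s_n^{k-1}}$ to $\delta_{x_1}$ maximizing $\scrU_N$. Every sub-interval so produced has the Dirac mass $\delta_{x_1}$ as its right endpoint, so $\TMCP$ and \autoref{Le:Hurr} apply at every step and the density bound propagates by a geometric product. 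The semiconvexity inequality (i) is then threaded through exactly as you anticipate, using the $(1-s)$-scaling of $T$ and the special case of \autoref{Le:Distortion} you wrote down, together with the automatic \emph{uniqueness} of the $\smash{\ell_p}$-optimal coupling against a Dirac mass (it must be $\mu_{t_n^k}\otimes\delta_{x_1}$). The family is then completed by diagonalizing over $n$ and arguing as in \autoref{Cor:Ex geodesic}. Your observation that $\kappa_1 = \delta_{x_1}$ for every restriction of a plan in $\OptGeo_{\ell_p}(\mu_0,\delta_{x_1})$, and your deduction that (i) implies $\mu_t\in\Dom(\Ent_\meas)$ for $t<1$, are both correct and reappear in the paper's argument; what is missing is the replacement of bisection by the geometric forward iteration, which is the structural idea that makes the whole thing go through.
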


\begin{proof} The bisection argument from \autoref{Ch:Good} does not work  under  $\smash{\TMCP_p^e(K,N)}$ since $\mu_1\not\ll\meas$, while every intermediate measure should be absolutely continuous with respect to $\meas$. We rather follow the proof of  \cite[Thm.~3.1]{cavalletti2017}. 

Let $\smash{(\mms, \met, \ll^{\leftarrow}, \leq^{\leftarrow}, \tau^{\leftarrow})}$ denote the causally reversed structure of $(\mms,\met,\ll,\leq,\tau)$ \cite[Def.~1.2]{cavalletti2020}, i.e.~$\smash{x\ll^\leftarrow y}$ if and only if $y\ll x$, $\smash{x\leq^\leftarrow y}$ if and only if $y\leq x$, and $\smash{\tau^\leftarrow(x,y) := \tau(y,x)}$, $x,y\in\mms$. Let $\smash{\ell_p^\leftarrow}$ be the cost function associated to $\smash{\tau^\leftarrow}$.

\textbf{Step 1.} \textit{Construction of a ``backward'' geodesic.} Given any $n,k\in\N$, we set $\smash{s_n^k := (1-2^{-n})^k}$. Fix $n\in\N$, and assume that $\smash{\bdbeta_n^k\in \OptTGeo_{\ell_p^\leftarrow}^{\tau^\leftarrow}(\mu_1,\mu_0)}$ has been defined such that for every $i\in\{1,\dots,k\}$, $\smash{(\eval_{s_n^i})_\push\bdbeta_n^k \in\scrP_\comp^\ac(\mms,\meas)}$ and
\begin{align*}
\sup\tau^\leftarrow(\{x_1\}\times \supp\,(\eval_{s_n^i})_\push\bdbeta_n^k)\leq 2^{-n}\,s_n^{i-1}\,D.
\end{align*}
Let the functional $\smash{\scrV_N^{2^{-n}}}$ be defined as in \eqref{Eq:Functional E}. By \autoref{Le:Maximizer}, the latter admits a maximizer $\smash{\bdpi_n^{k+1}\in\OptTGeo_{\ell_p}^\tau((\eval_{s_n^k})_\push\bdbeta_n^k,\mu_1)}$. Let $\bdsigma_{k+1}^n\in\smash{\OptTGeo_{\ell_p^\leftarrow}^{\tau^\leftarrow}(\mu_1,(\eval_{s_n^k})_\push\bdbeta_n^k)}$ be the timelike $\smash{\ell_p}$-optimal geodesic plan obtained by ``time-reversal'' of $\smash{\bdpi_n^{k+1}}$. By a gluing argument, we construct a measure $\smash{\bdbeta_n^{k+1}\in\OptTGeo_{\ell_p^\leftarrow}^{\tau^\leftarrow}(\mu_1,\mu_0)}$ with
\begin{align*}
(\Restr_0^{s_n^k})_\push\bdbeta_n^{k+1} &= \bdpi_n^{k+1},\\
(\Restr_{s_n^k}^1)_\push\bdbeta_n^{k+1} &= (\Restr_{s_n^k}^1)_\push\bdbeta_n^k.
\end{align*}
Using the induction hypothesis, the bound
\begin{align*}
\sup \tau^\leftarrow(\{x_1\}\times\supp\,(\eval_{s_n^{k+1}})_\push\bdbeta_n^{k+1})\leq 2^{-n}\,s_n^k\,D
\end{align*}
obtained by construction, using \autoref{Le:Hurr} and  following the lines of \autoref{Pr:Min 0}, \autoref{Cor:Thr} and \autoref{Pr:Max = 0} for the threshold
\begin{align*}
c_n^{k+1} := \frac{1}{(1-2^{-n})^N}\,\rme^{2^{-n}s_n^kD\sqrt{K^-N}}\,\Vert\rho_0\Vert_{\Ell^\infty(\mms,\meas)}
\end{align*} 
for every $n,k\in\N$ we obtain $\smash{(\eval_{s_n^{k+1}})_\push\bdbeta_n^{k+1} = \rho_{s_n^{k+1}}\,\meas\in\Dom(\Ent_\meas)\cap\scrP_\comp(\mms)}$ with
\begin{align*}
\big\Vert\rho_{s_n^k}\big\Vert_{\Ell^\infty(\mms,\meas)} \leq \frac{1}{(1-2^{-n})^N}\,\rme^{2^{-n}s_n^{k-1}D\sqrt{K^-N}}\,\big\Vert \rho_{s_n^{k-1}}\big\Vert_{\Ell^\infty(\mms,\meas)},
\end{align*}
Inductively, by geometric summation this yields
\begin{align}\label{Eq:ZUB}
\big\Vert\rho_{s_n^k}\big\Vert_{\Ell^\infty(\mms,\meas)} \leq \frac{1}{(s_n^k)^N}\,\rme^{(1-s_n^k)D\sqrt{K^-N}}\,\Vert\rho_0\Vert_{\Ell^\infty(\mms,\meas)}.
\end{align}

\textbf{Step 2.} \textit{Construction of the  geodesic and verification of its properties.} We iteratively construct a family 
\begin{align*}
\{\bdbeta_n^k : n \in\N,\, k\in\N_0\}\subset\OptTGeo_{\ell_p^\leftarrow}^{\tau^\leftarrow}(\mu_1,\mu_0)
\end{align*}
according to Step 1. Let $(\bdbeta^n)_{n\in\N}$ be an enumeration of the elements of this class. By \autoref{Le:Plans}, the latter admits a weak limit $\smash{\bdbeta\in\OptTGeo_{\ell_p^\leftarrow}^{\tau^\leftarrow}(\mu_1,\mu_0)}$ along a nonrelabeled subsequence. Let $\smash{\bdalpha\in\OptTGeo_{\ell_p}^\tau(\mu_0,\mu_1)}$ be the ``time-reversal'' of $\bdbeta$, which induces a timelike proper-time parametrized $\smash{\ell_p}$-geodesic $(\mu_t)_{t\in[0,1]}$ from $\mu_0$ to $\mu_1$. By weak lower semicontinuity of $\scrF_c$ in $\scrP(J(\mu_0,\mu_1))$ for appropriate values $c>0$, we get $\mu_t = \rho_t\,\meas\in\Dom(\Ent_\meas)\cap\scrP_\comp(\mms)$, and as in the last step of the proof of \autoref{Th:Ent linfty} we obtain the weak stability of \eqref{Eq:ZUB}, whence $\Vert\rho_t\Vert_{\Ell^\infty(\mms,\meas)}$ obeys the desired estimate for every $t\in[0,1)$. The proof is terminated.
\end{proof}

\begin{remark} If in addition $(\mms,\met,\meas,\ll,\leq,\tau)$ is timelike non\-branching in the above  \autoref{Th:TMCP}, as for \autoref{Cor:Timelike branching} $\mu_0$ and $\mu_1$ are connected by a unique timelike proper-time parametrized $\smash{\ell_p}$-geodesic, which thus automatically satisfies the conclusions of \autoref{Th:TMCP}.
\end{remark}


\begin{thebibliography}{000}
		\bibitem{ambrosiogigli}{\textsc{L. Ambrosio, N. Gigli.} \textit{A user's guide to optimal transport.} Modelling and optimisation of flows on networks, 1--155, Lecture Notes in Math., 2062, Fond. CIME/CIME Found. Subser., Springer, Heidelberg, 2013.}
		\bibitem{ambrosiomondino2015}{\textsc{L. Ambrosio, N. Gigli, A. Mondino, T. Rajala.} \textit{Riemannian Ricci curvature lower bounds in metric measure spaces with $\sigma$-finite measure.} Trans. Amer. Math. Soc. \textbf{367} (2015), no. 7, 4661--4701.}
		\bibitem{ambrosio2014a}{\textsc{L. Ambrosio, N. Gigli, G. Savaré.} \textit{Calculus and heat flow in metric measure spaces and applications to spaces with Ricci bounds from below.} Invent. Math. \textbf{195} (2014), no. 2, 289--391.}
		\bibitem{ambrosio2014b}{\bysame. \textit{Metric measure spaces with Riemannian Ricci curvature bounded from below.} 
Duke Math. J. \textbf{163} (2014), no. 7, 1405--1490.}
	\bibitem{ambrosio2015}{\bysame. \textit{Bakry-Émery curvature-dimension condition and Riemannian Ricci curvature bounds.} Ann. Probab. \textbf{43} (2015), no. 1, 339--404.}
		\bibitem{amb}{\textsc{L. Ambrosio, A. Mondino, G. Savaré.} \textit{Nonlinear diffusion equations and curvature conditions in metric measure spaces.} Mem. Amer. Math. Soc. \textbf{262} (2019), no. 1270, v+121 pp.}
		\bibitem{bacher2010}{\textsc{K. Bacher, K.-T. Sturm.} 
\textit{Localization and tensorization properties of the curvature-dimension condition for metric measure spaces.} 
J. Funct. Anal. \textbf{259} (2010), no. 1, 28--56.}
\bibitem{braun+}{\textsc{M. Braun.} \textit{Rényi's entropy on Lorentzian spaces. Timelike curvature-dimension conditions.} Preprint, \texttt{arXiv:2206.13005}, 2022.}
\bibitem{brue2020}{\textsc{E. Brué, D. Semola.} \textit{Constancy of the dimension for $\mathrm{RCD}(K,N)$ spaces via regularity of Lagrangian flows.} Comm. Pure Appl. Math. \textbf{73} (2020), no. 6, 1141--1204.}
\bibitem{burtscher2021}{\textsc{A. Burtscher, L. García-Heveling.} \textit{Time functions on Lorentzian length spaces.} Preprint, \texttt{arXiv:2108.02693}, 2021.}
		\bibitem{cavalletti2017}{\textsc{F. Cavalletti, A. Mondino.} \textit{Optimal maps in essentially non-branching spaces.} Commun. Contemp. Math. \textbf{19} (2017), no. 6, 1750007, 27 pp.}
		\bibitem{cavalletti2020}{\bysame. \textit{Optimal transport in Lorentzian synthetic spaces, synthetic timelike Ricci curvature lower bounds and applications.} Preprint, \texttt{arXiv:2004.08934}, 2020.}
		\bibitem{cavalletti2021}{\textsc{F. Cavalletti, E. Milman.} \textit{The globalization theorem for the curvature-dimension condition.} Invent. Math. \textbf{226} (2021), no. 1, 1--137.}
		\bibitem{cordero2001}{\textsc{D. Cordero-Erausquin, R. J. McCann, M. Schmuckenschläger.} \textit{A Riemannian interpolation inequality à la Borell, Brascamp and Lieb.} Invent. Math. \textbf{146} (2001), no. 2, 219--257.}
		\bibitem{eckstein2017}{\textsc{M. Eckstein, T. Miller.} \textit{Causality for nonlocal phenomena.} Ann. Henri Poincaré \textbf{18} (2017), no. 9, 3049--3096.}
		\bibitem{erbar2015}{\textsc{M. Erbar, K. Kuwada, K.-T. Sturm.} \textit{On the equivalence of the entropic curvature-dimension condition and Bochner's inequality on metric measure spaces.} Invent. Math. \textbf{201} (2015), no. 3, 993--1071.}
		\bibitem{galaz2018}{\textsc{F. Galaz-García, M. Kell, A. Mondino, G. Sosa.} \textit{On quotients of spaces with Ricci curvature bounded below.} J. Funct. Anal. \textbf{275} (2018), no. 6, 1368--1446.}
		\bibitem{geroch1970}{\textsc{R. Geroch.} \textit{Domain of dependence.} J. Mathematical Phys. \textbf{11} (1970), 437--449.}
		\bibitem{gigli2013}{\textsc{N. Gigli.} \textit{The splitting theorem in non-smooth context.} Preprint, \texttt{arXiv:1302.5555}, 2013.}
		\bibitem{gigli2015}{\bysame. \textit{On the differential structure of metric measure spaces and applications.} Mem. Amer. Math. Soc. \textbf{236} (2015), no. 1113, vi+91 pp.}
		\bibitem{giglimondino2015}{\textsc{N. Gigli, A. Mondino, T. Rajala.} \textit{Euclidean spaces as weak tangents of infinitesimally Hilbertian metric measure spaces with Ricci curvature bounded below.} J. Reine Angew. Math. \textbf{705} (2015), 233--244.}
		\bibitem{hawking1966}{\textsc{S. W. Hawking.} \textit{The occurrence of singularities in cosmology. I.} Proc. Roy. Soc. London Ser. A \textbf{294} (1966), 511--521.}
		\bibitem{hawking1970}{\textsc{S. W. Hawking, R. Penrose.} \textit{The singularities of gravitational collapse and cosmology.} Proc. Roy. Soc. London Ser. A \textbf{314} (1970), 529--548.}
		\bibitem{hawking1973}{\textsc{S. W. Hawking, G. F. R. Ellis.} \textit{The large scale structure of space-time.} Cambridge Monographs on Mathematical Physics, No. 1. Cambridge University Press, London-New York, 1973. xi+391 pp. }
		\bibitem{kell2020}{\textsc{M. Kell, S. Suhr.} \textit{On the existence of dual solutions for Lorentzian cost functions.} Ann. Inst. H. Poincaré C Anal. Non Linéaire \textbf{37} (2020), no. 2, 343--372.}
		\bibitem{kunzinger2018}{\textsc{M. Kunzinger, C. Sämann.} \textit{Lorentzian length spaces.} Ann. Global Anal. Geom. \textbf{54} (2018), no. 3, 399--447.}
		\bibitem{lott2009}{\textsc{J. Lott, C. Villani.} \textit{Ricci curvature for metric-measure spaces via optimal transport.} Ann. of Math. (2) \textbf{169} (2009), no. 3, 903--991.}
		\bibitem{mccann2001}{\textsc{R. J. McCann.} \textit{Polar factorization of maps on Riemannian manifolds.} 
Geom. Funct. Anal. \textbf{11} (2001), no. 3, 589--608.}
		\bibitem{mccann2020}{\bysame. \textit{Displacement convexity of Boltzmann's entropy characterizes the strong energy condition from general relativity.} Camb. J. Math. \textbf{8} (2020), no. 3, 609--681.}
		\bibitem{minguzzi2019}{\textsc{E. Minguzzi.} \textit{Causality theory for closed cone structures with applications.} Rev. Math. Phys. \textbf{31} (2019), no. 5, 1930001, 139 pp.}
		\bibitem{mondino2019}{\textsc{A. Mondino, A. Naber.} \textit{Structure theory of metric measure spaces with lower Ricci curvature bounds.} J. Eur. Math. Soc. (JEMS) \textbf{21} (2019), no. 6, 1809--1854.}
		\bibitem{mondinosuhr2018}{\textsc{A. Mondino, S. Suhr.}  \textit{An optimal transport formulation of the Einstein equations of general relativity.} J. Eur. Math. Soc. (2022), published online first. \texttt{DOI 10.4171/JEMS/1188}.}
		\bibitem{oneill1983}{\textsc{B. O'Neill.} \textit{Semi-Riemannian geometry. With applications to relativity.} Pure and Applied Mathematics, 103. Academic Press, Inc. [Harcourt Brace Jovanovich, Publishers], New York, 1983. xiii+468 pp.}
		\bibitem{penrose1965}{\textsc{R. Penrose.} \textit{Gravitational collapse and space-time singularities.}
Phys. Rev. Lett. \textbf{14} (1965), 57--59.}
		\bibitem{rajala2012a}{\textsc{T. Rajala.} \textit{Improved geodesics for the reduced curvature-dimension condition in branching metric spaces.} Discrete Contin. Dyn. Syst. \textbf{33} (2013), no. 7, 3043--3056.}
		\bibitem{rajala2012b}{\bysame. \textit{Interpolated measures with bounded density in metric spaces satisfying the curvature-dimension conditions of Sturm.} J. Funct. Anal. \textbf{263} (2012), no. 4, 896–924.} 
		\bibitem{vonrenesse2005}{\textsc{M.-K. von Renesse, K.-T. Sturm.} \textit{Transport inequalities, gradient estimates, entropy, and Ricci curvature.} Comm. Pure Appl. Math. \textbf{58} (2005), no. 7, 923--940.}
		\bibitem{sormani2016}{\textsc{C. Sormani, C. Vega.} \textit{Null distance on a spacetime.} Classical Quantum Gravity \textbf{33} (2016), no. 8, 085001, 29 pp.}
		\bibitem{sturm2006a}{\textsc{K.-T. Sturm.} \textit{On the geometry of metric measure spaces. I.} Acta Math. \textbf{196} (2006), no. 1, 65--131.}
		\bibitem{sturm2006b}{\bysame. \textit{On the geometry of metric measure spaces. II.} Acta Math. \textbf{196} (2006), no. 1, 133--177.}
		\bibitem{sturm2018}{\bysame. \textit{Super-Ricci flows for metric measure spaces.} J. Funct. Anal. \textbf{275} (2018), no. 12, 3504--3569.}
		\bibitem{suhr2016}{\textsc{S. Suhr.} \textit{Theory of optimal transport for Lorentzian cost functions.} Münster J. Math. \textbf{11} (2018), no. 1, 13--47.}
\bibitem{villani2009}{\textsc{C. Villani.} \textit{Optimal transport. Old and new.} Grundlehren der mathematischen Wissenschaften, 338. Springer-Verlag, Berlin, 2009. xxii+973 pp.}
	\end{thebibliography}
\end{document}